\documentclass[11pt]{article}

\usepackage{amsmath, amssymb, amsthm}
\usepackage{graphicx}
\usepackage{xspace}
\usepackage{geometry}
\usepackage{booktabs} 
\usepackage[ruled,vlined]{algorithm2e} 
\usepackage{tabularx}
\usepackage{array}
\usepackage{mathtools}
\usepackage{makecell}
\usepackage{multirow}
\usepackage{nicefrac}
\usepackage{subcaption}
\usepackage{pgfplots}
\usepackage{dsfont}
\usepackage{tikz}
\usetikzlibrary{graphs,graphs.standard,calc,arrows.meta,bending}
\pgfplotsset{compat=1.17}
\usepackage{comment}
\usepackage{pifont}
\usepackage{wrapfig}
\usepackage{thm-restate}
\usepackage{bm}
\usepackage{natbib}
\usepackage[dvipsnames]{xcolor}
\usepackage[colorinlistoftodos]{todonotes}
\usepackage{diagbox}
\usepackage{enumitem}

\SetAlFnt{\small}
\SetAlCapFnt{\small}
\SetAlCapNameFnt{\small}
\SetAlCapHSkip{0pt}
\IncMargin{-\parindent}
\LinesNumbered
\SetAlgoSkip{smallskip}

\newtheorem{claim}{Claim}
\newtheorem{proposition}{Proposition}
\newtheorem{remark}{Remark}
\newtheorem{observation}{Observation}
\newtheorem{theorem}{Theorem}
\newtheorem{lemma}[theorem]{Lemma}
\newtheorem{corollary}[theorem]{Corollary}
\newtheorem{definition}{Definition}

\newtheorem{example}{Example}

\newcommand{\alloc}{{\mathcal{B}}}
\newcommand{\augment}{\cup}
\newcommand{\IBP}{item-based potential\xspace}
\newcommand{\IBR}{item-based realized\xspace}
\newcommand{\BBP}{bundle-based potential\xspace}
\newcommand{\BBR}{bundle-based realized\xspace}
\newcommand{\squiz}{{transfer$^+$ \xspace}}
\newcommand{\algoname}{{Bilevel Yankee Swap}}
\newcommand*\circled[1]{\tikz[baseline=(char.base)]{
            \node[shape=circle,draw,inner sep=1pt] (char) {#1};}}

\NewDocumentCommand{\cc}{ O{} O{} m }{\mbox{%
    \expandafter\ifx\expandafter\relax\detokenize{#2}\relax\else{#2-}\fi%
    \textsf{#3}%
    \expandafter\ifx\expandafter\relax\detokenize{#1}\relax\else{-#1}\fi%
    }\xspace}

\newcommand{\NPh}{\cc[hard]{NP}}

\newif\iflong
\newif\ifshort

\iflong
\else
\shorttrue
\fi

\title{Delegated Fair Division}

\author{Argyrios Deligkas\textsuperscript{1}, Michail Fasoulakis\textsuperscript{1}, Stavros D. Ioannidis\textsuperscript{1},\\
Alviona Mancho\textsuperscript{2,3}, and 
Evangelos Markakis\textsuperscript{2,3,4}
 \\\\
\textsuperscript{1} Royal Holloway University of London, United Kingdom\\
\textsuperscript{2} Athens University of Economics and Business, Greece \\
\textsuperscript{3} Archimedes, Athena Research Center, Greece \\
\textsuperscript{4} Input Output Group (IOG), Greece
}
\date{ }

\definecolor{AegeanBlue}{RGB}{1, 59, 150} 
\usepackage[
  colorlinks=true,
  citecolor=AegeanBlue,
  linkcolor=teal,
  urlcolor=teal
]{hyperref}

\begin{document}
\maketitle

\pagenumbering{gobble}
\begin{abstract}
Motivated by recently introduced problems on delegated resource allocation, we study a model of fair division, where a set of indivisible goods is to be allocated to some agents, each of which belonging to some bigger central entity. Our model captures the general framework of allocating resources to organizational units, which subsequently distribute them to their affiliated members. A particularly relevant application of this framework, with immense social impact, arises in the allocation of food donations through charitable organizations. In essence, every center acts as the representative of the agents belonging to it, aligning their generally different preferences. Our goal is to distribute the goods in a way that is simultaneously fair both with respect to the centers and the agents. We distinguish two different information structures depending on whether the agents compare their bundles against every agent or only those belonging to the same center. For each one of them we provide efficient algorithms that produce allocations that satisfy envy-based fairness guarantees at both levels.
\end{abstract}
\newpage

\pagenumbering{arabic}

\section{Introduction}
Fair division of indivisible resources has experienced significant growth during the last decades, as can be easily seen by recent surveys such as \cite{DBLP:journals/ai/AmanatidisABFLMVW23}.
Supporting applications have also contributed to the further development of the field, including course allocation algorithms \cite{budish2011combinatorial}, food donation programs \cite{Mertzanidis0V24}, and many others. Under indivisible resources, the main research direction that has pushed the field forward is to study solution concepts and algorithms for relaxations of envy-freeness. This includes, among others, the standard notions of EF1~\cite{budish2011combinatorial,LiptonMMS04} and EFX~\cite{caragiannis2019unreasonable,gourves2014near} (envy-freeness up to one or up to any item respectively), which is also the focus of our work.

In the vast majority of the literature, the allocation problem is presented as a question of assigning a set of resources to a set of independent agents. This is certainly adequate to model several instantiations of fair division problems. There are however many other settings where the allocation process is not performed directly on the agents, but instead the agents are represented via delegates who act on their behalf. This induces a two-layer allocation problem, where the delegates are in the upper level and act as distribution centers for the agents in the lower level. To mention some  natural motivating applications, food donation or other social purpose programs fit well under this framework. Charity organizations like Feeding America, provide resources and meals to partner food banks, which in turn allocate them locally to individuals registered with them. Likewise, university departments may receive resources from the university itself, which are then allocated to research teams. In yet another example, the European Union distributes resources at country-level, which are then allocated to domestic organizations or ministries by each country. 

The previous discussion introduces a new dimension in fair allocation problems which comes with its own challenges. First of all, the upper-level agents, that we will refer to as {\it centers} from now on, are agents themselves with their own preferences. E.g., the food banks in the food donation programs may not consume any meals but still derive a value dependent on the set of resources they receive from charities. 
In a similar manner, the departments of a university derive a value when there is a productive utilization of the resources by its faculty members.
What makes the problem even more interesting is the potential interdependency between the levels, since the valuation of the centers could depend not just on the resources they distribute, but also on the valuations of the agents they represent. 
E.g. a university department can be happier as the utility of its faculty members increases. A further consequence of having these two levels of entities is that fairness notions now should be defined both at the level of centers but {\em also} for the agents at the lower level. And especially for the latter, the type of fairness that would be more suitable depends on the application. Namely, one could demand fairness across {\em all} agents or {\em only} among the agents within a center, since agents underneath a center may not even observe what other centers possess, due to informational and privacy constraints; the agents in a local food bank will not even know what resources are allocated to a different food bank.  

Obviously, the above give rise to several avenues for further exploration, with the eventual goal of providing satisfactory fairness guarantees at both levels. To our knowledge, apart from the notable exception of \cite{LBBM25} (discussed further in our related work), such models have not been studied in other works on fair division before.

\subsection{Contribution}
We view as our conceptual contribution the proposal of a  two-level allocation problem and the study of algorithms that guarantee envy-based fairness notions, such as appropriate adaptations of EF1 and EFX. 
The model we introduce allows for a wide range of flexibility that can capture various scenarios on how to define envy and fairness across both levels. As a starting point, in Section \ref{sec:prelims}, we distinguish four different variations that concern the definition of the centers' valuation function, which in turn affects how the centers perceive envy. These variants have to do with the way that a center evaluates the bundling of another center (i.e., on whether it can reshuffle the goods of the other center, referred to as {\it item-based}, or think of them as fixed bundles that cannot be altered, referred to as {\it bundle-based}) and also with how a center evaluates its own bundling.

Moving on to our technical contribution, in Section \ref{sec:intra}, we start with some warm-up results on  {\it intra-fairness} criteria. By this, we refer to enforcing fairness among the agents within each center, while also having fairness guarantees among the centers. 
This is an easier case, where we are able to show that at least for one of our model variants, we can guarantee existence for any fairness notion for which existence is known in the classic (one level) fair division model.

We then move to our main technical results in Sections \ref{sec:inter} and~\ref{sec:yankee} that concern {\em inter-fairness}, where agents compare their bundle against every other agent. 
We provide an impossibility result, ruling out EFX among the centers in conjunction with EF1 among the agents. To complement this, we prove that deciding whether an instance of our setting admits such an allocation is \NPh.

Therefore, our main focus in these sections is on obtaining EF1 both among the centers but also among all the agents. This is still a very challenging problem, in contrast to the classic setting where EF1 for a single level of agents is easily achieved \cite{LiptonMMS04}. Consequently, we provide positive results only for special classes of valuation functions. In particular, we are able to obtain an EF1 allocation across both levels for most of our variants when the agents agree on the ranking of the goods w.r.t. their value within each center.
 This is a well motivated family of valuations that has been considered in the past literature, and our result is based on an appropriate modification of the Round Robin algorithm. 
{Furthermore, under the stronger assumption that all agents are identical, we show that EFX can be achieved at the agent level in conjunction with EF1 at the center level.}

We then move to bivalued valuations. Despite their simplicity, this is a trickier case,  highlighting already the main difficulties with existing algorithms, 
and where our results differ across our variants. 
For \IBR valuations, we show that EF1 may not even be attainable at both levels, as opposed to the case of \IBP valuations.
For the bundle-based scenario, we view as our most technical contribution, that we make progress for one of our variants on the relaxation to binary preferences, 
establishing EF1 on both levels. This result relies on path augmentation techniques, inspired by the Yankee Swap algorithm \cite{viswanathan2025general}. Our insight here is that although the original Yankee Swap algorithm cannot work for our setting, an appropriare adaptation that also includes the centers in the transfer paths is suitable.

Finally, as an ending note, and in order to further demonstrate the difficulty of obtaining guarantees for some simple special cases of our problem, we show that such results can yield improvements for certain versions of the classic fair division setting without levels. In Section \ref{sec:conclusions}, Paragraph~\ref{open:one-to_one}, we expand further on this. 

\subsection{Related work}
The conceptually most related work to ours is that of
\cite{LBBM25}. This is a very recent independent work, completed in parallel to ours and  concerns multi-level allocation problems. The main difference is that it focuses on welfare notions, such as the Nash welfare and does not consider any envy-based fairness notions. We therefore consider our results to be orthogonal to \citep{LBBM25}.

Regarding technical similarities with related works, particularly our results in Section~\ref{sec:yankee}, we note that 
many algorithms
based on  path augmentation techniques exist in the literature~\citep{schrijver2003combinatorial}. In fair division, a notable example that inspired our work is the Yankee Swap algorithm, which we had to adapt in our model. For the classic setting of fair division without levels, the Yankee Swap algorithm was proved to establish EFX allocations for agents with binary submodular valuations among other properties~\citep{viswanathan2025general}. Interestingly, the work of \cite{LBBM25} has also made use of similar ideas.

Other relevant models in fair division involve {\it group fairness}. This is motivated by scenarios where resources are allocated to \emph{groups}, such as families or institutions, whose members jointly consume the same bundle of goods but may value the resources differently. Consequently, a line of work has evolved on the \emph{(fixed-)group} model, in which agents are partitioned into groups and the goal is to allocate resources fairly with respect to the groups \citep{DBLP:conf/ijcai/BenabbouCEZ19,manurangsi2022almost,manurangsi2025ordinal,kyropoulou2020almost,DBLP:journals/corr/abs-2502-10516,golz2025fair,manurangsi2026tight}. As an example, \cite{manurangsi2022almost} studied envy-freeness up to $c$ goods (EF$c$), which requires that no agent prefers another group’s bundle to her own group’s bundle after the removal of at most $c$ goods from the other bundle. As another variation, it is also conceivable that group membership is not fixed in advance. This motivated \cite{kyropoulou2020almost} to propose the \emph{variable-group} model, in which the partition of agents into groups is chosen alongside the allocation. Regarding the relation between the (fixed-)group model and our setting, one may think of centers as fixed groups of agents. However, the fundamental difference in our model, is that the resources are treated as private goods: an agent does not derive value from goods allocated to other agents and they do not jointly consume goods.

Another relevant line of work involves \emph{externalities}, where an agent’s value may depend on the bundles allocated to other agents~\cite{JM96}. Within fair division, there has also been a more recent line of works on allocations  of indivisible goods with externalities \citep{10.1145/3308558.3313670,mishra2022fair,aziz2023fairness,DBLP:conf/aaai/DeligkasEKS24,connor2026tight}. 
In particular, \cite{aziz2023fairness} studied EF1 in the presence of externalities and established existence results for several restricted settings. Very recently, \cite{connor2026tight} established that EF1 allocations need not exist even when agents have binary preferences and that the strongest achievable relaxation is EF$k$ for $k = \Theta(\sqrt{n})$, where $n$ is the number of agents. 
In our setting, one can view the centers as ``special agents'' whose valuations depend on the welfare of their members, thereby exhibiting a form of externality, while the agents under the centers do not exhibit such behavior. None of the above results however have any implications for our model.

We also discuss some works on the classic fair division setting without levels, that are most relevant to the fairness notions that we study. 
One of the first relaxations of envy-freeness that was studied is the EF1 notion, formally introduced by \cite{budish2011combinatorial}, although also implicitly considered earlier by \cite{LiptonMMS04}. An EF1 allocation is always guaranteed to exist and can be computed in polynomial time \citep{LiptonMMS04}. The stronger fairness notion of EFX, was later proposed by \cite{gourves2014near,DBLP:journals/teco/CaragiannisKMPS19}, which has since attracted significant attention.
Positive results for EFX were first established by \cite{plaut2020almost}, showing existence when all agents share identical, not necessarily additive, valuation functions. They also showed that EFX allocations can be computed efficiently when agents agree on the ranking of items by value. Existence results for three agents with additive valuations were established by \cite{ChaudhuryGM24}, with a simpler proof derived later  by \cite{AkramiACGMM25}. Additional existence results have been proven for bivalued instances \citep{amanatidis2021maximum}.  Very recently, the general existence question for EFX was resolved negatively beyond additive valuations. \cite{akrami2026counterexampleefxnge} showed that EFX allocations need not exist for monotone valuations, already for three agents and eight goods; their construction extends to instances with $n \geq 3$ agents and $m \geq n+5$ goods and, also implies nonexistence for submodular valuations. Subsequently, \cite{mackenzie2026counterexamplesefxsubmodularsubadditive} gave a compact counterexample for submodular valuations and proved inapproximability results for subadditive valuations.

Due to the difficulties of achieving exact EFX, research has also focused on approximation algorithms. The current best-known approximation is $\phi - 1 \approx 0.618$, due to \cite{AMN20}. Moreover, there have been further improvements for special cases. For instance, \cite{markakis23improved} proposed a $2/3$-EFX algorithm for cases where agents agree on the top $n$ goods, with $n$ being the number of the agents. Additional scenarios achieving a $2/3$ approximation are discussed in \citep{AFS24}. General frameworks for constructing approximate EFX algorithms are also discussed in \citep{markakis23improved} and \citep{farhadi2021almost}.

Apart from EF1 and EFX, other fairness notions have also been studied in the literature, including Maximin Share (MMS) and its variants PMMS and GMMS, introduced by \cite{budish2011combinatorial}, \cite{DBLP:journals/teco/CaragiannisKMPS19}, and \cite{barman2018groupwise}, respectively. We refer the reader to \citep{DBLP:journals/ai/AmanatidisABFLMVW23} for a comprehensive overview of additional fairness notions and open problems.

\iflong
We also discuss some works on the classic fair division setting without levels, that are most relevant to the fairness notions that we study. 
One of the first relaxations of envy-freeness that was studied is the EF1 notion, formally introduced by \cite{budish2011combinatorial}, although also implicitly considered earlier by \cite{LiptonMMS04}. An EF1 allocation is always guaranteed to exist and can be computed in polynomial time \citep{LiptonMMS04}. The stronger fairness notion of EFX, was later proposed by \cite{gourves2014near,DBLP:journals/teco/CaragiannisKMPS19}, which has since attracted significant attention.
Positive results for EFX were first established by \cite{plaut2020almost}, showing existence when all agents share identical, not necessarily additive, valuation functions. They also showed that EFX allocations can be computed efficiently when agents agree on the ranking of items by value. Existence results for three agents with additive valuations were established by \cite{ChaudhuryGM24}, with a simpler proof derived later  by \cite{AkramiACGMM25}. Additional existence results have been proven for bivalued instances \citep{amanatidis2021maximum}.

Due to the difficulties of achieving exact EFX, research has also focused on approximation algorithms. The current best-known approximation is $\phi - 1 \approx 0.618$, due to \cite{AMN20}. Moreover, there have been further improvements for special cases. For instance, \cite{markakis23improved} proposed a $2/3$-EFX algorithm for cases where agents agree on the top $n$ goods, with $n$ being the number of the agents. Additional scenarios achieving a $2/3$ approximation are discussed in \citep{AFS24}. General frameworks for constructing approximate EFX algorithms are also discussed in \citep{markakis23improved} and \citep{farhadi2021almost}.

Apart from EF1 and EFX, other fairness notions have also been studied in the literature, including Maximin Share (MMS) and its variants PMMS and GMMS, introduced by \citet{budish2011combinatorial}, \citet{DBLP:journals/teco/CaragiannisKMPS19}, and \citet{barman2018groupwise}, respectively. We refer the reader to \citep{DBLP:journals/ai/AmanatidisABFLMVW23} for a comprehensive overview of additional fairness notions and open problems.
\fi
\section{Model and preliminaries}
\label{sec:prelims}
We consider a scenario where we have $k$ entities, referred to as the {\it center} agents or simply centers, denoted by $C_1,\dots, C_k$. Each center can be seen as an entity that intends to allocate items to agents that are affiliated with the center. Namely, for every center we assume that there are $n$ affiliated agents (unless otherwise specified, we will stick to all centers having the same number of affiliated members). We will often denote the agents affiliated with center $C_j$ as $a_{(1,j)}, a_{(2, j)}, \dots, a_{(n, j)}$ and use $N_j$ to refer to the set of agents belonging to center $C_j$. 

Along with the agents, there is also a set of indivisible goods $M = [m]$\footnote{We use the notation $[z]$ to denote the set $\{1,2,\ldots ,z\}$, for $z\in \mathbb{N}$.} to be allocated. An allocation here is a 2-layered partition, that determines the items allocated to each center as well as the bundles given to the agents of each center. This is described below.
\begin{definition}\label{def:allocation}
An \emph{allocation}, or \emph{bundling profile}, of $M$ is 
a tuple $\mathcal{B} = (B_1,\dots,B_k)$, where: 
\begin{enumerate}
    
\item[(a)]
Each $B_j$ is a \emph{bundling}, i.e., a vector of disjoint sets assigned to the agents affiliated with center $C_j$,  namely, $B_j = \bigl(A_{(1,j)}, A_{(2,j)}, \dots, A_{(n, j)}\bigr)$,
    with $A_{(\ell,j)}$ being the \emph{bundle} given to agent $a_{(\ell,j)}$; 
\item[(b)]
    All items are allocated, i.e., 
    $\bigcup_{j\in [k]} \bigcup_{\ell\in [n]} A_{(\ell,j)} = M$.
\end{enumerate}
\end{definition}

The next step in setting up our model is to define how the centers and the agents underneath them value an allocation and how 
centers evaluate the bundles of other centers. 
For the agents that are underneath a center, we can use any of the standard valuation functions, as used in the fair division literature (e.g., monotone, additive, etc). As a starting point in this work, we will focus on additive valuations of the agents, unless specified otherwise. This means that for an agent $a$, receiving a set $S$ of goods, her value equals $\sum_{g\in S} v_a(\{g\})$. We sometimes refer to monotone valuations of the agents, meaning that for all $S, S' \subseteq M$, if $S \subseteq S'$, then $v_a(S) \leq v_a(S')$. We will also use $v_a(g)$ instead of $v_a(\{g\})$ for simplicity.

Given an allocation $\mathcal{B}$, we need to define both the value that a center has for the goods allocated to itself, and the value that it has for the items allocated to another center. These steps are necessary so that we can define what it means for a center to experience envy towards another center and to define envy-based fairness notions among centers.

There are several possibilities on how to proceed and we start here with some natural first approaches. Given that the centers are entities that would generally care for the best utilization of the resources they give to the agents, we focus on the utilitarian social welfare as the metric of economic efficiency that the centers are interested in\footnote{We leave as future work the study of other economic efficiency metrics as well (e.g. Nash welfare etc).}.

\medskip \noindent
{\bf Evaluating a bundling of a different center.} 
Let us first see how would a center $C_i$ evaluate the goods allocated to a different center, say $C_j$. Suppose that center $C_j$ has been given the bundling $B_j = (A_{(1,j)}, A_{(2,j)}, \dots, A_{(n, j)})$, and these sets are allocated to its $n$ agents. When center $C_i$ sees the allocation to $C_j$, it makes a thought experiment of how would these items produce maximum social welfare if it were to allocate them to its own agents. 
There are essentially two choices regarding this thought experiment of center $C_i$.
\begin{itemize}
    \item {\bf{\em Bundle-based potential value}.} The first choice is to allocate the {\em existing} bundles of center $C_j$, i.e., $(A_{(1,j)}, A_{(2,j)}, \dots, A_{(n, j)})$ to its own agents without altering them. Formally, let $\Pi_n$ be the set of all permutations on $[n]$. If $v_{a_{(\ell, i)}}(\cdot)$ is the valuation function of agent $a_{(\ell, i)}$, then 
    $v_{C_i}(B_j) = \max_{\pi \in \Pi_n} \sum_{\ell=1}^{n} v_{a_{(\ell, i)}}(A_{(\pi(\ell), j)})$.
    In other words, the center computes a matching of the existing bundles to its agents, in order to maximize their welfare.
    \item {\bf{\em Item-based potential value.}} The second choice is to completely redistribute the items $\cup_{\ell \in [n]}A_{(\ell,j)}$ into $n$ {\em new} bundles to its own agents. Hence, $v_{C_i}(B_j)$ will equal the maximum social welfare than can be achieved by such a redistribution.
\end{itemize}

\noindent
{\bf Evaluating a center's own bundling: realized value versus potential value.}
It is important to also distinguish the possible ways that a center evaluates the bundling allocated to its own agents. In particular, suppose that an algorithm has allocated the bundling $B_i = (A_{(1,i)}, \dots, A_{(n, i)})$ to the agents underneath center $C_i$. 
Here depending on the application scenario, there are two ways in which the center could evaluate $B_i$. The first one is to be consistent with the thought experiment and the models outlined above and use the potential value. 
I.e., if we adopt the bundle-based version, then center $C_i$ would evaluate $B_i$ by finding the welfare-optimal allocation of the bundles to its agents. This does not mean that the actual allocation to the agents will change. It is simply that the center perceives as its value the best possible welfare\footnote{To further support this, we can think of examples where a center evaluates the items it has received before an allocation to the agents is realized, e.g., a department could evaluate its total set of resources acquired (servers, lab equipment, etc) before an actual allocation to the research teams is realized.} that could be achieved if the agents were to exchange their bundles. 
Alternatively, it is also meaningful to consider a second choice, which is that the center simply calculates the actual welfare produced by the bundling $B_i$.
This gives rise to the following choices. 
\begin{itemize}
\item {\bf{\em Potential value for own bundling}}. In this case, $v_{C_i}(B_i)$ is defined in the same way as the bundlings of other centers, explained above. 
\item {\bf {\em Realized value for own bundling}}. Here, the value of center $C_i$ is 
$v_{C_i}(B_i) = \sum_{\ell=1}^{n} v_{a_{(\ell, i)}}(A_{(\ell, i)}).$
\end{itemize}

Since, as described earlier, every center always evaluates another center's bundling according to its potential value (whether bundle-based or item-based), the only remaining choice in order to characterize the valuation function of a center is how this center evaluates its own bundling: either by its potential value or its realized value. Combined with the bundle-based/item-based distinction, this yields four valuation functions: \BBP, \BBR, \IBP, and \IBR.
It is also easy to see that any guarantees we get for the realized value continues to hold for the potential value as well.

At this point, it is useful to formalize the notion of monotonicity for a center's valuation function.
To this end, similarly to the notion of a superset for sets, we first define \emph{superbundlings}.

\begin{definition}\label{superbundling}
    Let $B=(A_1, A_2, \ldots, A_n)$ be a bundling comprising bundles $\{A_i\}_{i\in[n]}$. We say that a bundling $B'=(A'_1, A'_2, \ldots, A'_n)$ is a \emph{superbundling} of $B$ if $A_i \subseteq A'_i$ for all $i\in[n]$.
\end{definition}

\begin{definition}\label{def:center-monotonicity}
    A center's valuation function $v_{C_j}$ is \emph{monotone} if, for any bundling $B$ and any superbundling $B'$ of it, we have $v_{C_j}(B) \leq v_{C_j}(B')$.
\end{definition}

\subsection{Envy-based fairness concepts}

Having settled the definition of valuations for both agents and centers, we now turn to notions of envy-based fairness. Since our goal is to compute allocations that are {\em simultaneously} fair with respect to both the centers and the agents, we need to define fairness notions for {\em both} entities. 
We begin with the agents. In our setting, the presence of centers introduces an additional layer, raising the  question of whether fairness among agents should be evaluated globally \emph{across} all agents of all centers, or locally \emph{within} each center. 
This gives rise to two corresponding approaches to fairness among agents, which we refer to as \emph{inter-fairness} and \emph{intra-fairness}, respectively. 

We begin with inter-fairness. Among the various notions that can be defined, we are particularly interested in \emph{inter-EF1} and \emph{inter-EFX}, that in essence correspond to the standard EF1 and EFX notions. 
According to the EF1 notion an agent may be envious of another agent's bundle, as long as there exists \emph{some} good in the other agent’s bundle whose hypothetical removal would eliminate this envy~\cite{budish2011combinatorial,LiptonMMS04}. 
The EFX criterion imposes a stronger requirement: envy must be eliminated with the hypothetical removal of \emph{any} good from the other agent’s bundle~\cite{DBLP:journals/teco/CaragiannisKMPS19}.

\begin{definition}[Agent inter-fairness]\label{def:inter-ef1-efx}
    Let $\mathcal{B} = (B_1, B_2, \ldots, B_n)$ be an allocation of the goods to the agents and let $A_{(i,j)}$ denote the bundle of agent $a_{(i,j)}$ that belongs to center $C_j$. Then, 
    $\mathcal{B}$ is:
    \begin{itemize}[leftmargin=*]
        \item {\em inter-EF1}, if for every pair $a_{(i, j)}, a_{(i',j')}$, 
        $\exists g\in A_{(i',j')}$, such that
$v_{a_{(i,j)}}(A_{(i,j)}) \geq v_{a_{(i,j)}}(A_{(i', j')} \setminus g)$;
        \item {\em inter-EFX}, if for every pair $a_{(i, j)}, a_{(i',j')}$ and $\forall g\in A_{(i',j')}$, we have $v_{a_{(i,j)}}(A_{(i,j)}) \geq v_{a_{(i,j)}}(A_{(i', j')} \setminus g)$.
    \end{itemize}
\end{definition}

In an analogous manner, we also define {\em intra-EF1} and {\em intra-EFX}, by adapting the above definitions so that they involve only agents within the {\em same} center. Obviously, intra-fairness is a  weaker notion than inter-fairness. 

\iflong
Therefore, Definition~\ref{def:inter-ef1-efx} is adapted as follows.
\begin{definition}[Agent intra-fairness]\label{def:intra-ef1-efx}
    Let $\mathcal{B}$ be an allocation, as described in Definition~\ref{def:inter-ef1-efx}. Then, we say that $\mathcal{B}$ is:
    \begin{itemize}[noitemsep, nolistsep,leftmargin=*]
        \item {\em intra-EF1}, if for every center $C_j$ and every pair of agents $a_{(i, j)}, a_{(i',j)}$ in $C_j$, $ \exists g\in A_{(i',j)}$, such that
$v_{a_{(i,j)}}(A_{(i,j)}) \geq v_{a_{(i,j)}}(A_{(i', j)} \setminus g)$.
        \item {\em intra-EFX}, if for every center $C_j$, every pair of agents $a_{(i, j)}, a_{(i',j)}$ in $C_j$ and $\forall g\in A_{(i',j)}$,
$v_{a_{(i,j)}}(A_{(i,j)}) \geq v_{a_{(i,j)}}(A_{(i', j)} \setminus g)$.
    \end{itemize}
\end{definition}
\fi

Finally, it remains to address fairness from the perspective of the centers. 
In order to define EF1 and EFX for the centers, we propose a natural extension based on a thought experiment analogous to the case of agents. When a center compares its bundling to that of another center, envy is evaluated by hypothetically removing a \emph{single} good (for EF1) or \emph{any} good (for EFX) from one of the bundles of the other center’s bundling. In what follows, if the good $g$ being removed belongs to bundle $A_{(\ell,j)}$ of bundling $B_j$, we denote $B_j\setminus g = (A_{(1,j)},\ldots, A_{(\ell,j)}\setminus \{g\}, \ldots, A_{(n,j)})$.

\begin{definition}[Center fairness]\label{def:center-ef1-efx}
    Let $\mathcal{B}$ be an allocation, as described in Definition~\ref{def:inter-ef1-efx}. Then,  
    $\mathcal{B}$ is:
    \begin{itemize}[leftmargin=*]
        \item {\em EF1} for the centers, if for every pair  
        $C_i, C_j$,
        $\exists \ell \in [n]$, $\exists g\in A_{(\ell,j)}$, such that $v_{C_i}(B_i) \geq v_{C_i}(B_j \setminus g)$;
        \item {\em EFX} for the centers, if for every pair 
        $C_i, C_j$,
        $\forall \ell \in [n]$,
        $\forall g\in A_{(\ell,j)}$,
        we have $v_{C_i}(B_i) \geq v_{C_i}(B_j \setminus g)$.
    \end{itemize}
\end{definition}

When referring to a fairness notion $F_1$ at the center-level and simultaneously to a (possibly different) fairness notion $F_2$ at the agent-level, we will often write $\frac{F_1}{F_2}$. 
For example, $\frac{\text{EFX}}{\text{inter-EF1}}$ denotes EFX among centers and inter-EF1 among agents.
\section{Warm up: Intra-fairness} 
\label{sec:intra}

As a warm-up, we start with \emph{intra-fairness}, where agents compare their bundles only with those of other agents in the same center. 
As already discussed in the Introduction, requiring intra-fairness has a natural motivation, since it is very likely that in some cases the agents underneath a center may not even observe what the agents in other centers possess, due to informational and privacy constraints. Positive results on intra-fairness also serve as a benchmark on what we can achieve in this model, given that it is easier to satisfy than inter-fairness.

Our next result implies that at least for item-based valuations of the centers and for some generalizations, we can achieve for intra-fairness anything that we already know for the classic fair 
division setting without levels.

\begin{theorem}\label{thm:intra:item-wise}
    Assume that each center has a valuation function dependent only on the subset of items allocated to it \footnote{That is, the valuation of a center does not depend on the structure of the bundles or on how these bundles are allocated to its agents.} and belonging to some class $C$ of valuations. Assume also that the agents have valuations that belong to some class $C'$ of valuations.
    Let $F_1$ and $F_2$ be two fairness notions that are guaranteed to exist in the classic fair division setting, for the classes $C$ and $C'$ respectively. 
    Then there exists an $\frac{\text{F1}}{\text{intra-F2}}$ allocation. 
   \end{theorem}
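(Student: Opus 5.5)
The proof is a two-stage composition. Since each center's valuation $v_{C_j}$ depends only on the set of items it receives, we can view the centers as ordinary agents in a classic one-level instance. There are $k$ agents with valuations $v_{C_1},\dots,v_{C_k}$ from class $C$, over the item set $M$. By hypothesis, notion $F_1$ is guaranteed to exist for class $C$. So we first fix a partition $(S_1,\dots,S_k)$ of $M$ that satisfies $F_1$ in this auxiliary instance. Center $C_j$ will receive exactly the items in $S_j$.

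Next, for each center $C_j$ separately, we form a second classic instance. Its agents are $a_{(1,j)},\dots,a_{(n,j)}$, its item set is $S_j$, and the agents keep their own valuations from class $C'$. Here we must check that restricting a valuation in $C'$ to the subset $S_j$ (or, equivalently, treating it as a valuation on $S_j$) stays in $C'$. This holds for the usual classes (additive, monotone, submodular, and so on), and we would state it as the implicit assumption. Since $F_2$ exists for $C'$, we take an $F_2$ allocation $(A_{(1,j)},\dots,A_{(n,j)})$ of $S_j$ among the agents of $C_j$, and set $B_j=(A_{(1,j)},\dots,A_{(n,j)})$. Because the sets $S_j$ partition $M$, the resulting $\mathcal{B}=(B_1,\dots,B_k)$ is a valid allocation in the sense of Definition~\ref{def:allocation}.

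It remains to verify both fairness levels.

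\emph{Intra-$F_2$.} This is immediate. Intra-fairness compares only agents within the same center, and each agent's bundle and valuation are exactly those of the $F_2$ instance for that center.

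\emph{$F_1$ for the centers.} Here we use that $v_{C_i}(B_j)$ depends only on $\bigcup_\ell A_{(\ell,j)}=S_j$. We also need that $v_{C_i}(B_j\setminus g)$ equals the center's valuation of the set $S_j\setminus\{g\}$. This follows because $B_j\setminus g$ removes $g$ from exactly one bundle, so the item set of $B_j\setminus g$ is $S_j\setminus\{g\}$. Hence each center-level comparison in Definition~\ref{def:center-ef1-efx} translates verbatim into the corresponding comparison in the auxiliary $k$-agent instance. The quantifiers match as well: "$\exists \ell,\exists g\in A_{(\ell,j)}$" is the same as "$\exists g\in S_j$", and likewise for "$\forall$". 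So $F_1$ transfers.

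The main obstacle is a formal one: making precise what "fairness notion $F_1$" means, so that this translation is legitimate. I would handle it by stating that $F_1$ is a property of the partition and the valuations alone, defined through values of item sets and of item sets with goods removed. Under that convention, the argument above goes through for any such notion, not just EF1 and EFX.
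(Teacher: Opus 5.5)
Your proposal is correct and follows essentially the same two-stage argument as the paper: first compute an $F_1$ partition treating the centers as agents with their set-based valuations, then compute an $F_2$ allocation of each $S_j$ among the agents of $C_j$, using the fact that the second stage cannot change any center's value. Your explicit checks (that restricting a valuation in $C'$ to $S_j$ keeps it in $C'$, that the item set of $B_j\setminus g$ is $S_j\setminus\{g\}$, and that the quantifiers of Definition~\ref{def:center-ef1-efx} match the classic ones) are details the paper leaves implicit, not a different route.
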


\begin{proof}
    Consider two fairness notions $F_1$ and $F_2$, and assume that we want the final allocation to satisfy $F_1$ for the centers and intra-$F_2$ for the agents. Let $A$ and $B$ be the algorithms that compute allocations satisfying $F_1$ and $F_2$ respectively, in the standard fair division setting under monotone valuations of the agents. We construct the allocation as follows:
    
    \begin{itemize}
        \item Step 1. We determine the subset of goods each center receives. We run Algorithm $A$ treating each center $C_j$ for $j \in [k]$ as an agent, with $M$ as the set of items. This produces a partition of items $\mathcal{G} = \{G_1, \dots, G_k\}$, where each center $C_j$ receives the subset $G_j$. By the guarantee of Algorithm $A$, this partition satisfies $F_1$ for the centers.

        \item Step 2. We distribute the items within each center separately. For each center $C_j$, let $N_j$ be the set of agents that belong to it. We run Algorithm $B$ on the set of agents $N_j$ with $G_j$ as the set of available items. This constructs the bundles $\{A_{(i,j)}\}_{i\in [n]}$. By the guarantee of Algorithm $B$, this internal distribution satisfies  intra-$F_2$ for the agents.
    \end{itemize}

    Since the valuation of each center is independent of the way the items it received are allocated to its agents, the second step does not affect the value obtained by any center in the first step. Therefore, the final allocation simultaneously satisfies both fairness notions.
\end{proof}
 
\begin{remark}
In Theorem \ref{thm:intra:item-wise}, the requirement for equally-sized centers can be dropped, since the value of a center depends only on the set of items it receives and not on the number of its agents.
\end{remark}

\noindent Next, we show that the previous theorem applies to \IBP valuations of the centers. First, we establish the monotonicity property of each center’s valuation function, under monotone valuations of the agents. For completeness, we show that all the valuation variants introduced above are monotone in this case. We will use this result to argue that when the agents have monotone valuations, not only Theorem~\ref{thm:intra:item-wise} applies to \IBP valuations, but in fact we can guarantee any combination of an agent-level fairness notion and a center-level fairness notion, guaranteed to exist under monotone valuations in the standard setting. The proofs of the following two lemmas are deferred to Appendix~\ref{app:sec:intra}.

\begin{lemma}\label{lemma:intra:monotonicity}
     When the agents underneath the centers have monotone valuations, then the \IBP, \IBR, \BBP, \BBR valuations of the centers are monotone. 
\end{lemma}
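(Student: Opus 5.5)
The plan is to verify monotonicity separately for each of the four variants, using Definition~\ref{def:center-monotonicity}. Fix a center $C_i$, a bundling $B=(A_1,\dots,A_n)$ and a superbundling $B'=(A'_1,\dots,A'_n)$ with $A_\ell\subseteq A'_\ell$ for all $\ell$. I will treat the potential variants via a pointwise comparison of feasible solutions, and the realized variants directly from the definition. Here $B$ may be either the center's own bundling or another center's; the potential formulas are the same in both cases.

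\textbf{Bundle-based potential (and the other-center case of bundle-based realized).} For any permutation $\pi\in\Pi_n$, monotonicity of each agent valuation gives $v_{a_{(\ell,i)}}(A_{\pi(\ell)})\le v_{a_{(\ell,i)}}(A'_{\pi(\ell)})$. Summing over $\ell$ shows that the objective value of $\pi$ on $B$ is at most its value on $B'$. Now take $\pi^*$ optimal for $B$. Then
\[
v_{C_i}(B)=\sum_\ell v_{a_{(\ell,i)}}(A_{\pi^*(\ell)})\le \sum_\ell v_{a_{(\ell,i)}}(A'_{\pi^*(\ell)})\le v_{C_i}(B').
\]

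\textbf{Item-based potential (and the other-center case of item-based realized).} Let $S=\bigcup_\ell A_\ell\subseteq S'=\bigcup_\ell A'_\ell$. Take an optimal redistribution $(D_1,\dots,D_n)$ of $S$ to the agents of $C_i$. Assign the extra items $S'\setminus S$ arbitrarily, for instance all of them to agent $1$, to obtain a partition $(D'_1,\dots,D'_n)$ of $S'$ with $D_\ell\subseteq D'_\ell$. By agent monotonicity, the welfare of this partition is at least $v_{C_i}(B)$, and by maximality $v_{C_i}(B')$ is at least that welfare.

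\textbf{Realized value for the center's own bundling.} Here $v_{C_i}(B)=\sum_\ell v_{a_{(\ell,i)}}(A_\ell)$, where bundle $A_\ell$ is held by agent $a_{(\ell,i)}$. The superbundling keeps the index correspondence, so agent $a_{(\ell,i)}$ holds $A'_\ell\supseteq A_\ell$. Termwise monotonicity then gives the inequality.

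The only step needing care is the item-based case, since there the optimal redistribution for $B'$ need not refine the one for $B$. The extension argument above avoids this: we only need some feasible partition of $S'$ that dominates the optimum for $S$. The other potential subtlety is whether ``realized'' monotonicity uses the same agent–bundle correspondence. By Definition~\ref{superbundling} it does, since indices are preserved, so I will state this explicitly.
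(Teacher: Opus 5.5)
Your proposal is correct and follows essentially the same route as the paper: you transfer an optimal permutation from $B$ to $B'$ for the bundle-based potential case, use termwise monotonicity for the realized own-bundling value, and use a superset argument for the item-based potential case. Your explicit step of extending an optimal redistribution of $S$ to a partition of $S'$ simply spells out the paper's one-line remark that $\bigcup_\ell A_\ell \subseteq \bigcup_\ell A'_\ell$. Your separate treatment of the other-center case for the realized variants is a harmless addition, since the paper only evaluates realized monotonicity on a center's own bundling.
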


The following lemma demonstrates that when agents have additive valuation functions, this additivity extends to the centers under \IBP valuations. This is particularly useful when adapting existing fair division results to our setting. 

\begin{lemma} \label{lemma:additivity-ibp}
    If the agents within a center have additive valuations, then the \IBP valuation of that center is additive. 
\end{lemma}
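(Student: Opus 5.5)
Write $S = \bigcup_{\ell\in[n]} A_{(\ell,j)}$ for the set of items given to a bundling $B_j$. Under \IBP valuations, center $C_i$ values a bundling $B_j$ as
\[
v_{C_i}(B_j)=\max\Bigl\{\sum_{\ell=1}^n v_{a_{(\ell,i)}}(D_\ell) : (D_1,\dots,D_n)\ \text{a partition of } S\Bigr\}.
\]
So $v_{C_i}$ depends only on $S$, and we may write it as a set function $f_i(S)$. The plan is to show that, when every agent $a_{(\ell,i)}$ is additive, this optimum is attained by giving each item separately to the agent who values it most. This yields the closed form
\[
f_i(S)=\sum_{g\in S}\max_{\ell\in[n]} v_{a_{(\ell,i)}}(g),
\]
and that expression is visibly additive, with item weights $w_i(g)=\max_\ell v_{a_{(\ell,i)}}(g)$.

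The upper bound comes first. Fix any partition $(D_1,\dots,D_n)$ of $S$. By additivity of each agent,
\[
\sum_{\ell} v_{a_{(\ell,i)}}(D_\ell)=\sum_{\ell}\sum_{g\in D_\ell} v_{a_{(\ell,i)}}(g).
\]
Each item $g\in S$ lies in exactly one $D_\ell$, so it contributes $v_{a_{(\ell,i)}}(g)\le w_i(g)$. Summing over $g$ gives $\sum_\ell v_{a_{(\ell,i)}}(D_\ell)\le \sum_{g\in S} w_i(g)$.

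For the lower bound, let $\ell^*(g)$ be an agent index attaining $w_i(g)$, breaking ties arbitrarily. Define $D_\ell=\{g\in S:\ell^*(g)=\ell\}$. This is a valid partition of $S$, since empty bundles are allowed. By additivity its welfare is exactly $\sum_{g\in S} w_i(g)$, so the maximum equals that sum.

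Finally, additivity of $f_i$ follows directly. We have $f_i(\emptyset)=0$, and for disjoint $S,T$,
\[
f_i(S\cup T)=f_i(S)+f_i(T),
\]
with singleton values $f_i(\{g\})=w_i(g)$. There is no real obstacle here. The only step needing care is noting that the center's value depends on the bundling only through the union of its bundles, so it is a genuine set function; this is also what lets Theorem~\ref{thm:intra:item-wise} be applied afterwards.
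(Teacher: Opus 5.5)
Your proof is correct and takes essentially the same approach as the paper: both identify the item-based potential value of a bundling with $\sum_{g\in S}\max_{\ell} v_{a_{(\ell,i)}}(g)$, which is visibly additive. The paper asserts this closed form directly, whereas you justify it with an explicit upper bound (any partition) and a matching lower bound (each item to an agent maximizing its value), which is the step the paper leaves implicit.
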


Therefore, we obtain the following corollary:

\begin{corollary}\label{corollary_1}
    Under \IBP valuations for the centers, the conditions of Theorem~\ref{thm:intra:item-wise} apply. This has, among others, the following consequences.
    \begin{itemize}[noitemsep,topsep=3pt]
        \item If the agents have additive valuations, then we can guarantee $\frac{\text{0.618-EFX}}{\text{intra 0.618-EFX}}$, implied by \cite{AMN20}.
        \item For agents with additive valuations that have a common ranking over the goods w.r.t. their value, we can guarantee $\frac{\text{EFX}}{\text{intra-EFX}}$, using \cite{plaut2020almost}. The same is true for bivalued valuations by \cite{amanatidis2021maximum}.
        \item For any fairness notion $F$ that is known to exist for monotone valuations without levels, we can guarantee  $\frac{\text{F}}{\text{intra-F}}$ when the agents have monotone valuations. 
    \end{itemize} 
\end{corollary}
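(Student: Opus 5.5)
The plan is to verify the two hypotheses of Theorem~\ref{thm:intra:item-wise} for \IBP centers, and then read off each bullet by plugging known single-level existence results into that theorem.

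The first hypothesis is that a center's value depends only on the items it receives. Under the \IBP variant, center $C_i$ values any bundling $B_j$, including its own $B_i$, as the maximum utilitarian welfare obtainable by redistributing the items $\bigcup_{\ell} A_{(\ell,j)}$ among its own agents. So I will write $v_{C_i}(B_j) = \hat v_{C_i}\bigl(\bigcup_\ell A_{(\ell,j)}\bigr)$, where for a set $S \subseteq M$ we set
\[
\hat v_{C_i}(S) := \max\Bigl\{ \textstyle\sum_{\ell} v_{a_{(\ell,i)}}(S_\ell) : (S_1,\dots,S_n) \text{ a partition of } S \Bigr\}.
\]
Because of the ``potential'' choice, $\hat v_{C_i}$ is the same function for a center's own bundling and for other centers' bundlings. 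It depends only on the underlying item set, not on the internal bundling or on which agent receives what. This is exactly the condition in the footnote of the theorem.

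Next I will identify the class of $\hat v_{C_i}$ and check that center-level fairness matches classic fairness for $\hat v$. Removing a good $g$ from some bundle $A_{(\ell,j)}$ of $B_j$ gives $v_{C_i}(B_j\setminus g) = \hat v_{C_i}(G_j\setminus\{g\})$, where $G_j = \bigcup_\ell A_{(\ell,j)}$. Hence center EF1 and EFX (and their $\alpha$-approximations) in the sense of Definition~\ref{def:center-ef1-efx} coincide with classic EF1 and EFX for the set valuations $\hat v_{C_i}$ on the partition $(G_1,\dots,G_k)$. For the class itself:
\begin{itemize}
\item When agents are additive, Lemma~\ref{lemma:additivity-ibp} gives that $\hat v_{C_i}$ is additive, with $\hat v_{C_i}(g)=\max_\ell v_{a_{(\ell,i)}}(g)$.
\item When agents are monotone, Lemma~\ref{lemma:intra:monotonicity} (restricted to item sets) gives that $\hat v_{C_i}$ is monotone.
\end{itemize}

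With these in hand, each bullet follows by choosing $F_1,F_2$ and invoking Theorem~\ref{thm:intra:item-wise}.
\begin{itemize}
\item \emph{Additive agents.} Centers are additive, so \cite{AMN20} yields a $0.618$-EFX partition among centers. Applying \cite{AMN20} again inside each center gives intra $0.618$-EFX.
\item \emph{Monotone agents.} Centers are monotone, so any notion $F$ guaranteed for monotone valuations applies at both levels.
\item \emph{Common ranking, and bivalued agents.} Here I must check that the center valuations inherit the property. If all agents share a ranking of goods, then $\hat v_{C_i}(g)=\max_\ell v_{a_{(\ell,i)}}(g)$, and a pointwise maximum of functions that are monotone in the common order preserves that order. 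So \cite{plaut2020almost} applies to the centers, and also inside each center. If agents are $\{a,b\}$-bivalued with common $a,b$, then $\hat v_{C_i}(g)\in\{a,b\}$, so the centers are bivalued too and \cite{amanatidis2021maximum} applies at both levels.
\end{itemize}

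The main obstacle I expect is this last step, namely making sure the structural assumptions (common ranking, bivalued) transfer from the agents to the derived additive center valuation. The $\max$ formula from Lemma~\ref{lemma:additivity-ibp} is what makes this go through. A subtlety in the common-ranking case is whether the ranking is meant to be common across all agents or only within each center. If it is only within each center, different centers might rank goods differently, and then I would fall back on the $0.618$-EFX guarantee for the center level. I will state the assumption as a global common ranking.
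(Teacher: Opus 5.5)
Your proposal is correct and follows the paper's route. You check that under \IBP valuations a center's value depends only on the item set it receives, so Theorem~\ref{thm:intra:item-wise} applies. You then use Lemma~\ref{lemma:intra:monotonicity} for the monotone bullet and Lemma~\ref{lemma:additivity-ibp} for the additive ones. Your extra checks fill in details the paper's brief appeal to Lemma~\ref{lemma:additivity-ibp} leaves implicit: that $\max_\ell v_{a_{(\ell,i)}}(g)$ preserves a global common ranking and the common $\{a,b\}$ bivalued structure, and that removing a good from a center's bundling matches removing it from the item set.
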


\begin{proof}It is immediate that, under \IBP valuations, the value of a center depends solely on the subset of items it receives and not on how those items are allocated among its agents. Therefore, the assumptions of Theorem~\ref{thm:intra:item-wise} are satisfied. This property, combined with the monotonicity established in Lemma~\ref{lemma:intra:monotonicity}, ensures the third consequence. Finally, for the first two consequences, the claims follow by invoking Lemma~\ref{lemma:additivity-ibp}. 
\end{proof}
    
For bundle-based valuations, the premises of Theorem \ref{thm:intra:item-wise} do not apply and we leave it as an open problem for future work. We also note that making progress for this case can yield improvements on finding  balanced EF1 allocations in standard fair division 
(see Proposition \ref{prop:balanced} in Section~\ref{sec:conclusions}).
\section{Inter-fairness via Horizontal Round Robin and its variants}
\label{sec:inter}
\iflong
We now move to the more difficult part of satisfying inter-fairness properties among the agents. This turns out to be much more challenging, as one needs to be very careful on how the goods within a center are given to its agents, so that agents from another center do not feel unhappy. We have made initial progress on special types of valuation functions for the agents, which still requires a highly non-trivial analysis. 
\fi
\ifshort

Inter-fairness turns out to be much more challenging, as one needs to be very careful on how the goods within a center are assigned, so that agents from another center do not feel unhappy. We have made initial progress on special types of valuations, which still requires a non-trivial analysis. 
\fi

\subsection{Agents with ordered or identical valuations}
\label{subsec:ordered}
A natural and widely studied special case is when all agents share the same ranking of goods;
commonly referred to as \emph{ordered valuations}. An even more restricted special case arises when agents have \emph{identical valuations}. This has also received significant attention in the literature and serves as a useful benchmark. In our setting, we first focus on the variants in which agents share a common ranking or have identical valuations only \emph{within each center}. This assumption is motivated for scenarios where agents may be assigned to centers based on similar preferences and behavior.
Then, we investigate the case where all agents \emph{across} all centers have identical valuations. In these special cases, we can obtain fairness guarantees for the variants of center valuations discussed above via a simple adaptation of the well-known Round-Robin algorithm to our hierarchical setting. 

\medskip \noindent \textbf{\textsc{Horizontal Round-Robin (HRR).}} First, the algorithm fixes an arbitrary ordering of the centers and, within each center, an arbitrary ordering of its agents. Then, it allocates the items according to this fixed order. 
It is helpful to think of the agents as if they were arranged in a table with $n$ rows and $k$ columns. Each column corresponds to a center, and the $i$-th row consists of the agents $(a_{(i,1)}, a_{(i,2)}, \ldots, a_{(i,k)})$. During the execution of the algorithm, the items are allocated parsing the agents \emph{row by row}. While parsing row $i$, the algorithm visits the agents $a_{(i,1)}, a_{(i,2)}, \ldots, a_{(i,k)}$, allowing each agent to pick her favorite unallocated item. We refer to a complete parsing over a row as a \emph{small round}. A sequence of $n$ small rounds, corresponding to a complete parsing of
all rows, is called a \emph{big round}. Due to this row-by-row execution, we call this algorithm
\textsc{Horizontal Round-Robin}. The algorithm is presented as Algorithm~\ref{alg:horizontalRR}.

\begin{algorithm}[h]
\caption{\textsc{Horizontal Round-Robin}}
\label{alg:horizontalRR}
\DontPrintSemicolon

Fix an arbitrary ordering $C_1, C_2, \dots, C_k$ of the centers and, for each center $C_j$, an arbitrary ordering $a_{(1,j)}, a_{(2,j)}, \dots, a_{(n,j)}$ of its agents\;

\For{$i = 1$ \textbf{to} $n$}{
    \For{$j = 1$ \textbf{to} $k$}{
        $A_{(i,j)} \gets \emptyset$\;
    }
}

\While{$M \neq \emptyset$}{
    \For{$i = 1$ \textbf{to} $n$}{
        \For{$j = 1$ \textbf{to} $k$}{
            \If{$M = \emptyset$}{
                \textbf{break}\;
            }
            Let $g^\ast \in \arg\max_{g \in M} v_{a_{(i,j)}}(g)$\;
            $A_{{(i,j)}} \gets A_{(i,j)} \cup \{g^\ast\}$\;
            $M \gets M \setminus \{g^\ast\}$\;
        }
    }
}
\Return $\mathcal{B} = (B_1, \ldots, B_k)$, where $B_j = (A_{(1,j)}, A_{(2,j)}, \ldots, A_{(n,j)})$\;
\end{algorithm}

 Before turning to the proofs of our main theorems, we state the following simple yet useful observation. 

 \begin{observation}\label{lemma:HRR_interEF1}
    The HRR algorithm computes an inter-EF1 allocation for the
    agents in polynomial time, for additive valuation functions of the agents.
\end{observation}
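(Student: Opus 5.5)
The plan is to observe that HRR, once we flatten the two-level structure, is exactly the classic Round-Robin algorithm run on all $nk$ agents, and then invoke the standard EF1 argument for Round-Robin under additive valuations. Concretely, I would define the linear order $\sigma$ on the $nk$ agents that reads the table row by row: $a_{(1,1)},\dots,a_{(1,k)},a_{(2,1)},\dots,a_{(n,k)}$. Each iteration of the outer while loop (a big round) visits every agent exactly once in the order $\sigma$, and each visit hands the current agent her favorite remaining item. The whole execution is therefore precisely Round-Robin with picking order $\sigma$, and the centers play no role in who picks what. Hence inter-EF1 reduces to ordinary EF1 among the $nk$ agents in this single Round-Robin run.

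For completeness, I would then reprove the EF1 property of Round-Robin rather than only cite \cite{LiptonMMS04,DBLP:journals/teco/CaragiannisKMPS19}. Fix two agents $a=a_{(i,j)}$ and $b=a_{(i',j')}$, and write $g^b_1,g^b_2,\dots$ for $b$'s picks in chronological order and $g^a_1,g^a_2,\dots$ for $a$'s.

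\emph{Case 1: $a$ precedes $b$ in $\sigma$.} In every big round $t$, $a$ picks before $b$, so $g^b_t$ was still available when $a$ chose $g^a_t$. Thus $v_a(g^a_t)\ge v_a(g^b_t)$ for every $t$ in which $b$ picks. Since $a$ picks in every round in which $b$ picks, summing gives $v_a(A_a)\ge v_a(A_b)$, i.e.\ no envy at all.

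\emph{Case 2: $b$ precedes $a$.} Remove $b$'s first item $g^b_1$. For each $t\ge 1$, $a$'s $t$-th pick happens before $b$'s $(t+1)$-th pick, so $v_a(g^a_t)\ge v_a(g^b_{t+1})$. Summing over all $t$ with $g^b_{t+1}$ defined gives $v_a(A_a)\ge v_a(A_b\setminus g^b_1)$.

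The one point needing care is the \textbf{break} when $M$ empties mid-round. Because the order $\sigma$ is fixed, if $b$ made a $(t+1)$-th pick (resp.\ a $t$-th pick in Case 1), then $a$ already had her $t$-th turn with an item available. So the needed picks of $a$ always exist, and additivity lets us sum the pairwise inequalities.

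Running time is immediate: there are $m$ picks, and each takes $O(m)$ time to find an argmax, so the total is $O(m^2)$. I do not expect a real obstacle. The only step demanding attention is the identification with flattened Round-Robin, in particular checking that the \textbf{break} is harmless and that agents with empty bundles are covered: an empty bundle of $b$ is trivially EF1, and in Case 2 we may assume $A_b\neq\emptyset$.
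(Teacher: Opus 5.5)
Your proposal is correct and takes the same approach as the paper. The paper's entire proof is the observation that HRR coincides with standard Round-Robin under the row-by-row order $(a_{(1,1)},\ldots,a_{(1,k)},a_{(2,1)},\ldots,a_{(n,k)})$; your re-derivation of Round-Robin's EF1 guarantee, including the check that the \textbf{break} is harmless, fills in the standard step the paper leaves implicit.
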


The proof follows from the fact that \textsc{HRR} is equivalent to the standard Round-Robin algorithm in which agents are visited in the order $(a_{(1,1)}, \ldots, a_{(1,k)}, a_{(2,1)}, \ldots, a_{(2,k)}, \ldots, a_{(n,1)}, \ldots, a_{(n,k)})$.

\begin{theorem}[bundle-based potential valuations]\label{thm:common-rank:thm1}
When all agents within each center (but not necessarily across centers), agree on the ranking of the items, we can compute an $\frac{\text{EF1}}{\text{inter-EF1}}$ allocation 
in polynomial time, under either (i) \BBP or (ii) \BBR valuations.
\end{theorem}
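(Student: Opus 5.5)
The plan is to run \textsc{Horizontal Round-Robin} (Algorithm~\ref{alg:horizontalRR}). Observation~\ref{lemma:HRR_interEF1} already gives inter-EF1 among the agents in polynomial time for additive valuations, so the remaining work is EF1 among the centers. Since a center's realized value of its own bundling is at most its potential value, and the value it assigns to another center's bundling is the same (bundle-based potential) in both variants, it suffices to prove case (ii), the \BBR case; case (i) then follows. So I fix two centers $C_i, C_j$ and must exhibit one good $g$ of $B_j$ with
\[
\textstyle\sum_{\ell} v_{a_{(\ell,i)}}(A_{(\ell,i)}) \;\geq\; \max_{\pi\in\Pi_n}\sum_{\ell} v_{a_{(\ell,i)}}(A_{(\pi(\ell),j)}\setminus g),
\]
where the removal of $g$ affects only the bundle that contains it.

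The key structural fact is the common ranking $\succ_i$ shared by all agents of $C_i$. Every time an agent of $C_i$ picks, she takes the $\succ_i$-top remaining good. Hence every good $a_{(\ell,i)}$ takes is $\succ_i$-above every good taken by anyone at any later step. I will index the $t$-th pick of $a_{(\ell,i)}$ (made in big round $t$, small round $\ell$) and compare it with the picks in $B_j$, in two cases.
\begin{itemize}
\item If $i<j$: a good of $A_{(\ell',j)}$ with $\ell'\ge \ell$, picked in big round $t$, comes after $a_{(\ell,i)}$'s $t$-th pick. A good with $\ell'<\ell$ is only dominated by $a_{(\ell,i)}$'s pick one big round earlier.
\item If $i>j$: the same comparison holds with the roles of $\ell'\ge\ell$ and $\ell'<\ell$ shifted by one small round.
\end{itemize}
This gives, for each agent $a_{(\ell,i)}$ and each bundle $A_{(\ell',j)}$, an item-by-item injection showing that $v_{a_{(\ell,i)}}(A_{(\ell,i)})$ dominates $A_{(\ell',j)}$, possibly after deleting the first-picked good of $A_{(\ell',j)}$. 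The deletion is needed exactly when $\ell'$ precedes $\ell$ in the picking order.

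The main obstacle is that summing these per-agent bounds over an arbitrary matching $\pi$ would delete up to $n$ goods (one per ``backward'' pair $\pi(\ell)<\ell$), whereas EF1 allows one. Per-agent values also differ within $C_i$, so I cannot simply compare aggregate item sequences. To get around this, I will not compare $a_{(\ell,i)}$ with $A_{(\pi(\ell),j)}$ directly. Instead I will build a global injection from $B_j\setminus\{g^\ast\}$ into the items of $C_i$, where $g^\ast$ is the very first good picked by center $C_j$ (relevant when $j<i$; for $i<j$ I expect no deletion to be needed).

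The construction pairs the items in pick order. Each item of $C_j$ other than $g^\ast$ is matched to an item of $C_i$ that was picked no later. The matching must be arranged so that, along every cycle of $\pi$, the item received by $a_{(\ell,i)}$ is one she herself picked earlier. Concretely, I would decompose $\pi$ into cycles and, along each cycle, shift the ``one big round earlier'' charges so that the only uncovered item in the whole chain is the first item of small round $1$ of big round $1$. That item is picked by $C_j$ only, and removing it absorbs the single deficit. Making this cyclic charging argument precise, and checking that it does not double-use any good of $C_i$, is the step I expect to require the most care.
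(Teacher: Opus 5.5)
There is a genuine gap. Your reduction of (i) to (ii) is logically sound, but the statement you reduce to is false. For plain \textsc{HRR} with a fixed, arbitrary order inside each center, the output need not be EF1 among centers under \BBR valuations. This fails even from the point of view of the center that picks earlier.

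Here is a counterexample. Take two centers with $n=3$ and goods $g_1,\dots,g_6$, and let all agents of $C_2$ strictly rank $g_1\succ\cdots\succ g_6$. In $C_1$, let the valuations be:
\begin{itemize}
\item $a_{(1,1)}$ values $g_p$ at $(7-p)\epsilon$;
\item $a_{(2,1)}$ values $g_1,g_2$ at $1$ and $g_3,\dots,g_6$ at decreasing values below $\epsilon$;
\item $a_{(3,1)}$ values $g_1,\dots,g_4$ at $1$ and $g_5,g_6$ at decreasing values below $\epsilon$.
\end{itemize}
These three agents share the common weak ranking $g_1\succeq\cdots\succeq g_6$. \textsc{HRR} gives $A_{(\ell,1)}=\{g_{2\ell-1}\}$ and $A_{(\ell,2)}=\{g_{2\ell}\}$, so the realized value of $C_1$ is $O(\epsilon)$. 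On the other hand, handing $\{g_2\}$ to $a_{(2,1)}$ and $\{g_4\}$ to $a_{(3,1)}$ shows $v_{C_1}(B_2\setminus g)\ge 1$ for every $g$.

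So your expectation that no deletion is needed when $i<j$ fails, and no careful cyclic charging can rescue it. Your charging needs each $a_{(\ell,i)}$ to receive only goods dominated by her own picks. Here $a_{(2,1)}$ must receive $g_2$, which precedes every good she holds. The obstacle you identified is real whenever the assignment of bundles to agents is fixed: one deletion per backward pair, with agent-specific values preventing telescoping. The same instance satisfies EF1 under \BBP: give $g_1$ to $a_{(2,1)}$ and $g_3$ to $a_{(3,1)}$.

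The missing idea is to let the center re-permute its own bundles.

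For (i) this comes for free. We have $v_{C_i}(B_i)\ge\sum_\ell v_{a_{(\ell,i)}}(A_{(\pi(\ell),i)})$ for the permutation $\pi$ that is optimal for $B_j$, possibly composed with a cyclic shift of indices. You then only need bundle-versus-bundle domination that holds for every agent of the center simultaneously, and the common ranking gives exactly that:
\begin{itemize}
\item when $C_i$ precedes $C_j$, $A_{(x,i)}$ dominates $A_{(x,j)}$ item by item, so no deletion is needed;
\item in the other direction, $A_{(x,j)}$ dominates $A_{(x+1,i)}$, and $A_{(n,j)}$ dominates $A_{(1,i)}\setminus g^{(1,i)}_1$, so a single cyclic shift costs exactly one deletion.
\end{itemize}

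For (ii) you must change the algorithm, as the paper does. After \textsc{HRR}, reassign the bundles inside each center by a maximum-weight matching. Because agents of a center share a ranking, this is the same as running \textsc{HRR} with a different intra-center order, so inter-EF1 is preserved. Realized value then coincides with potential value, which reduces (ii) to (i).
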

\begin{proof}
    We prove the theorem separately for the two classes of valuations: (i) for \BBP valuations, and (ii) for \BBR valuations.
\begin{enumerate}
    \item[(i)] For \BBP valuations, we show that the \textsc{HRR} algorithm computes an $\frac{\text{EF1}}{\text{inter-EF1}}$ allocation. Due to Observation~\ref{lemma:HRR_interEF1}, it follows directly that the algorithm runs in polynomial time and the final allocation is inter-EF1 for the agents. Therefore, it remains to show that the resulting bundling profile $\mathcal{B}$ is EF1 for the centers.

    Let $\mathcal{B} = (B_1,\dots,B_k)$ be the computed bundling profile, where
    $B_j = \bigl(A_{(1,j)}, A_{(2,j)}, \dots, A_{(n,j)}\bigr)$ is the bundling assigned to center $C_j$.
    
    Below, for a given center $C_j$ we denote by $\succ_j$ the common ranking of agents in $C_j$.
    Thus, $g \succeq_j g'$ (resp., $g \succ_j g'$) means that \emph{every} agent in $C_j$ weakly (resp., strictly) prefers the good $g$ to $g'$.
    
    W.l.o.g., fix two centers $C_i$ and $C_j$ such that $C_i$ precedes $C_j$ in the ordering used by the algorithm. We first show that center $C_i$ does not envy $C_j$, i.e., $v_i(B_i) \geq v_i(B_j)$.
    By construction, in every small round of the algorithm, center $C_i$ selects a good before center $C_j$.
    Suppose the algorithm terminates after $\rho$ big rounds.
    Then either both centers have the same number of agents holding bundles of size $\rho$, or, since $C_i$ precedes $C_j$, center $C_i$ has exactly one more such agent. 
    
    Consequently, for any $x \in [n]$, $A_{(x,i)}$ and $A_{(x,j)}$ satisfy one of the following:
    \begin{itemize}
        \item both bundles contain $\rho-1$ items.
        \item both bundles contain $\rho$ items.
        \item $A_{(x,i)}$ contains $\rho$ items, while $A_{(x,j)}$ contains $\rho-1$ items.
    \end{itemize}
     In all cases, we have $|A_{(x,i)}| \geq |A_{(x,j)}|$. Let $\rho_x = |A_{(x,j)}|$.
    For $r \in [\rho_x]$, let $g_r^{(x,i)}$ and $g_r^{(x,j)}$ denote the items added in the $r$-th big round to bundles $A_{(x,i)}$ and $A_{(x,j)}$, respectively. Since $C_i$ precedes $C_j$ in the ordering, it holds that for all $\ell \in [n]$ and $x \in [n]$
    \begin{align} \label{eq:common-rank:eq1}
        g_{r}^{(x,i)} \succeq_i g_{r}^{(x,j)}.
    \end{align}
    By additivity of the agents' valuations and since $|A_{(x,i)}| \geq |A_{(x,j)}|,$ \eqref{eq:common-rank:eq1} implies that for all $\ell \in [n]$ and $x \in [n]$:
    \begin{equation} \label{eq:common-rank:eq2}
        v_{a_{(\ell,i)}}\bigl(A_{(x,i)}\bigr) \geq v_{a_{(\ell,i)}}\bigl(A_{(x,j)}\big).
    \end{equation}
    
    Now, let $\pi\in\Pi_n$ be a permutation of the bundles of $B_j$ that maximizes the social welfare of the agents in $C_i$. Consider the same permutation $\pi$, but now applied on the bundling $B_i$. Then, 
    \begin{align*}
    v_{C_i}(B_i) &= \max_{\pi' \in \Pi_n} \sum_{\ell=1}^{n} v_{a_{(\ell,i)}}\bigl(A_{(\pi'(\ell), i)}\bigr)\\ \nonumber
    &\geq \sum_{\ell=1}^{n}v_{a_{(\ell,i)}}\bigl(A_{(\pi(\ell),i)} \bigr)\\ \nonumber
    &\geq \sum_{\ell=1}^{n}v_{a_{(\ell,i)}}\bigl(A_{(\pi(\ell),j)} \bigr)\\
    &=v_{C_i}(B_j),
    \end{align*}
    where the first inequality holds because $\pi$ is a feasible (though not necessarily the optimal) permutation of the bundles of $B_i$ w.r.t. $C_i$, and the second follows from \eqref{eq:common-rank:eq2}.
    Thus, center $C_i$ does not envy center $C_j$.

    We proceed to show that center $C_j$ is EF1 w.r.t. $C_i$.
    Observe that, for each $x \in [n-1]$, a good is added to bundle $A_{(x,j)}$ before it is added to bundle $A_{(x+1,i)}$.
    Moreover, for any $x \in [n-1]$, bundles $A_{(x,j)}$ and $A_{(x+1,i)}$ satisfy one of the following: 
    \begin{itemize}
        \item both bundles contain $\rho-1$ items.
        \item both bundles contain $\rho$ items.
        \item $A_{(x,j)}$ contains $\rho$ items, while $A_{(x+1,i)}$ contains $\rho-1$ items. 
    \end{itemize}
    In all cases, we have $|A_{(x,j)}| \geq |A_{(x+1,i)}|$.
    Let $\rho_x = |A_{(x+1,i)}|$.
    Since agents in $C_j$ share a common ranking over the items, it follows that for all $r \in [\rho_x]$ and $x \in [n-1]$:
    \begin{equation} \label{eq:common-rank:eq3}
        g_{r}^{(x,j)} \succeq_i g_{r}^{(x+1,i)}.
    \end{equation}
    By additivity and since $|A_{(x,j)}| \geq |A_{(x+1,i)}|$, \eqref{eq:common-rank:eq3} implies that for all $\ell \in [n]$ and $x \in [n-1]$:
    \begin{equation} \label{eq:common-rank:eq4}
        v_{a_{(\ell,j)}}\bigl(A_{(x,j)}\bigr) \geq v_{a_{(\ell,j)}}\bigl(A_{(x+1,i)}\bigr).
    \end{equation}
    
    Finally, note that agent ${a_{(n,j)}}$ is the last agent in $C_j$ to be allocated an item.
    Thus, each good in $A_{(n,j)}$ is weakly preferred to the corresponding good in $A_{(1,i)}$ allocated in the next big round, except possibly the first item $g_1^{(1,i)}$ assigned to $a_{(1,i)}$.
    Therefore, for all $\ell \in [n]$ we have
    \begin{equation} \label{eq:common-rank:eq5}
        v_{a_{(\ell,j)}}\bigl(A_{(n,j)}\bigr)
        \geq v_{a_{(\ell,j)}}\bigl(A_{(1,i)}\bigr) - v_{a_{(\ell,j)}}\bigl(g_1^{(1,i)}\bigr).
    \end{equation}

    Now, let $\sigma \in \Pi_n$ be a permutation of the bundles of $B_i$ that maximizes the social welfare of the agents in $C_j$. Consider the same permutation $\sigma$ but now applied on $B_j$.
    Then,
    \begin{align*}
    v_{C_j}(B_j)
    &= \max_{\pi' \in \Pi_n} \sum_{\ell=1}^n v_{a_{(\ell,j)}}\!\bigl(A_{(\pi'(\ell),j)}\bigr) \nonumber \\
    &\geq \sum_{\ell=1}^n v_{a_{(\ell,j)}}\!\bigl(A_{(\sigma(\ell),j)}\bigr) \nonumber \\
    &\geq \sum_{\ell=1}^n v_{a_{(\ell,j)}}\!\bigl(A_{(\sigma(\ell),i)}\bigr)
       - v_{a_{(n,j)}}(g_1^{(1,i)}) \nonumber \\
    &= v_{C_j}\bigl(B_i \setminus g_1^{(1,i)}\bigr),
    \end{align*}
    where the first inequality holds because $\sigma$ is a feasible (though not necessarily the optimal) permutation of the bundles of $B_j$ w.r.t. $C_j$, and the second follows from (\ref{eq:common-rank:eq4}) and (\ref{eq:common-rank:eq5}). Thus, there exists a good whose removal makes $C_j$ non-envious of $C_i$. Hence center $C_j$ satisfies the EF1 condition w.r.t. center $C_i$, which completes the proof.

    \item[(ii)] 
    For \BBR valuations, we augment the \textsc{HRR} algorithm by adding the following step, just before the return statement. For each center \(C_j\), we compute a permutation of the bundles of \(B_j\) that maximizes the social welfare of the agents in \(C_j\), and assign them to the agents according to this permutation. That is, if \(\pi \in \Pi_n\) is such a permutation, then agent \(a_{(i,j)}\) receives the bundle \(A_{(\pi(i),j)}\), for all \(i \in [n]\) and all \(j \in [k]\). Such a permutation can be computed in polynomial time using a maximum matching algorithm. We prove that this modified version of the \textsc{HRR} algorithm computes an $\frac{\text{EF1}}{\text{inter-EF1}}$ allocation for \BBR valuations.

    First, observe that permuting the bundles among the agents within a center $C_j$ according to some permutation $\pi \in \Pi_n$ is equivalent to having fixed a different ordering of the agents within this center in the first step of HRR, namely the ordering induced by $\pi$. 
    Observe, such a permutation can be computed in polynomial time via a maximum weight matching in a bipartite graph where in one side we have the agents, in the other side the constructed bundles, and the weight of each edge is the value of the agent for the corresponding bundle.
    Since the order in which the agents of each center are visited during the execution of HRR is arbitrary, Observation~\ref{lemma:HRR_interEF1} continues to hold under this modification.
    
    Finally, if $\pi$ is the permutation of the bundles of the bundling $B_j$ that maximizes the social welfare of the agents in $C_j$, and the agents are assigned their bundles in the final step according to $\pi$, then the \BBR valuation coincides with the \BBP valuation. Therefore, by (i), the claim follows.
\end{enumerate}
\end{proof}

\begin{theorem}[item-based potential valuations]\label{thm:common-rank:thm2}
When all agents within each center (but not necessarily across centers) agree on the ranking of the items, HRR computes an $\frac{\text{EF1}}{\text{inter-EF1}}$ allocation 
in polynomial time, under \IBP valuations.
\end{theorem}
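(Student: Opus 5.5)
By Observation~\ref{lemma:HRR_interEF1}, the output of HRR is inter-EF1 and is computed in polynomial time. So the only thing left to prove is EF1 among the centers under \IBP valuations. By Lemma~\ref{lemma:additivity-ibp}, the \IBP valuation of center $C_i$ is additive over items: it equals $\sum_{g\in G} \max_{\ell\in[n]} v_{a_{(\ell,i)}}(g)$ for any set $G$, since the welfare-maximizing redistribution gives each item to an agent who values it most. Write $w_i(g)=\max_\ell v_{a_{(\ell,i)}}(g)$ and $G_j=\bigcup_x A_{(x,j)}$. Center envy then reduces to comparing $w_i(G_i)$ with $w_i(G_j)$, and the bundling structure inside $G_j$ no longer matters.

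The key step is to order items under the common ranking $\succ_i$ of center $C_i$. Every agent of $C_i$ ranks items by $\succ_i$. Suppose $g\succeq_i g'$. For each agent $\ell$ of $C_i$ we have $v_{a_{(\ell,i)}}(g)\ge v_{a_{(\ell,i)}}(g')$, and taking the maximum over $\ell$ gives $w_i(g)\ge w_i(g')$. Hence $w_i$ is consistent with $\succ_i$, and it suffices to compare the multisets $G_i$ and $G_j$ item by item in the order in which they were picked.

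Now list the picks of $C_i$ in HRR order as $p_1,p_2,\dots$, and those of $C_j$ as $q_1,q_2,\dots$. Picks alternate, because within each small round every center picks at most once, in the fixed center order. Each agent of $C_i$ picks her $\succ_i$-top remaining item, so each $p_t$ is $\succ_i$-maximal among the items still available at that time.

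Consider first the case where $C_i$ precedes $C_j$. Then $p_t$ is picked before $q_t$ in the same small round, so $q_t$ was still available and $p_t\succeq_i q_t$. Also $|G_i|\ge|G_j|$, so $w_i(G_i)\ge w_i(G_j)$ and there is no envy.

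Next, suppose $C_j$ precedes $C_i$. Then $q_{t+1}$ is picked after $p_t$, so $p_t\succeq_i q_{t+1}$, and the counts satisfy $|G_j|\le |G_i|+1$. Therefore $w_i(G_i)\ge w_i(G_j\setminus\{q_1\})$, which is EF1 after removing $q_1$ from the bundle of $C_j$ that contains it.

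The main obstacle is the monotonicity of $w_i$ under $\succ_i$. It works only because all agents of $C_i$ share the ranking, so that each pick is top for every agent and hence top for the maximum. Care is also needed with ties and with the final partial round, where the bound on the counts must be checked. Termination in the middle of a round only shortens the later center's list, so the bounds still hold.
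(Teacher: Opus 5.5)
Your proof is correct and takes the same route as the paper. Both arguments note that, because a center's agents share a ranking, each pick by one of its agents is a top remaining item for the center's \IBP valuation, so HRR acts as a Round-Robin among the centers; the paper simply cites Round-Robin's EF1 guarantee, while you unpack it through Lemma~\ref{lemma:additivity-ibp} and the usual pick-by-pick comparison. As you suspected, both proofs tacitly need ties broken consistently with the common ranking: an agent indifferent among items could otherwise take one her co-members value less, and EF1 among centers can then fail, so this should be stated as an assumption rather than left as a caveat.
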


\begin{proof}
Similarly to the proof of Theorem~\ref{thm:common-rank:thm1}, by Observation~\ref{lemma:HRR_interEF1}, the algorithm runs in polynomial time and the final allocation is inter-EF1 for the agents. Therefore, it remains to show that the resulting bundling profile $\mathcal{B}$ is EF1 for the centers.

    Fix a center $C_i$. Since all agents in $C_i$ agree on the ranking of the goods with respect to their value, whenever an agent picks a good during a small round, that good is essentially of maximum value for the center among all remaining goods. 
    Although the selection is made by an individual agent, it can be viewed as the center itself choosing its most preferred good from the unallocated ones. Note that due to the algorithm, each center is visited exactly once during a small round. Moreover, since each center’s valuation depends only on the set of goods it receives, and not on how those goods are distributed among its agents, this process results in the same final allocation to the centers as a Round-Robin procedure in which the centers themselves act as agents and, when visited, pick their favorite available good. As a result, the final allocation is EF1 for the centers as well. This concludes the proof.
\end{proof}

\subsubsection{Identical Valuations} We now turn to the case of identical valuations within each center. Before presenting our results, we make the following useful observation regarding the center valuation variants in this setting:

\begin{observation}\label{obs:identical-all-coincide}
    Under identical valuations of the agents, for any given bundling, the \BBP, \IBP, \BBR, and \IBR valuations all coincide. This holds because any mapping of items and any permutation of the bundles of a given bundling to the agents of a center yield the same value for the center.
\end{observation}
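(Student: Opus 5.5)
We prove Observation~\ref{obs:identical-all-coincide} by showing that, when all agents of a center $C_i$ share one additive valuation $v$, each of the four center valuations equals the total value $v\bigl(\bigcup_{\ell\in[n]} A_{(\ell,j)}\bigr)$ of the goods in the bundling being evaluated. We first treat the evaluation of an arbitrary bundling $B_j=(A_{(1,j)},\dots,A_{(n,j)})$, possibly with $j=i$, under the two potential values. We then treat the two realized values, which only concern $j=i$.

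\textbf{Bundle-based potential value.} For every permutation $\pi\in\Pi_n$ we have
\[\sum_{\ell=1}^n v_{a_{(\ell,i)}}\bigl(A_{(\pi(\ell),j)}\bigr)=\sum_{\ell=1}^n v\bigl(A_{(\pi(\ell),j)}\bigr)=\sum_{x=1}^n v\bigl(A_{(x,j)}\bigr)=v\Bigl(\bigcup_{x} A_{(x,j)}\Bigr).\]
The second equality holds because $\pi$ is a bijection, and the third uses additivity together with the disjointness of the bundles. Since every permutation gives the same value, the maximum does too.

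\textbf{Item-based potential value.} Any redistribution of $S=\bigcup_x A_{(x,j)}$ into $n$ disjoint bundles $D_1,\dots,D_n$ whose union is $S$ gives welfare $\sum_\ell v(D_\ell)=v(S)$, again by additivity and disjointness. So the maximum welfare over redistributions is $v(S)$. This agrees with Lemma~\ref{lemma:additivity-ibp}, which in this case gives the additive function $v$ itself.

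\textbf{Realized values for the center's own bundling.} Under both the bundle-based and item-based realized variants, $v_{C_i}(B_i)=\sum_\ell v(A_{(\ell,i)})=v\bigl(\bigcup_\ell A_{(\ell,i)}\bigr)$. This equals the common potential value computed above.

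Hence, for every bundling and every center, all four definitions return $v$ applied to the union of the bundles, and so they coincide. This argument does not depend on whether the valuations are identical across centers; only identical valuations within the evaluating center are needed.

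The only point that needs care is that the claim relies on additivity of the agents' valuations. For general monotone identical valuations, the bundle-based values would still coincide, but an item-based redistribution could yield more than the realized or bundle-based value. We therefore state explicitly that the standing assumption of additive valuations is being used.
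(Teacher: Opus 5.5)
Your proof is correct and follows the paper's own one-line reasoning. With a common additive valuation $v$ in the evaluating center, every permutation of the bundles and every redistribution of the items yields the same welfare $v\bigl(\bigcup_{\ell} A_{(\ell,j)}\bigr)$, and you simply verify this variant by variant. Your caveat that additivity is genuinely needed for the item-based variants is also accurate, since additivity is the paper's standing assumption. It is worth remembering because Theorem~\ref{thm:identical-all-efx} claims its guarantee for all variants under monotone, not necessarily additive, valuations, where the item-based potential value need not coincide with the others.
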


Since this is a special case of ordered valuations, all results established above continue to apply. In addition to these results, the restriction imposed by identical valuations allows us to obtain $\frac{\text{EF1}}{\text{inter-EF1}}$ allocations under \IBR valuations. The following corollary summarizes these.

\begin{corollary}\label{corollary2}
    When all agents have identical valuations within each center (but not necessarily across centers), the HRR algorithm computes an $\frac{\text{EF1}}{\text{inter-EF1}}$ 
    in polynomial time, under any variant of center valuations discussed.
\end{corollary}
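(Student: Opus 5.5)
The plan is to reduce everything to the results already proved for ordered valuations, via Observation~\ref{obs:identical-all-coincide}. Identical valuations within each center are a special case of a common ranking within each center: if all agents of $C_j$ share the additive valuation $v^{(j)}$, then ordering the goods by $v^{(j)}$ gives a ranking on which every agent of $C_j$ agrees. So the hypotheses of Theorems~\ref{thm:common-rank:thm1} and~\ref{thm:common-rank:thm2} hold.

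First, for \BBP and \IBP valuations, I invoke Theorem~\ref{thm:common-rank:thm1}(i) and Theorem~\ref{thm:common-rank:thm2} directly. Both are statements about plain HRR, so the output is $\frac{\text{EF1}}{\text{inter-EF1}}$ and is computed in polynomial time; inter-EF1 and the running time also follow from Observation~\ref{lemma:HRR_interEF1}.

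For the two realized variants I work with the plain HRR output $\mathcal{B}$; the matching post-processing step of Theorem~\ref{thm:common-rank:thm1}(ii) is not needed. By Observation~\ref{obs:identical-all-coincide}, for every center $C_i$ and every bundling $B$, including $B_i$, $B_j$ and $B_j\setminus g$, the values $v_{C_i}(B)$ under \BBR and \IBR both equal the value under \BBP (equivalently under \IBP). Concretely, all four equal $v^{(i)}$ applied to the union of the bundles of $B$: every matching of bundles to agents, and every redistribution of the items, gives the same total because the agents' valuations are identical and additive. Hence each EF1 inequality $v_{C_i}(B_i)\ge v_{C_i}(B_j\setminus g)$ for the centers is literally the same inequality under all four variants. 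It is already guaranteed by part~(i), so EF1 for the centers holds under \BBR and \IBR as well.

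The only point needing care is that Observation~\ref{obs:identical-all-coincide} must be applied to the center's own bundling as it is actually allocated. The realized value uses the fixed assignment of bundles to agents, and it is identical agents that make this assignment irrelevant. I would spell this out: $\sum_\ell v^{(i)}(A_{(\ell,i)})=v^{(i)}(\bigcup_\ell A_{(\ell,i)})$, which is also the maximum over permutations and over redistributions. I expect no real obstacle beyond this; the corollary is a direct consequence of the earlier theorems and the observation.
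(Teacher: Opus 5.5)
Your proof is correct and follows essentially the same route as the paper. Inter-EF1 and the running time come from Observation~\ref{lemma:HRR_interEF1}, center EF1 for \BBP and \IBP comes from Theorems~\ref{thm:common-rank:thm1}(i) and~\ref{thm:common-rank:thm2}, and \IBR is handled via Observation~\ref{obs:identical-all-coincide}. The only difference is that the paper cites Theorem~\ref{thm:common-rank:thm1} for \BBR, whose part (ii) is about HRR with the matching post-processing step, whereas you also derive \BBR from the coincidence observation; this fits the corollary's claim about plain HRR slightly more cleanly.
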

\begin{proof}
    Due to Observation~\ref{lemma:HRR_interEF1}, the algorithm runs in polynomial time and the final allocation is inter-EF1 for the agents. It remains to establish that the resulting bundling profile $\mathcal{B}$ is EF1 for the centers under each definition of center valuation.

    Under \BBP, \BBR and \IBP valuations, the claim follows immediately from Theorem~\ref{thm:common-rank:thm1} (for the first two) and Theorem~\ref{thm:common-rank:thm2}, respectively. Finally, due to Observation~\ref{obs:identical-all-coincide}, the claim holds for \IBR valuations as well.
\end{proof}

If we strengthen our assumptions to instances where all agents, \emph{across} all centers, share identical and monotone (not necessarily additive) valuations, we can achieve an even stronger guarantee. Specifically, we can establish the existence of an $\frac{\text{EF1}}{\text{inter-EFX}}$ allocation. To achieve this, the algorithm, which we refer to as \textsc{EFX-Partition Round-Robin}, proceeds as follows: 
\medskip

\noindent
\textsc{\textbf{EFX-Partition Round-Robin}}. First, we temporarily ignore the centers and treat the instance as a standard fair division problem with $n \cdot k$ identical agents. We run the algorithm by~\cite{plaut2020almost} (Algorithm 1) for identical agents to partition the goods into $n \cdot k$ bundles such that the resulting allocation is EFX. Then, the centers take turns picking bundles from this set in a standard round-robin fashion, picking the available bundle that gives the highest value, until all bundles are claimed. Finally, each center arbitrarily assigns the $n$ bundles it collected to its agents, assigning each agent a bundle. The algorithm is presented as Algorithm~\ref{alg:efx_rr}.

\begin{algorithm}[h]
\caption{\textsc{EFX-Partition Round-Robin}}
\label{alg:efx_rr}
\DontPrintSemicolon

Run the algorithm by~\cite{plaut2020almost} on $M$ for $n \cdot k$ identical agents (with valuation $v$) to partition $M$ into a set $S$ of $n \cdot k$ EFX bundles\;

Set $B_j \leftarrow \emptyset$ for every $j \in [k]$\;

Fix an ordering $C_1, \dots, C_k$ of the centers\;

\For{$j = 1$ \textbf{to} $k$}{
    Let $b^\ast \in \arg\max_{b \in S} v(b)$\;
    $B_j \gets B_j \cup \{b^\ast\}$\;
    $S \gets S \setminus \{b^\ast\}$\;
    
}

\For{$j = 1$ \textbf{to} $k$}{
    Arbitrarily assign the $n$ bundles in $B_j$ one-to-one to the $n$ agents in center $C_j$, forming $A_{(1,j)}, A_{(2,j)}, \ldots, A_{(n,j)}$\;
}

\Return $\mathcal{B} = (B_1, \ldots, B_k)$, where $B_j = (A_{(1,j)}, A_{(2,j)}, \ldots, A_{(n,j)})$\;
\end{algorithm}

\begin{theorem}\label{thm:identical-all-efx}
When all agents across all centers have identical and monotone (not necessarily additive) valuations, the \textsc{EFX-Partition Round-Robin} algorithm computes an $\frac{\text{EF1}}{\text{inter-EFX}}$ allocation, under any variant of center valuations discussed.
\end{theorem}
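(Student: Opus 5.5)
Inter-EFX for the agents is immediate. All $nk$ agents share the same valuation $v$. The algorithm of \cite{plaut2020almost} returns a partition $S$ of $M$ into $nk$ bundles that is EFX for $nk$ agents with valuation $v$: for any two bundles $X,Y\in S$ and any $g\in Y$, $v(X)\ge v(Y\setminus g)$. Since every agent receives exactly one bundle of $S$ and values everything by $v$, the allocation is inter-EFX, and reassigning bundles afterwards cannot break this.

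For the centers, I first collapse the four variants into one. By the argument of Observation~\ref{obs:identical-all-coincide}, which uses only that the agents are identical and not additivity, every variant gives $v_{C_i}(B)=\sum_{\ell} v(A_\ell)$ for the bundle-based notions and for realized value. The \IBP/\IBR value of another center's bundling is $\max$ over $n$-partitions $(P_1,\dots,P_n)$ of the item set of $\sum_\ell v(P_\ell)$. This is not simply $\sum_\ell v(A_{(\ell,j)})$ for non-additive $v$, so I will treat the bundle-based case first and then return to the item-based case.

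Bundle-based case. The bundle-picking phase is a round robin over the $nk$ bundles of $S$ with identical additive-over-bundles center utilities $\sum v(\cdot)$. (The loop as written only shows one pass; the intended process is $n$ passes.) Let $b^{(r)}_i$ be the bundle picked by $C_i$ in pass $r$, so $v(b^{(1)}_i)\ge\dots$ in the global pick order. Then:
\begin{itemize}
\item If $C_i$ precedes $C_j$, then $v(b^{(r)}_i)\ge v(b^{(r)}_j)$ for every $r$, so $C_i$ does not envy $C_j$.
\item If $C_j$ precedes $C_i$, the usual shifting argument gives $v(b^{(r)}_i)\ge v(b^{(r+1)}_j)$. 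Hence $v_{C_i}(B_i)\ge v_{C_i}(B_j)-v(b^{(1)}_j)$.
\end{itemize}
Removing the whole bundle $b^{(1)}_j$ is not allowed; only one good may be removed. This is where EFX of $S$ enters: for any $g\in b^{(1)}_j$, $v(b^{(n)}_i)\ge v(b^{(1)}_j\setminus g)$. So I pair instead as follows: $b^{(r)}_i$ with $b^{(r+1)}_j$ for $r<n$, and $b^{(n)}_i$ with $b^{(1)}_j\setminus g$. Summing gives $v_{C_i}(B_i)\ge v_{C_i}(B_j\setminus g)$, which is EF1 (indeed EFX) for the centers.

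Item-based case. This is the main obstacle, because $C_i$ may repartition the items of $B_j\setminus g$. If $v$ is additive, the repartition value is just $v(\bigcup B_j)$, which equals the bundle sum, and the argument above applies verbatim. For general monotone $v$, I would first check that the observation's claim of coinciding valuations is exactly what the paper intends to invoke. If so, I would simply cite Observation~\ref{obs:identical-all-coincide} (stated for identical valuations) to reduce to the bundle-based computation. If that reduction fails for non-additive $v$, I would restrict the item-based claim to additive valuations.
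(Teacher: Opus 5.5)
Your bundle-based argument is the paper's proof. An earlier center dominates a later one pass by pass. A later center is compared after shifting by one pass, and EFX of the $nk$-partition absorbs the earlier center's first bundle minus one good. The paper takes $g^\ast$ of minimum singleton value, but as you note any $g\in b^{(1)}_j$ works. You are also right that Algorithm~\ref{alg:efx_rr} as printed makes only one pass, whereas the proof assumes $C_i$ receives $b_i,b_{i+k},\dots,b_{i+(n-1)k}$. Drop the parenthetical ``indeed EFX''. Your pairing only covers $g\in b^{(1)}_j$; for $g$ in a later bundle of $B_j$, nothing compensates for the full $b^{(1)}_j$. Center-EFX is in fact unattainable here: by Theorem~\ref{thm:impossibility-identical-efx} it fails even together with inter-EF1, let alone inter-EFX.

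On the item-based variants, your doubt is justified, and the first branch of your plan (citing Observation~\ref{obs:identical-all-coincide}) fails. That observation is justified by ``any mapping of items yields the same value'', which needs additivity, the paper's standing assumption. The paper's proof silently writes $v_{C_i}(B_j)=\sum_{x} v(b_{j+xk})$ for every variant, which is exactly the step you flagged. The conclusion itself fails, not just this step.

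Take $k=n=2$, items $T=\{x_1,\dots,x_4\}$ and $y_1,\dots,y_4$, and the monotone valuation $v(S)=100$ if $|S\cap T|\ge 3$ and $v(S)=|S|$ otherwise. Any 4-partition that is not a perfect pairing has a bundle of value at most $1$. Hence the perfect pairings are exactly the leximin++ partitions, and they are EFX. Choose $\{x_1,x_2\},\{y_1,y_2\},\{x_3,x_4\},\{y_3,y_4\}$, all of value $2$, and let the arbitrary tie-breaking in the picking phase give both $x$-pairs to $C_1$. Then $C_2$ values its own bundling at $4$ under both item-based variants (realized and potential). But for every $g$ it can hand the three remaining $x$'s of $C_1$ to a single agent, so $v_{C_2}(B_1\setminus g)=100$ and EF1 fails. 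The bundle-based value would be $1+2=3$, consistent with your argument.

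So this case cannot be closed, and your fallback is the correct resolution. The result holds for all four variants when $v$ is additive, but for general monotone $v$ only for the two bundle-based variants.
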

\begin{proof}
    First, we establish that the final allocation is inter-EFX for the agents. Since all agents (across all centers) have identical valuations, and the initial set of $n \cdot k$ bundles is generated to be EFX w.r.t. each other under the shared valuation function, the inter-EFX guarantee holds.

Next, we evaluate the fairness among the centers. Let $v(\cdot)$ be the shared valuation function of the agents. Consider two centers $C_i$ and $C_j$, the $i$-th and $j$-th in the Round-Robin order respectively, and assume w.l.o.g. that $i < j$. Order the $n \cdot k$ bundles in the set $S$ in descending order of their value under $v(\cdot)$ denoting them as $b_1, b_2, \dots, b_{n \cdot k}$. Note that some bundles in this set might be empty if the total number of goods is less than $n \cdot k$. Then, $C_i$ receives the bundles $b_i, b_{i+k}, \dots, b_{i+(n-1)k}$, while $C_j$ receives the bundles $b_j, b_{j+k}, \dots, b_{j+(n-1)k}$.

We will show that $C_i$ does not envy $j$. Specifically, since $C_i$ precedes $C_j$ in every round, for all $x \in \{0, 1, \dots, n-1\}$, it holds that $v(b_{i+xk}) \ge v(b_{j+xk})$. Therefore, we have:
$$v_{C_i}(B_i) = \sum_{x=0}^{n-1} v(b_{i+xk}) \ge \sum_{x=0}^{n-1} v(b_{j+xk}) = v_{C_i}(B_j).$$
Thus,  center $i$ is not envious of $j$.

Finally, we will show that $C_j$ is EF1 w.r.t. $C_i$. Because $C_j$ picks its $x$-th bundle before $C_i$ picks its $(x+1)$-th bundle, we have $v(b_{j+xk}) \ge v(b_{i+(x+1)k})$ for all $x \in \{0, \dots, n-2\}$. Furthermore, because all bundles are mutually EFX, we have that $v(b_{j+(n-1)k}) \ge v(b_i \setminus \{g^\ast\})$, where $g^\ast \in \arg\min_{g \in b_i} v(g)$. Summing these inequalities, we obtain:
$$v_{C_j}(B_j) = \sum_{x=0}^{n-1} v(b_{j+xk}) \ge v(b_i \setminus \{g^\ast\}) + \sum_{x=0}^{n-2} v(b_{i+(x+1)k}) = v_{C_j}(B_i \setminus \{g^\ast\}).$$
This completes the proof.
\end{proof}

While the preceding result establishes the existence of an $\frac{\text{EF1}}{\text{inter-EFX}}$ allocation, one might naturally wonder whether these agent-level and center-level guarantees can be inverted. As we demonstrate next, this is not the case. In fact, we show that an $\frac{\text{EFX}}{\text{inter-EF1}}$ allocation is not guaranteed to exist, even in the very restricted setting where agents are identical \emph{across} all centers.

\begin{theorem}\label{thm:impossibility-identical-efx}
For any variant of center valuations discussed, an $\frac{\text{EFX}}{\text{inter-EF1}}$ allocation may not always exist, even when all agents across all centers have identical  valuations. 
\end{theorem}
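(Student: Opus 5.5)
By Observation~\ref{obs:identical-all-coincide}, all four center-valuation variants coincide when the agents are identical: a center's value for any bundling is just the sum of the common valuation over its bundles. So a single counterexample with identical additive agents covers every variant at once. I will use the smallest non-trivial configuration: $k=2$ centers, each with $n=2$ agents (for $n=1$ the problem reduces to classical EFX for identical agents, which always exists). There are $m=4$ goods: one ``big'' good $h$ with $v(h)=3$ and three ``small'' goods $s_1,s_2,s_3$ with $v(s_t)=1$.

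The first step is to show that inter-EF1 forces every one of the four agents to receive exactly one good.
\begin{itemize}
\item Suppose some agent receives nothing. EF1 towards that agent then requires every other agent's bundle to have at most one good, since removing one good from a bundle of two or more leaves positive value, which the empty-handed agent envies.
\item The remaining three agents would then hold at most three goods, which is fewer than $4$. This is a contradiction, so every agent gets at least one good.
\item With four goods and four agents, every agent gets exactly one good.
\end{itemize}
Hence in any inter-EF1 allocation, one center, say $C_1$, holds $h$ and one small good, and the other center $C_2$ holds two small goods.

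The second step is to check center-level EFX for $C_2$ against $C_1$.
\begin{itemize}
\item We have $v_{C_2}(B_2)=2$.
\item Removing the small good from $C_1$'s bundling leaves $v_{C_2}(B_1\setminus s)=3$.
\item Since $2<3$, EFX fails, so no $\frac{\text{EFX}}{\text{inter-EF1}}$ allocation exists.
\end{itemize}
Any big value $v(h)>2$ works in this argument.

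The only delicate point is the counting argument in the first step, which pins down the structure of every inter-EF1 allocation. Once that is done, the EFX violation is immediate. I would also note explicitly that the removal in the EFX check is of a single good from one bundle, as in Definition~\ref{def:center-ef1-efx}, and that the case where the centers' roles are swapped is symmetric.
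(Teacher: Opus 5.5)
Your proof is correct. It uses the same instance as the paper: one good of value $H$, three of value $\epsilon$, and two centers of two agents each (you take $H=3$, $\epsilon=1$). The two constraints are applied in the opposite order: you first use inter-EF1 (the same pigeonhole argument the paper uses in its NP-hardness reduction) to force a one-to-one allocation and then refute center-EFX, whereas the paper first uses center-EFX to force the center holding the big good to receive nothing else and then refutes inter-EF1. Both orderings exhaust the same dichotomy, so this is essentially the paper's argument.
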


\begin{proof}
    Consider two centers $C_1$ and $C_2$, comprising agents $\{a_{(1,1)}, a_{(2,1)}\}$ and $\{a_{(1,2)}, a_{(2,2)}\}$, respectively, and a set of four items $M = \{g_1, g_2, g_3, g_4\}$. Assume that all agents have identical valuations over the items, and let $v_a(\cdot)$ denote their common valuation function. Let $v_a(g_1) = H$ and $v_a(g) = \epsilon$ for all $g \in M \setminus \{g_1\}$, where $H$ is a sufficiently large constant s.t. $H > 2\epsilon$ (see below).

\begin{center}
\small
\begin{tabular}{l|cccc}
\toprule
 & $g_1$ & $g_2$ & $g_3$ & $g_4$ \\
\midrule
$v_a(\cdot)$ & $H$ & $\epsilon$ & $\epsilon$ & $\epsilon$ \\
\bottomrule
\end{tabular}

\end{center}
Let $\mathcal{B} = (B_1, B_2)$ denote a bundling profile, where
$B_1 = (A_{(1,1)}, A_{(2,1)})$ and
$B_2 = (A_{(1,2)}, A_{(2,2)})$.
Let $G_1 = \bigcup_{\ell\in [n]} A_{(\ell,1)}$ and $G_2 = \bigcup_{\ell\in [n]} A_{(\ell,2)}$ denote the sets of items assigned to centers $C_1$ and $C_2$, respectively, under this bundling profile.
Due to Observation~\ref{obs:identical-all-coincide}, the arguments that follow apply to all variants among \BBP, \IBP, \BBR, and \IBR valuations.

Observe that for the EFX condition to hold for the centers, any allocation that assigns $g_1$ to a center must not assign any other items to that same center. To see this, suppose that $g_1 \in G_1$ and that $C_1$ receives at least one item from $\{g_2, g_3, g_4\}$. Then center $C_2$ receives at most two items from $\{g_2, g_3, g_4\}$, hence
$v_{C_2}(B_2) \leq 2\epsilon$.
However, we have that
$v_{C_2}(B_1) \geq H + \epsilon$.
Even after removing any small good from $G_1$, center $C_2$ continues to envy $C_1$, violating the EFX condition.

On the other hand, suppose that only $g_1$ is allocated to center $C_1$.
In this case, inter-EF1 is violated at the agent level. In $C_1$, there is one agent that receives nothing, who inevitably envies the agents in $C_2$. In particular, one agent in $C_2$ receives two items, and even after removing any single good from that agent’s bundle, the agent in $C_1$ with the empty bundle remains envious. 
\end{proof}

The above result comes in stark contrast to the standard setting, where EFX allocations always do exist for agents with identical valuations~\cite{plaut2020almost}. This result highlights the inherent difficulty of achieving stronger fairness guarantees in our setting. To complement this impossibility result, we conclude this section by showing that deciding whether such an allocation exists is computationally intractable, even in this restricted setting.

\begin{theorem}\label{hardness_theorem}
Deciding whether an instance admits an $\frac{EFX}{\text{inter-}EF1}$ allocation is NP-hard, even when all agents across all centers have identical and additive valuations.
\end{theorem}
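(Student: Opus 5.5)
We reduce from \textsc{Partition}: given positive integers $x_1,\dots,x_t$ with $\sum_r x_r = 2T$, decide whether some subset sums to exactly $T$. All agents share one additive valuation $v$. By Observation~\ref{obs:identical-all-coincide}, every center then values any bundling by the total value of its items under $v$. So center-level EFX reduces to EFX between two identical \emph{set} valuations, and agent-level inter-EF1 is EF1 among $2n$ identical additive agents. This makes the argument uniform across all four center valuation variants.

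\textbf{Construction.} We take $k=2$ centers, each with $n$ agents, and two kinds of goods. The first kind is the ``partition'' goods $g_1,\dots,g_t$ with $v(g_r)=x_r$. The second kind is a set of $2n$ identical ``filler'' goods $h_1,\dots,h_{2n}$ of value $L$ each, with $L$ much larger than $2T$. The partition goods are scaled so that every $x_r$ is strictly larger than any single $\epsilon$-slack that EFX permits. We may also add $\epsilon$-goods to kill the slack, for instance by multiplying all $x_r$ by a large factor.

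\textbf{Intended equivalence.} Inter-EF1 with $2n$ identical agents and exactly $2n$ huge fillers forces every agent to receive exactly one filler. If some agent had two fillers, an agent with no filler would value her own bundle at most $2T < L$, which is less than the other bundle minus any one good. Hence each center receives exactly $n$ fillers, and the fillers cancel in the center comparison.

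Center EFX then says that for the partition goods $P_1,P_2$ held by the two centers, $v(P_1)\ge v(P_2)+nL - v(g)$ for every good $g$ held by $C_2$, including fillers. Removing a filler gives no constraint, since it only helps the envious side. The binding removals are therefore the partition goods themselves. This is where I would modify the construction.

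\textbf{Main obstacle.} EFX allows removing the least valuable good, so the fillers would trivialize the center condition. To fix this, I would make the comparison depend on the smallest removable good by adding a few tiny goods of value $\delta>0$. These should be placed so that every EFX-feasible allocation must give each center at least one tiny good. Center EFX then reads $v(B_1)\ge v(B_2)-\delta$ and vice versa, which forces $|v(P_1)-v(P_2)|\le\delta$. With integer $x_r$ and $\delta<1$, this means $v(P_1)=v(P_2)=T$.

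Forcing the tiny goods onto both centers can be done as follows. Give exactly two tiny goods, and set $L$ so that any center lacking a tiny good still has at least one other item. The delicate step is checking that a center holding no tiny good does not break the argument; in that case its minimal removable item has value at least $\min_r x_r$, and I would handle it by a case analysis or by normalizing all $x_r$ to be large and close to each other. This is standard for \textsc{Partition}, since one can add a constant $K$ to each $x_r$ and require exactly $t/2$ items per side.

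\textbf{Converse direction.} If an equal partition exists, give each center $n$ fillers, one tiny good, and the goods of one half of the partition. Distribute these so that within each center every agent holds one filler, and the partition and tiny goods are spread so that all bundles lie within one good of each other. The resulting allocation is inter-EF1 and center-EFX because the two center totals are equal. The reduction is clearly polynomial.
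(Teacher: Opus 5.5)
Your reduction does not work, and the step you flag as delicate is where it fails. Once every agent holds one filler, inter-EF1 holds however the remaining goods are placed. The envious agent has value at least $L$, while any other bundle minus its filler is worth at most $2T+2\delta<L$. So the agent level imposes nothing beyond ``$n$ fillers per center.'' The center level is then EFX between two centers with the same additive set valuation, shifted by $nL$ on both sides, and this always has a solution. Split the partition and tiny goods EFX-wise as for two identical additive agents, e.g.\ hand them out in decreasing order of value to the currently poorer side, or use \cite{plaut2020almost}. Then add $n$ fillers to each center; removing a filler is never binding since $L>2T+2\delta$. Hence every instance you construct is a yes-instance, whatever the \textsc{Partition} instance. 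Concretely, take the no-instance $x=(1,3)$ under any scaling. Give $C_1$ its fillers and the good of value $3$, and give $C_2$ its fillers, the good of value $1$ and both tiny goods. This allocation is $\frac{\text{EFX}}{\text{inter-EF1}}$. Nothing in your design forces a tiny good onto each center, and nothing ``requires exactly $t/2$ items per side.'' So adding a constant $K$ to each $x_r$ does not help either. (Also, your displayed EFX inequality should not contain the $+nL$ term, since the fillers cancel.)

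The missing idea is that the agent level must impose a cardinality constraint. The paper gets it by taking exactly as many goods as agents, all of positive value. Then inter-EF1 forces a one-to-one assignment, so each center receives exactly $2m+2$ goods. The goods are two huge goods, $2m$ partition goods of value $10m\cdot s_i$, and $2m+2$ unit goods. EFX forces one huge good per center. Each center then has $2m+1$ non-huge slots but only $2m$ partition goods exist, so each center holds a unit good. EFX therefore forces $|v_C(B_1)-v_C(B_2)|\le 1$. The unit-good count difference $2(m-k)$ then forces $k=m$, and the $10m$ scaling forces equal sums. This is why the paper reduces from \textsc{Equal-Size Partition} rather than \textsc{Partition}.
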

\begin{proof}
    We reduce from the \textsc{Equal-Size Partition} (ESP) problem, which is known to be NP-hard~\cite{garey2002computers}. In an ESP instance, we are given a set of integers $S = \{s_1, s_2, \dots, s_{2m}\}$ such that $\sum_{i=1}^{2m} s_i = 2V$. The goal is to decide whether there exists a partition of $S$ into two subsets $X, Y \subset S$ such that $|X| = |Y| = m$ and $\sum_{s_i \in X} s_i = \sum_{s_i \in Y} s_i = V$.

Given an instance of ESP, we construct an instance of our problem as follows. Consider two centers $C_1$ and $C_2$, comprising agents $\{a_{(1,1)}, \dots, a_{(2m+2, 1)}\}$ and $\{a_{(1,2)}, \dots, a_{(2m+2, 2)}\}$, respectively. Thus, there are $4m+4$ agents in total. Let all agents across both centers have identical, additive valuations, denoted by their common valuation function $v_a(\cdot)$.

Let the set of goods $M$ consist of exactly $4m+4$ items, categorized as follows:
\begin{itemize}
    \item 2 \emph{huge} items $H_1, H_2$, with $v_a(H_1) = v_a(H_2) = 100mV$.
    \item $2m$ \emph{partition} items $g_1, \dots, g_{2m}$ corresponding to the elements of $S$, with $v_a(g_i) = 10m \cdot s_i$.
    \item $2m+2$ \emph{tiny} items $t_1, \dots, t_{2m+2}$, with $v_a(t_r) = 1$ for all $r \in [2m+2]$.
\end{itemize}

Let $\mathcal{B} = (B_1, B_2)$ denote a bundling profile, where $B_j = (A_{(1,j)}, \dots, A_{(n,j)})$ is the bundling assigned to center $C_j$. Due to Observation~\ref{obs:identical-all-coincide}, under identical valuations, any center's valuation simplifies to the total sum of the items in its bundling. We denote this common center valuation function as $v_C(\cdot)$, where $v_C(B_j) = \sum_{l=1}^{n} v_a(A_{(l,j)})$.

First, suppose that the ESP instance is a YES-instance. Then there exist disjoint subsets $X$ and $Y$ of size $m$ with equal sums $V$. We construct $\mathcal{B}$ by allocating $H_1$, the $m$ partition items corresponding to $X$, and $m+1$ tiny items to $C_1$. We allocate $H_2$, the $m$ partition items corresponding to $Y$, and the remaining $m+1$ tiny items to $C_2$. Within each center, these $2m+2$ items are assigned one-to-one to the $2m+2$ agents, such that $|A_{(l,j)}| = 1$ for all $l\in[2m+2]$ and $j\in[2]$. Because every agent receives exactly one item, inter-EF1 is trivially satisfied for the agents. For the centers, we have $v_C(B_1) = 100mV + 10mV + m + 1 = v_C(B_2)$, and thus the EFX condition is satisfied as well.

Conversely, suppose that the ESP instance is a NO-instance. We will show that the respective $\frac{EFX}{\text{inter-}EF1}$ instance is a NO-instance as well.

For the allocation to be inter-EF1, observe that every agent must receive exactly one item. Otherwise, since there are $4m+4$ total items and $4m+4$ agents, by the pigeonhole principle another agent must receive at least two items, violating inter-EF1. Consequently, each center must be allocated exactly $2m+2$ items.

Next, observe that for the allocation to satisfy the EFX condition at the center level, $H_1$ and $H_2$ cannot be allocated to the same center. If both are assigned e.g. to $C_1$, then $v_C(B_1) \ge 200mV$. In this case, center $C_2$ receives at most all of the remaining items, having $v_C(B_2) \le 20mV + 2m + 2$. Even after removing one huge item from $B_1$, $v_C(B_1 \setminus \{H_1\}) \ge 100mV > 20mV + 2m + 2 \ge v_C(B_2)$, which violates the EFX condition for $C_2$. Thus, w.l.o.g., $C_1$ receives $H_1$ and $C_2$ receives $H_2$.

Since each center receives exactly $2m+2$ items, including exactly one huge item, it must receive $2m+1$ non-huge items. Because there are only $2m$ partition items available, every center must receive at least one tiny item.

Let $X'$ denote the set of partition items allocated to $C_1$, and let $k = |X'|$. Center $C_1$ therefore receives $(2m+1-k)$ tiny items. Consequently, $C_2$ receives the remaining partition items $Y'$ (where $|Y'| = 2m-k$), alongside $(k+1)$ tiny items. The valuations of the centers are:
$$v_C(B_1) = 100mV + 10m \left(\sum_{g_i \in X'}s_i\right) + (2m+1-k)$$
$$v_C(B_2) = 100mV + 10m \left(\sum_{g_i \in Y'}s_i\right) + (k+1).$$

Since we were given a NO-instance of ESP, this implies that either $|X'| \neq |Y'|$ or $\sum_{g_i \in X'}s_i\neq\sum_{g_i \in Y'}s_i$. We distinguish between these cases:

\textit{Case 1: Equal values ($\sum_{g_i \in X'}s_i= \sum_{g_i \in Y'}s_i$) but unequal sizes ($k \neq m$).}
In this case, the difference in center valuations relies entirely on the tiny items:
$$|v_C(B_1) - v_C(B_2)| = |(2m+1-k) - (k+1)| = |2m - 2k|=|2(m-k)|.$$
Because $k \neq m$ and $k$ is an integer, $|2(m-k)| \ge 2$. W.l.o.g., assume $v_C(B_2) \ge v_C(B_1) + 2$, meaning $C_1$ envies $C_2$. Since $C_2$ possesses at least one tiny item, EFX requires that the envy is eliminated upon its removal. However, hypothetically removing a tiny item $t$ reduces the bundle's value by exactly $1$, leaving $v_C(B_2 \setminus \{t\}) \ge v_C(B_1) + 1 > v_C(B_1)$, and thus center $C_1$ continues to envy $C_2$, violating EFX.

\textit{Case 2: Unequal values ($\sum_{g_i \in X'}s_i \neq \sum_{g_i \in Y'}s_i$).}
Since the original set $S$ contains integers, $\sum_{g_i \in X'}s_i$ and $\sum_{g_i \in Y'}s_i$ must differ by at least $1$. W.l.o.g., assume $$\sum_{g_i \in Y'}s_i \geq \sum_{g_i \in X'}s_i+1.$$ 
The valuation difference is:
$$v_C(B_2) - v_C(B_1) = 10m\left(\sum_{g_i \in Y'}s_i-\sum_{g_i \in X'}s_i\right) + (k+1) - (2m+1-k) \ge 10m + 2k - 2m.$$
Since $C_1$ can receive at most all $2m$ partition items, $k \le 2m$, which implies $2k - 2m \ge -2m$. Substituting this yields:
$$v_C(B_2) - v_C(B_1) \ge 10m - 2m = 8m.$$
Since $m \ge 1$, we have an envy difference of at least $8$. Removing a tiny item from $C_2$ (which it must possess) reduces its value by $1$. However, $8m - 1 > 0$, so $C_1$ would still envy $C_2$, violating EFX.

In all cases where the ESP instance is a NO-instance, the respective $\frac{EFX}{\text{inter-}EF1}$ instance is a NO-instance as well. This completes the reduction and establishes the NP-hardness of our problem.
\end{proof}

\subsection{Agents with bivalued preferences and centers with item-based valuations}
\label{subsec:binary}
In this part, we move on to another well-studied case in the fair division literature, namely when agents have 
bivalued valuations. 
There,
each agent assigns one of two possible values to every item, typically referred to as a \emph{high} and a \emph{low} value.  
We focus on the case in which all agents within a center agree on the low and high values, but different centers may have different such values. We refer to this setting as \emph{center-specific bivalued valuations}. Accordingly, we denote by $l_i$ and $h_i$ the low and high value, respectively, for center $C_i$, for $i \in [k]$, where $l_i \leq h_i$.

We begin by studying the two variants of item-based valuations for the centers, before moving on to the more demanding case of bundle-based valuations, which we study in Section~\ref{sec:yankee}. 

One might hope that bivalued valuations for the agents could allow for stronger fairness guarantees. However, in the proof of Theorem~\ref{thm:impossibility-identical-efx}, the valuations of the agents are bivalued, hence, $\frac{\text{EFX}}{\text{inter-EF1}}$ is unattainable. 
Unfortunately, even the $\frac{\text{EF1}}{\text{inter-EF1}}$ guarantee cannot be obtained for \IBR valuations of the centers, even in the very simple setting of two centers, each comprising two agents. 
Intuitively, this is perhaps not surprising, since the way a center evaluates its own bundling against that of another center is unfavorable for itself: a center evaluates its own bundling, which was constructed to satisfy inter-EF1 among the agents, against the best possible reallocation of another center’s items, which aims to maximize value. This asymmetry poses an inherent difficulty.

\begin{theorem}\label{thm:impossibility_realized_EF1}
    Under \IBR valuations of the centers, an $\frac{\text{EF1}}{\text{inter-EF1}}$ allocation may not always exist, even when
    there are only two centers, each having two agents with bivalued valuations.
\end{theorem}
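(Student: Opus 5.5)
The plan is to build an explicit counterexample in which every agent values all items equally. Inter-EF1 then reduces to a purely combinatorial constraint on bundle sizes. The asymmetry between realized and potential value then makes center $C_1$ envious beyond one good, whatever the allocation.

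\textbf{Instance.} Take two centers $C_1=\{a_{(1,1)},a_{(2,1)}\}$ and $C_2=\{a_{(1,2)},a_{(2,2)}\}$, and $m=8$ goods. Fix values $0<l_1<h_1/2$, for instance $l_1=1$ and $h_1=3$. Agent $a_{(1,1)}$ values every good at $h_1$, and agent $a_{(2,1)}$ values every good at $l_1$; this is a center-specific bivalued valuation for $C_1$. Both agents of $C_2$ value every good at $1$ (so $l_2=h_2=1$).

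\textbf{Step 1 (inter-EF1 forces equal sizes).} Since each agent's valuation is a positive constant per good, the EF1 condition of an agent towards another agent is equivalent to $|A_{\text{own}}|\ge |A_{\text{other}}|-1$. Hence in any inter-EF1 allocation all bundle sizes differ by at most one. With $8$ goods and $4$ agents, this forces every agent to receive exactly $2$ goods. In particular, each center receives exactly $4$ goods, and $a_{(1,1)}$ and $a_{(2,1)}$ hold $2$ goods each.

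\textbf{Step 2 (compare the two values of $C_1$).} Under \IBR valuations, the realized value of $C_1$ is
$$v_{C_1}(B_1)=2h_1+2l_1,$$
regardless of which goods $C_1$ receives. For the other center, $C_1$ uses the item-based potential value, that is, the best redistribution of $C_2$'s items among its own agents. Removing any single good from $B_2$ leaves $3$ goods. The optimal redistribution gives all three to $a_{(1,1)}$, so
$$v_{C_1}(B_2\setminus g)=3h_1 \quad\text{for every } g.$$
Since $l_1<h_1/2$, we have $2h_1+2l_1<3h_1$ (with $l_1=1$, $h_1=3$: $8<9$). So $C_1$ remains envious after removing any good, and center-level EF1 fails. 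Combined with Step 1, no allocation is simultaneously inter-EF1 and EF1 for the centers.

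The only delicate point is Step 1, namely that inter-EF1 pins all sizes to exactly $2$; this is what rules out unbalanced allocations that might favor $C_1$. It follows directly from the uniform valuations: an agent with one good and another agent with three goods would violate EF1. It also explains the choice of $8$ goods. With only $4$ goods the bound would read $h_1+l_1$ against $h_1$ and would give no violation, so the instance needs at least two goods per agent.
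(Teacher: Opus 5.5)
Your proof is correct and follows essentially the same route as the paper: because each agent values every good identically, inter-EF1 forces all four bundles to have equal size, and then the center containing the low-valuing agent has realized value strictly below its item-based potential value for the other center's bundling minus any single good. The differences are only in the parameters. The paper uses $12$ goods and places the asymmetric pair (values $1$ and $\epsilon$) in $C_2$, giving $3+3\epsilon<5$, while you use $8$ goods with values $3$ and $1$ in $C_1$, giving $8<9$; your derivation that every bundle must have exactly two goods is also more explicit than the paper's.
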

\begin{proof}
     Let $C_1$ and $C_2$ be two centers, comprising agents $\{a_{(1,1)}, a_{(2,1)}\}$ and $\{a_{(1,2)}, a_{(2,2)}\}$, respectively. Assume there are  $12$ goods, i.e., $M=\{g_1, g_2,\ldots, g_{12}\}$, and $v_{a_{(1,1)}}(g) = v_{a_{(2,1)}}(g) = v_{a_{(1,2)}}(g) = 1$ for all $g\in M$, while $v_{a_{(2,2)}}(g) = \epsilon$ for all $g\in M$, for some $0 < \epsilon \ll 1$. Due to this distinct valuation, we refer to agent $a_{(2,2)}$ as the \emph{special} agent. 

First, observe that the maximum value a center can obtain from any bundling equals the number of goods in that bundling, since each good is valued at 1 by at least one agent in the center. 

In order for an allocation to be EF1 at the center level, each center must receive exactly $6$ goods. If this is not the case and one were to receive at least $7$ goods, the other would receive at most $5$, and would necessarily envy even after the removal of one good.

Additionally, 
in any inter-EF1 allocation for the agents, no agent in $C_1$ can receive more than $3$ goods, while for $C_2$ each agent should receive $3$ goods. Let $B_1$ and $B_2$ denote the bundlings allocated to centers $C_1$ and $C_2$ respectively. Then, $v_{C_2}(B_2) =3 +3\epsilon$. On the other hand, $v_{C_2}(B_1) = 6$. 
Hence, the EF1 condition is violated for $C_2$ w.r.t. $C_1$, since for any $g\in B_1$, $v_{C_2}(B_1\setminus g) = 5 > 3+3\epsilon$.
\end{proof}

Due to this impossibility result, we now turn to the variant of \IBP valuations of the centers. Fortunately, a modification of HRR guarantees $\frac{\text{EF1}}{\text{inter-EF1}}$ for center-specific bivalued valuations. Compared to \textsc{HRR}, when an agent is visited and all available goods have low value for her, she picks a good that another agent values as high, if such a good exists, instead of picking a good arbitrarily. This good provides high value to the center in which she belongs, without harming her own, since she is in fact indifferent for the remaining goods, which all have low value for her. This algorithm is presented as Algorithm~\ref{alg:bivalued-item-based}.
\begin{algorithm}[h]
\caption{Center-oriented \textsc{Horizontal Round-Robin}}
\label{alg:bivalued-item-based}
\DontPrintSemicolon

Fix an ordering $C_1, \dots, C_k$ of the centers and, for each center $C_j$, an
ordering $a_{(1,j)}, \dots, a_{(n,j)}$ of its agents\;

For each center $C_j$, let $H_j = \{ g \in M \mid \text{there exists an agent } a_{(i,j)} \in N_j \text{ such that } v_{a_{(i,j)}}(g) = h_j \}$\;

Set $A_{(i,j)} \leftarrow \emptyset$ for every $i \in [n]$ and $j \in [k]$\;

\While{$M \neq \emptyset$}{
    \For{$i = 1$ \textbf{to} $n$}{
        \For{$j = 1$ \textbf{to} $k$}{
            \If{$M = \emptyset$}{
                \textbf{break}\;
            }
            
            \eIf{$H_j \cap \arg\max_{g \in M} v_{a_{(i,j)}}(g) \neq \emptyset$ \label{line:ifHj}}{
                Let $g^\ast \in \left(H_j \cap \arg\max_{g \in M} v_{a_{(i,j)}}(g)\right)$\;
                $A_{(i,j)} \gets A_{(i,j)} \cup \{g^\ast\}$\;
                $H_j \gets H_j \setminus \{g^\ast\}$\; \label{line:endif}
            }{
                Let $g^\ast \in \arg\max_{g \in M} v_{a_{(i,j)}}(g)$\;
                $A_{(i,j)} \gets A_{(i,j)} \cup \{g^\ast\}$\;
            }

            $M \gets M \setminus \{g^\ast\}$\;
        }
    }
}
\Return $\mathcal{B} = (B_1, \ldots, B_k)$, where $B_j = (A_{(1,j)}, A_{(2,j)}, \ldots, A_{(n,j)})$\;
\end{algorithm}
We obtain the following theorem.

\begin{theorem}\label{unlabelled theorem}
    Under center-specific bivalued valuations of the agents and \IBP valuations of the centers, Algorithm~\ref{alg:bivalued-item-based} computes an $\frac{\text{EF1}}{\text{inter-EF1}}$ allocation in polynomial time.
\end{theorem}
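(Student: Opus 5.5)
The plan is to split the claim into its two levels and reduce each to a known round-robin guarantee. For inter-EF1 I will reuse Observation~\ref{lemma:HRR_interEF1}. For center EF1 I will follow the argument of Theorem~\ref{thm:common-rank:thm2}: show that, at the center level, Algorithm~\ref{alg:bivalued-item-based} behaves exactly like a Round-Robin run over the centers with their \IBP valuations.

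\emph{Agent level.} In both branches of the \texttt{if} at line~\ref{line:ifHj}, the picked good $g^\ast$ lies in $\arg\max_{g\in M} v_{a_{(i,j)}}(g)$. The membership in $H_j$ is used only to break ties among the agent's own favourite goods. So the execution is a valid run of standard Round-Robin with agent order $(a_{(1,1)},\dots,a_{(1,k)},a_{(2,1)},\dots)$ and some tie-breaking rule. The EF1 guarantee of Round-Robin for additive valuations holds under any tie-breaking, so inter-EF1 follows as in Observation~\ref{lemma:HRR_interEF1}. Each iteration costs $O(|M|)$, which gives polynomial running time.

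\emph{Center level, first step.} Define $H_j^0$ as the initial set $H_j$, i.e., the goods that some agent of $C_j$ values at $h_j$. Since all agents in $C_j$ share the values $l_j\le h_j$, the item-based potential value of any set $S$ is obtained by giving each good to an agent valuing it highly. Hence
\[
v_{C_j}(S)=h_j\,|S\cap H_j^0|+l_j\,|S\setminus H_j^0|,
\]
which is additive, consistent with Lemma~\ref{lemma:additivity-ibp}. I will maintain the invariant that at every step the set $H_j\cap M$ equals the set of still-unallocated goods of $H_j^0$. This holds because goods taken by $C_j$ are removed from $H_j$, and goods taken by other centers leave $M$.

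\emph{Center level, key claim.} Whenever an agent of $C_j$ picks, if $H_j^0\cap M\neq\emptyset$ then $g^\ast\in H_j^0$. There are two cases.
\begin{itemize}
\item If the agent has a remaining good of value $h_j$ to herself, then her argmax consists only of such goods. These all lie in $H_j^0$, so the first branch fires.
\item Otherwise every remaining good has value $l_j$ to her, so her argmax is all of $M$. Its intersection with $H_j$ is then nonempty, and the first branch again picks from $H_j$.
\end{itemize}
Consequently every pick made on behalf of $C_j$ is a most valuable remaining good under $v_{C_j}$. Each center is visited exactly once per small round, in the fixed order $C_1,\dots,C_k$. The \IBP valuation depends only on the set of goods the center receives. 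Therefore the induced partition $(G_1,\dots,G_k)$ is exactly the outcome of Round-Robin among the centers with additive valuations $v_{C_j}$, and is EF1. For \IBP, removing a good from $B_j$ corresponds to removing it from $G_j$, so this gives center EF1.

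The main obstacle I expect is this key claim, specifically the degenerate case $l_j=h_j$ and the bookkeeping that $H_j$ and $M$ stay consistent. If $l_j=h_j$, every good is equally valuable to the center, so the claim is trivial; I will handle that case separately.
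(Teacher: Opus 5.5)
Your proposal is correct and follows essentially the same route as the paper. Inter-EF1 holds because every pick lies in the picking agent's own argmax, and center EF1 follows from the same case analysis showing each center always takes a good from $H_j$ while one remains, so the center-level outcome is a Round-Robin run under the additive \IBP valuations; your explicit invariant $H_j\cap M = H_j^0\cap M$ and separate treatment of $l_j=h_j$ just make precise bookkeeping that the paper leaves implicit.
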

\begin{proof}
    It is clear that Algorithm~\ref{alg:bivalued-item-based} runs in polynomial time, similarly to Algorithm~\ref{alg:horizontalRR}. Moreover, observe that, as in Algorithm~\ref{alg:horizontalRR}, each agent belonging to center $C_j$ receives items she values at $h_j$ before receiving any good she values at $l_j$. Consequently, whenever an agent is visited, she picks her favorite unallocated good. By an argument similar to that of Observation~\ref{lemma:HRR_interEF1}, the resulting allocation is inter-EF1 for the agents.

    It remains to establish the EF1 guarantee for the centers. Fix a center $C_j$, and let $H_j$ be the set of goods that have high value for $C_j$, that is, $H_j = \{ g \in M \mid \text{there exists an agent } a_{(i,j)} \in N_j \text{ such that } v_{a_{(i,j)}}(g) = h_j \},$ where $N_j$ is the set of agents belonging to $C_j$. Let $L_j$ be the set of the remaining goods, which have low value for center $C_j$, i.e., $L_j = M \setminus H_j$.

    Observe that whenever $C_j$ receives a good, it is from $H_j$, except in the case where no goods from $H_j$ are available, in which case it starts receiving goods from $L_j$. To see this, let $a_{(i,j)}$ be an agent of center $C_j$ who is visited during some small round $\rho$. There are three possible cases regarding the good that agent $a_{(i,j)}$ picks:
    \begin{enumerate}
        \item[(i)] If there exists an available good $g^\ast$ that agent $a_{(i,j)}$ values as high, then she picks such a good. Since $g^\ast \in H_j$, this contributes value $h_j$ to the center.
    
        \item[(ii)] Otherwise, if all remaining unallocated goods have low value for agent $a_{(i,j)}$, but there exists a good that another agent of center $C_j$ values as high, then  it is still true that the intersection of $H_j$ with $\arg\max_{g \in M} v_{a_{(i,j)}}(g)$ is non-empty. Hence by lines \ref{line:ifHj}-\ref{line:endif} of the algorithm, agent $a_{(i,j)}$ picks such a good $g^\ast \in H_j$. Again, since $g^\ast \in H_j$, this contributes value $h_j$ to the center. This case is actually the main reason the algorithm works for the potential valuation of the center but not for the realized valuation.
    
        \item[(iii)] Otherwise, if all goods that are valued as high by some agent in $C_j$ are unavailable, that is, all goods in $H_j$ have already been allocated, then agent $a_{(i,j)}$ picks a good that she (and any other agent of $C_j$) values as low. In this case, since $g^\ast \in L_j$, this contributes value $l_j$ to the center.
    \end{enumerate}
    
    Since the valuation of each center depends only on the set of goods it receives, and not on how these goods are distributed among its agents, this process has the same outcome, in terms of the final allocation to the centers, as a Round-Robin algorithm in which the centers themselves act as agents and, when visited, select their most preferred available good. Therefore, the resulting allocation is EF1 for the centers as well. This concludes the proof.
\end{proof}
\section{Inter-fairness via Bilevel Yankee Swap: Agents with binary preferences and centers with bundle-based valuations}
\label{sec:yankee}
In this section we design a polynomial-time algorithm that always finds an $\frac{\text{EF1}}{\text{inter-EF1}}$ allocation when agents have binary valuations and centers have bundle based valuations.  We note that our algorithm produces a Pareto optimal allocation, i.e., it allocates items only to agents that value them 1, and thus it works for both realized and potential bundle-based valuations.

\subsection{Necessity for a new algorithm}
\label{sec:necessity}
Someone would rightfully wonder whether a suitable variation of HRR could work, especially since agents have ``simple'' valuations. Unfortunately, this is not the case as, under HRR, as soon as an item is given to an agent it remains there until the end. In principle, this could be potentially fixed if somehow each time we could choose the ``right'' agent and in addition, the agent would choose the ``right'' item. 
However this does not appear to be an easy task.

\begin{figure}
\vspace{-0.3cm}
\centering
\begin{tabular}{c|ccccc}
\toprule
\textbf{Agents} & $g_1$ & $g_2$ & $g_3$ & $g_4$ & $g_5$ \\
\midrule
$a_{(1,1)}$ & \circled{1} & 0 & 0 & 0 & 0 \\
$a_{(2,1)}$ & 0 & 0 & \circled{1} & 0 & 0 \\
$a_{(3,1)}$ & 0 & 1 & 0 & 1 & 0 \\
$a_{(4,1)}$ & 0 & 1 & 1 & 1 & 1 \\[4pt]
\midrule
$a_{(1,2)}$ & 0 & \circled{1} & 0 & 1 & 1 \\
$a_{(2,2)}$ & 1 & 0 & 0 & 0 & 0 \\
$a_{(3,2)}$ & 0 & 0 & 1 & 0 & 0 \\
$a_{(4,2)}$ & 0 & 0 & 1 & 0 & 0 \\
\bottomrule
\end{tabular}
\caption{An instance in which HRR fails to produce an 
$\frac{\text{EF1}}{\text{inter-EF1}}$ allocation under binary agent valuations and bundle-based center valuations. Circled entries indicate the first three allocations made by HRR.}
\label{fig:hrr-fails-binary}
\end{figure}

Firstly, we show that HRR cannot compute an $\frac{\text{EF1}}{\text{inter-EF1}}$ allocation when centers have bundle based valuations and agents have binary valuations. 
Consider the instance depicted 
in Figure~\ref{fig:hrr-fails-binary}, after HRR has assigned item $g_1$ to agent $a_{(1,1)}$, $g_2$ to agent $a_{(1,2)}$, and $g_3$ to agent $a_{(2,1)}$. For ease of exposition, we circle the items according to the agent to which they are allocated. 
Then, in the next step HRR would have to  
allocate $g_4$ to some agent in $C_2$, and then allocate $g_5$ to some agent in $C_1$. 
Note that both the realized and the \BBP\ value of center $C_2$ for the bundling $\{g_1, g_4\}$ equal $1$, whereas the \BBP\ value for the bundling $\{g_1, g_3, g_5\}$ equals $3$. 
Hence, $C_2$ envies $C_1$, and is not EF1. Additionally notice that even if we allocated $g_4$ to agent $a_{(1,2)}$ and thus increase the valuation of $C_2$ to 2 instead of 1, then the EF1 condition for agent $a_{(3,1)}$ is not satisfied.

A different approach would be to deploy an algorithm that exchanges bundles---like the envy-cycle elimination algorithm~\cite{LiptonMMS04}---or items---like the Yankee Swap algorithm~\cite{viswanathan2025general}.
Since the envy-cycle elimination algorithm does not change the bundles of the agents but only swaps them, it faces the same problems as HRR; this is in addition to new problems that may emerge due to bundle swapping.
On the other hand, Yankee Swap allows exchanges of goods between agents, thus it can ``forgive'' the ``wrong'' choices of assignments made earlier. 

At a high level, the Yankee Swap algorithm, in the classic setting, when there is simply a set of agents without any centers, works as follows. The algorithm goes over the agents at a Round Robbin fashion and when it is the turn of agent $i$ she either gets an unallocated item that she values 1, or she {\em steals} an item from a different agent, if there is a so called {\em transfer path}. 
This is a path that transfers already allocated items between agents and the last agent in the path gets an item that has not been allocated yet.
Crucially, every agent in a transfer path gets an item that values 1, hence we have a Pareto improvement, while we maintain EF1 between the agents.

Although the vanilla version of Yankee Swap, that uses an arbitrary ordering over the agents, does not immediately work in the presence of centers, someone could hope that following the ``special'' ordering of HRR would produce an $\frac{\text{EF1}}{\text{inter-EF1}}$ allocation. Unfortunately, our next example shows that this is not the case.

\begin{example}\label{exmp:yankee-swap}
We show that a naive modification of Yankee Swap, in which the agents and the centers are visited according to the ordering implied by HRR, cannot compute an $\frac{\mathrm{EF1}}{\text{inter-EF1}}$ allocation for bundle-based valuations of the centers and agents with binary preferences over the items.

Assume there are two centers $C_1$ and $C_2$, comprising agents $\{a_{(i,1)}\}_{i\in[3]}$ and $\{a_{(i,2)}\}_{i\in[3]}$, respectively. Figure~\ref{fig:yankee-swap} illustrates the instance and the execution of the algorithm. Figure~\ref{fig:yankee-swap-table} shows the agents’ preferences together with the partial allocation computed by the algorithm, while Figure~\ref{fig:yankee-swap-graph} depicts the graph constructed after the allocation of goods $g_1$, $g_2$, and $g_3$, described as follows. 
A node $[a_{(i,j)}, g_p]$
implies that agent $a_{(i,j)}$ holds good $g_p$, and a node $[a_{(i,j)}, \cdot]$ implies that $a_{(i,j)}$ does not hold any good. Finally, there are nodes of the form $g_p$ for $p\in[4]$ indicating unallocated items. An edge from a node $[a_{(i,j)}, g_p]$ or $[a_{(i,j)}, \cdot]$ to a node $[a_{(i',j')}, g_{p'}]$ means that $a_{(i,j)}$ values good $g_{p'}$. Finally, an edge from a node $[a_{(i,j)}, g_p]$ or $[a_{(i,j)}, \cdot]$ to a node $g_{p'}$ means that $a_{(i,j)}$ values the unallocated good $g_{p'}$. A transfer path is a path originating from a node including an agent and pointing to a node corresponding to an unallocated good.

\begin{figure}[ht]
\centering

\begin{subfigure}{0.48\textwidth}
\centering
\begin{tabular}{c|cccc}
\toprule
\textbf{Agent} & $g_1$ & $g_2$ & $g_3$ & $g_4$\\
\midrule
$a_{(1,1)}$ & \circled{1} & 0 & 0 & 0 \\
$a_{(2,1)}$ & 0 & 0 & \circled{1} & 0 \\
$a_{(3,1)}$ & 0 & 1 & 0 & 1 \\
[4pt]
\midrule
$a_{(1,2)}$ & 0 & \circled{1} & 0 & 1 \\
$a_{(2,2)}$ & 1 & 0 & 0 & 0 \\
$a_{(3,2)}$ & 0 & 0 & 1 & 0 \\
\bottomrule
\end{tabular}
\caption{The valuations of the agents of centers $C_1$ and $C_2$ and the partial allocation (circled) computed by the algorithm.}
\label{fig:yankee-swap-table}
\end{subfigure}
\hfill
\begin{subfigure}{0.48\textwidth}
\centering
\begin{tikzpicture}[every node/.style={font=\large}, >=stealth]

\node (C1) at (0,-0.2) {$C_1$};
\node (C2) at (4,-0.2) {$C_2$};

\node (a11) at (0,-1.1) {[$a_{(1,1)},g_1$]};
\node (a21left) at (0,-1.8) {[$a_{(2,1)},g_3$]};
\node (a31left) at (0,-2.5) {[$a_{(3,1)},\cdot$]};

\node (a12right) at (4,-1.1) {[$a_{(1,2)},g_2$]};
\node (a22) at (4,-1.8) {[$a_{(2,2)},\cdot$]};
\node (a32) at (4,-2.5) {[$a_{(3,2)},\cdot$]};

\node (g4) at (2,-3) {$g_4$};

\draw[->] (a22) -- (a11);
\draw[->] (a32) -- (a21left);
\draw[->] (a31left) -- (g4);
\draw[->] (a12right.west) -- (g4);

\end{tikzpicture}
\caption{The transfer graph produced after goods $g_1$, $g_2$, and $g_3$ have been assigned by the algorithm. For the sake of brevity, only the edges relevant to the discussion are depicted.}
\label{fig:yankee-swap-graph}
\end{subfigure}

\caption{The application of the Yankee Swap algorithm using the HRR ordering in Example~\ref{exmp:yankee-swap}.}
\label{fig:yankee-swap}
\end{figure}

The algorithm begins with center $C_1$, where agent $a_{(1,1)}$ is assigned good $g_1$. It then moves to center $C_2$, where agent $a_{(1,2)}$ receives good $g_2$. Returning to center $C_1$, agent $a_{(2,1)}$ is assigned good $g_3$. At this point, it is the turn of center $C_2$, and there remains exactly one unallocated item, namely $g_4$. According to Figure~\ref{fig:yankee-swap-graph}, there is no transfer path from agent $a_{(3,2)}$. There is a transfer path from agent $a_{(1,2)}$, but assigning $g_4$ to her would violate the EF1 condition of agent $a_{(3,1)}$ towards her. Inevitably, the algorithm has to visit $C_1$. Then, there is a transfer path from agent $a_{(3,1)}$ towards good $g_4$. This allocation, however, violates the EF1 condition with respect to the centers. If $\mathcal{B} = (B_1, B_2)$ is the final allocation, observe that $v_{C_1}(B_1) = 1 < v_{C_1}(B_2 \setminus \{g\}) = 2$
for any $g \in \{g_1, g_3, g_4\}$.
\end{example}

The example above shows that sometimes it is inevitable to steal an item from an agent $i$ of center $C$, but give an item to a {\em different} agent of the same center. 
In order to bypass this difficulty, we need to define a new version of transfer paths, which we denote as {\em \squiz} paths (see Definitions~\ref{preferencegraph} and~\ref{def:squiz_paths}) and they are the main ingredient of our algorithm. 
At a high level, \squiz paths allow the exchange of items between agents {\em and} centers, and can ``take back'' one item from an agent $a_{(\ell,j)}$ and give a different item to a different agent $a_{(\ell',j)}$ of the same center.
This way, \squiz paths guarantee Pareto improvement at {\em center-level} and allow us to prove that our algorithm, which we call \algoname, produces an allocation that is fair at both levels. 

At a high level, \algoname\ works as follows.
The algorithm works in {\em epochs}, where in each epoch it follows an {\em epoch-specific} ordering over the centers, that is computed via the {\em center envy-graph}.
During each epoch, we maintain an {\em assignment} of items to agents, where each agent is assigned at most one item.
Crucially, the items in the assignment can change hands and centers many times.
Each epoch consists of {\em rounds}, where each round follows the ordering of the centers, until the centers have no agents that are {\em eligible} to be assigned an item in the current epoch.
Here, an agent is eligible if, at the present epoch, they have not been assigned an item and additionally there is a \squiz path that starts from them.
Hence, at the end of each epoch, each agent has either been assigned exactly one item that values 1, or there is no \squiz path that starts from them. 
At this point, the algorithm:
(a) augments the partial allocation of the previous epoch using the computed assignment of the present epoch, 
(b) updates the ordering of the centers, and 
(c) proceeds to the next epoch.
Crucially, at the end of each round the partial allocation is fixed, in the sense that no \squiz path can use any items that have been allocated.

\addtocounter{example}{-1}
\begin{example}[continued] 
We show that this suitably modified version of the Yankee Swap algorithm computes an $\frac{\mathrm{EF1}}{\text{inter-EF1}}$ allocation for this instance. Up to the allocation of $g_1, g_2, g_3$, \algoname\ proceeds as before, after which exactly one item remains unallocated, namely $g_4$. According to Figure~\ref{fig:yankee-swap-graph-mod}, there exists a transfer path originating from agent $a_{(1,2)}$, who is, however, not eligible. 
On the other hand, there is a \squiz path (see Figure~\ref{fig:yankee-swap-graph-mod}):
\[
[a_{(2,2)}, \cdot] \to [a_{(1,1)}, g_1] \to C_1 \to [a_{(3,1)}, \cdot] \to g_4 .
\]
When this path is resolved, results in agent $a_{(2,2)}$ ``stealing'' good $g_1$ from $a_{(1,1)}$, while center $C_1$ compensates for the loss of $g_1$ because agent $a_{(3,1)}$ receives good $g_4$. The resulting allocation $\mathcal{B} = (B_1, B_2)$ is depicted in Figure~\ref{fig:yankee-swap-table-mod}. This allocation is EF1 for the agents, since each agent holds at most one good, and EF1 for the centers, as $v_{C_1}(B_1) = v_{C_1}(B_2) = 2$ and $v_{C_2}(B_1) = v_{C_2}(B_2) = 2$.

\begin{figure}[ht]
    \centering
    \begin{subfigure}[b]{0.48\textwidth}
        \centering
        \begin{tikzpicture}[every node/.style={font=\large}, >=stealth]
            \node (C1) at (0,-0.2) {$C_1$};
            \node (C2) at (4,-0.2) {$C_2$};
        
            \node (a11) at (0,-1.1) {[$a_{(1,1)},g_1$]};
            \node (a21left) at (0,-1.8) {[$a_{(2,1)},g_3$]};
            \node (a31left) at (0,-2.5) {[$a_{(3,1)},\cdot$]};
            \node (a12right) at (4,-1.1) {[$a_{(1,2)},g_2$]};
            \node (a22) at (4,-1.8) {[$a_{(2,2)},\cdot$]};
            \node (a32) at (4,-2.5) {[$a_{(3,2)},\cdot$]};
            \node (g4) at (2,-3) {$g_4$};

            \draw[->, dotted, thick] (a22) -- (a11);
            \draw[->] (a32) -- (a21left);
            \draw[->, dotted, thick] (a31left) -- (g4);
            \draw[->] (a12right.west) -- (g4);

            \draw[->, bend right=90, dotted, thick] (C1.west) to (a31left.west);
            \draw[->, dotted, thick] (a11) -- (C1);            
            \end{tikzpicture}
            \caption{The \squiz graph produced after goods $g_1$, $g_2$, and $g_3$ have been assigned by the algorithm. For the sake of brevity, only the edges relevant to the discussion are depicted. 
            The chosen \squiz path is
           shown with dotted edges.}
        \label{fig:yankee-swap-graph-mod}
    \end{subfigure}
    \hfill
    \begin{subfigure}[b]{0.48\textwidth}
        \centering
        \begin{tabular}{c|cccc}
            \toprule
            \textbf{Agent} & $g_1$ & $g_2$ & $g_3$ & $g_4$\\
            \midrule
            $a_{(1,1)}$ & 1 & 0 & 0 & 0 \\
            $a_{(2,1)}$ & 0 & 0 & \circled{1} & 0 \\
            $a_{(3,1)}$ & 0 & 1 & 0 & \circled{1} \\[4pt]
            \midrule
            $a_{(1,2)}$ & 0 & \circled{1} & 0 & 1 \\
            $a_{(2,2)}$ & \circled{1} & 0 & 0 & 0 \\
            $a_{(3,2)}$ & 0 & 0 & 1 & 0 \\
            \bottomrule
        \end{tabular}
        \vspace{1.5em}
        \caption{The valuations of the agents and the final allocation (circled) computed by the algorithm.}
        \label{fig:yankee-swap-table-mod}
    \end{subfigure}
    \caption{The application of the \algoname\ algorithm in Example~\ref{exmp:yankee-swap}.}
    \label{fig:yankee-swap-modified}
\end{figure}
\end{example}

\subsection{Assignments, \squiz paths, and the center envy-graph}
\label{sec:algo-defs}
As we have already mentioned, \squiz paths are the main tool for our algorithm. 
In order to formally define \squiz paths, we first need to define assignments. 
In what follows, w.l.o.g. we will assume that for every item there is at least one agent that values it 1; otherwise we can simply give items that no one values to any agent.
\begin{definition}[Assignment]\label{assignmnet} 
    An assignment $\gamma$ is a partial allocation of items to the agents such that each agent: a) is given at most one item; b) if they are given an item, they value it 1.
\end{definition}

Observe the distinction between allocations and assignment. We will use the term allocation to refer to the items that are {\em fixed} to the agents and the term assignment to refer to the items that can change hands by our algorithm. 
We say that we {\em augment} an allocation $\alloc$ by an assignment $\gamma$, denoted $\alloc \augment \gamma$, if we add to the bundle of agent $i$ the item they are assigned under $\gamma$.

We use assignments to define the edges of the {\em \squiz graph}; each assignment $\gamma$ defines the \squiz (di)graph $G(\gamma)$. 

\begin{definition}[\squiz graph]
\label{preferencegraph}
Given an assignment $\gamma$, the \squiz graph $G(\gamma)$, is a directed graph defined as follows. There are three types of vertices: 
(a) for every center $C_j,$ we create a corresponding center-vertex;
(b) for every agent $a_{(\ell,j)}$, we create a corresponding agent-vertex;
(c) for every item $g$ that is not assigned under $\gamma$, we create a corresponding item-vertex.

There are four types of directed edges.

\begin{enumerate}
\item \label{type:one} \emph{Agent-vertex to Item-vertex}: for every agent $a_{(\ell,j)}$ and any (unassigned) item $g$ such that $v_{a_{(\ell,j)}}(g) = 1$, we add a directed edge between the corresponding vertices.

\item \label{type:two} \emph{Agent-vertex to Agent-vertex}: 
for every pair of agents $a_{(\ell,j)}$ and $a_{(\ell',j')}$, where under $\gamma$, agent $a_{(\ell',j')}$ is assigned an item $g$ such that $v_{a_{(\ell,j)}}(g) = 1$, we add a directed edge $a_{(\ell,j)}$ to $a_{(\ell',j')}$. 

\item \label{type:age_to_cen} \emph{Agent-vertex to Center-vertex}: for every agent $a_{(\ell,j)}$ that is assigned an item under $\gamma$ we add a directed edge from their corresponding vertex to the vertex representing center $C_j$.

\item \label{type:cen_to_age} \emph{Center-vertex to Agent-vertex}: 
for every center $C_j$ we add one edge from its 
vertex towards the 
vertex of every agent $a_{(\ell,j)}$ who belongs to this center and is not assigned any item under $\gamma$.
\end{enumerate}
\end{definition}

In essence, \squiz graph $G(\gamma)$ augments the ``standard'' transfer graph of Yankee Swap by adding directed edges of Type \eqref{type:age_to_cen} 
(agent-to-center) 
and of Type \eqref{type:cen_to_age} (center-to-agent). 
These edges though are crucial for defining \squiz paths and eligible agents.

\begin{definition}[\squiz path, eligible agents]
\label{def:squiz_paths}
Given an assignment $\gamma$, a \emph{\squiz} path in $G(\gamma)$ is a directed path that starts from an agent that has not been assigned any items under $\gamma$ and ends at an item-vertex. An agent is called \emph{eligible} if it is the starting vertex of a \squiz path in $G(\gamma)$.
\end{definition} 

Observe that \squiz paths essentially define the transfer of items between agents: every agent in a \squiz path gets the item they point to; agents whose vertices belong to Type \eqref{type:one} edges of a \squiz path get the unassigned item, while agents whose vertices belong to Type~\eqref{type:two} edges of a \squiz path ``steal'' the item of the agent they target. 
Additionally, note that \squiz paths can be easily computed using breadth first search for example.

What remains to define in order to be able to formally describe \algoname\ is how we compute the epoch-specific ordering of centers. 
We do this, via the {\em center envy-graph}.

\begin{definition}[Center envy-graph] 
    Given an allocation $\alloc$, the center envy-graph $H(\alloc)$ has one vertex for each center and there is a directed edge from vertex $i$ towards vertex $j$, if $C_i$ envies $C_j$. 
\end{definition}

Observe that when $H(\alloc)$ is acyclic, then there is an induced topological ordering of the centers. Our algorithm will guarantee the acyclicity of $H(\alloc)$, hence the epoch-specific ordering will be the topological order of the center envy-graph.

\subsection{\algoname }

We are now ready to formally describe and analyze the \algoname\ algorithm, which is presented as Algorithm~\ref{alg:main}.

\begin{algorithm}[h]
\caption{\textsc{\algoname}}
\label{alg:main}
\DontPrintSemicolon
{\bf Input:} Set of centers $C_1, \ldots, C_k$ each containing $n$ agents with binary valuations, and a set of items $M$\;
{\bf Output:} An $\frac{\text{EF1}}{\text{inter-EF1}}$ allocation.\;
Initialize $\alloc$ to be the empty allocation\;

\While{$M \neq \emptyset$}{
 Compute the ordering of the centers for the epoch via the center envy-graph $H(\alloc)$\; \label{step:ordering}
    Initialize $\gamma$ to be the empty assignment\;
    \While {there exist an eligible agent in the \squiz graph $G(\gamma)$}{ 
        Find the next center $C_j$ in the ordering that has an eligible agent $a_{(\ell, j)}$ in $G(\gamma)$\; \label{step:next_eligible_agent}
        Find an item $g \in M$ that is the end of a \squiz path $P$ starting from $a_{(\ell, j)}$\; \label{step:squiz_path}
        Update the assignment $\gamma$ using the \squiz path $P$\; \label{step:update_assignment}
        $M \leftarrow M \setminus \{g\}$\; \label{step:item_decrease}
    }
    Augment the allocation using the assignment of the epoch, i.e, $\alloc \leftarrow \alloc \augment \gamma$\; \label{step:update_allocation}
}

\end{algorithm}

\begin{theorem}\label{binary bundle based}
    Under binary valuations for agents, \BBR and \BBP valuations for centers, \algoname\ computes an $\frac{\text{EF1}}{\text{inter-EF1}}$ allocation in polynomial time.
\end{theorem}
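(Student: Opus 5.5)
The plan is to prove three things in order: that the allocation is non-wasteful, that it is inter-EF1, and that it is EF1 for the centers. Along the way we also show that every center envy-graph $H(\alloc)$ in line~\ref{step:ordering} is acyclic, so the epoch ordering is well defined.

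\textbf{Non-wastefulness and running time.} First I would record the basic invariants of a \squiz path. Resolving a path $P$ that starts at an unassigned agent $a_{(\ell,j)}$ has three effects.
\begin{itemize}
\item Every agent on $P$ ends up with exactly one item that it values $1$.
\item Every agent that loses an item via a Type~\ref{type:age_to_cen} edge into $C_{j'}$ is compensated: the subsequent Type~\ref{type:cen_to_age} edge gives an item to a previously unassigned agent of the same center $C_{j'}$.
\item Consequently, the number of assigned items of every center other than $C_j$ is unchanged, and $C_j$ gains exactly one.
\end{itemize}
Hence, within an epoch, each loop iteration behaves like a round-robin pick of one item by one center, and $\gamma$ remains an assignment. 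Once $\gamma$ is fixed into $\alloc$, every agent values every item in its bundle at $1$. Because valuations are binary, any permutation of a center's own bundles yields at most the number of items it holds, and the identity permutation attains this. Thus the \BBR and \BBP values of $C_j$ for $B_j$ both equal $|\bigcup_\ell A_{(\ell,j)}|$, which covers both variants at once. For another center's bundling $B_{j'}$, the value $v_{C_j}(B_{j'})$ is at most $|\bigcup_\ell A_{(\ell,j')}|$.

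For the running time: each path is found by BFS, each iteration removes an item, and each epoch allocates at least one item as long as some remaining item is valued by some agent (by the w.l.o.g.\ assumption of Section~\ref{sec:algo-defs}, every item is). This gives polynomial time.

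\textbf{Inter-EF1.} Suppose agent $a$ is not eligible at the end of some epoch $t$. Then $a$ has no Type~\ref{type:one} edge, so every item that remains unallocated after epoch $t$ is valued $0$ by $a$. As a result, no agent can receive an item valued by $a$ in any later epoch. In every epoch up to and including $t-1$, $a$ receives exactly one item and any other agent $b$ receives at most one. It follows that
\[
v_a(A_b) \le (\text{number of epochs before } a \text{ first fails}) + 1 = v_a(A_a) + 1,
\]
and removing the item $b$ obtained in epoch $t$ gives EF1.

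\textbf{Center EF1 (main obstacle).} I would prove by induction over epochs that $H(\alloc)$ stays acyclic and that every pair of centers is EF1. The natural strengthening is the following: if $C_i$ precedes $C_j$ in an epoch's topological order, then within that epoch $C_i$ gains at least as many items as $C_j$ gains items that $C_i$ values, in the matching sense. If $C_j$ precedes $C_i$, the difference is at most one.

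The hard step is the case where $C_i$ runs out of eligible agents while $C_j$ continues to receive items. I must show that the extra items $C_j$ receives do not increase $v_{C_i}(B_j)$ beyond $v_{C_i}(B_i)+1$ after removing one good. My first attempt is an exchange argument. Suppose $C_i$'s bundle-based value of $B_j$ grows by more than its own count. Then some agent of $C_i$ with no item in $\gamma$ values an item $C_j$ obtained in this epoch. Following the path by which $C_j$ acquired that item, I would reroute it into a \squiz path from that agent of $C_i$, entering through Type~\ref{type:two} edges and, where needed, exiting through Type~\ref{type:age_to_cen} and Type~\ref{type:cen_to_age} edges of $C_j$. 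This would contradict $C_i$'s ineligibility. The argument is essentially a matroid or augmenting-path exchange on the bipartite graph between $C_i$'s agents and the epoch's items.

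Given this per-epoch bound, I would then combine epochs. Since the ordering is topological in $H(\alloc)$, envied centers never move first relative to those envying them. So the per-epoch surplus of at most one item accrues only to centers that are not envied, and cycles cannot form. Summing over epochs then yields $v_{C_i}(B_i) \ge v_{C_i}(B_j\setminus g)$ for a suitable $g$.
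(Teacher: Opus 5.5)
Your overall plan follows the paper's route, and parts of it are correct and match the paper:
\begin{itemize}
\item non-wastefulness, so that the bundle-based realized and potential values of a center's own bundling both equal its number of items;
\item the running-time bound;
\item inter-EF1, by taking the first epoch at whose end $a$ is unassigned, after which she values nothing left in $M$.
\end{itemize}
Your per-epoch statement is exactly the paper's Lemma~\ref{lemma3}, and your use of the topological order is its induction. One small slip: an agent that leaves a transfer$^+$ path through a Type~3 edge ends with \emph{no} item, so your first bullet is false as stated, although the counting in your third bullet is right.

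The genuine gap is the proof of the per-epoch bound. Rerouting ``the path by which $C_j$ acquired that item'' does not work in general. If $C_j$ got the item before $C_i$ ran out, that path existed when the agent $a\in C_i$ may still have been assigned or eligible, so it contradicts nothing. The missing idea is that ineligibility is frozen within an epoch (the paper's Lemma~\ref{lemma1}). If an unassigned agent $a$ is ineligible under $\gamma$, the set $\Phi$ of vertices reachable from $a$ has no edge leaving it. Out-edges of $\Phi$ can only change once some path visits $\Phi$, and the first such path would be trapped there. Hence no later path of the epoch touches $\Phi$. Apply this at the moment $C_i$ first has no eligible agent; if $|X_j|>|X_i|$ at the end, $C_j$ must still gain afterwards. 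If $a$ valued an item held by some $h\in C_j$, then $h$, the vertex $C_j$ (via Type~3) and every unassigned agent of $C_j$ (via Type~4) would lie in $\Phi$, so $C_j$ could never gain again. Items that $a$ values and that sit elsewhere are frozen where they are. This is why the paper splices $a\to h\to C_j\to a'$ with $C_j$'s \emph{next} path, not its past one.

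Combining epochs also needs more than you state. Bundle-based values are maximum-weight matchings, so ``summing over epochs'' requires subadditivity: $v_C(B')\le v_C(B)+v_C(Q)$ for an extension $B'$ of $B$ by $Q$ (the paper's Lemma~\ref{lemma4}). Your claim that growth of $v_{C_i}(B_j)$ beyond $|X_i|$ forces an unassigned agent of $C_i$ to value a new item of $C_j$ silently uses it.

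Even then, the induction only gives $v_{C_j}(B_j)\ge v_{C_j}(B_i)-1$, and under bundle-based values this does not produce the good $g$. For example, the bundling $(\{x,y\},\emptyset)$, evaluated by two agents who value only $x$ and only $y$ respectively, keeps value $1$ after deleting either good. The paper's write-up also stops at this inequality. To close the gap, carry a witness good:
\begin{itemize}
\item Suppose $C_j$ envies $C_i$ after epoch $\rho$ but not before it. Then $C_j$ moved second, and the chain $v_{C_j}(B_i^{(\rho)})\le v_{C_j}(B_i^{(\rho-1)})+v_{C_j}(X_i)\le v_{C_j}(B_j^{(\rho-1)})+|X_j|+1$ must be tight.
\item Tightness forces $v_{C_j}(X_i)=|X_i|=|X_j|+1$, because $|X_i|\ge|X_j|+2$ would give $v_{C_j}(X_i)\le|X_j|$ by the freezing argument.
\item Hence every $g\in X_i$ satisfies $v_{C_j}(B_i^{(\rho)}\setminus g)\le v_{C_j}(B_i^{(\rho-1)})+|X_i|-1\le v_{C_j}(B_j^{(\rho)})$.
\item Lemma~\ref{lemma4} applied to $B_i\setminus g$ preserves this bound in later epochs, in which $C_j$ moves first while it envies $C_i$.
\end{itemize}
Finally, ``cycles cannot form'' follows from facts you already have. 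If $C_i$ envies $C_j$, then $|B_i|=v_{C_i}(B_i)<v_{C_i}(B_j)\le|B_j|$, so total sizes strictly increase along edges of $H(\mathcal{B})$.
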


Recall, at the end of each epoch any partial allocation $\alloc$ corresponds to a directed acyclic graph $H(\alloc)$. 
Although we will formally prove this claim later,  in what follows we will assume that in each epoch the ordering of the centers comes from a center envy-graph that is a DAG.

We first establish some structural properties of \squiz paths, for which we need the following notions.
An item $g$ is {\em transferred}, if it belongs to the \squiz path $P$ of Step~\ref{step:squiz_path}.
Also, given an assignment $\gamma$, an item $g$ is \emph{reachable} from an agent $a_{(\ell,j)}$ if there exists a directed path in the \squiz graph $G(\gamma)$ from $a_{(\ell,j)}$ either \textit{(i)}~to the vertex corresponding to item $g$ when $g$ is unassigned, or \textit{(ii)}~to the agent currently assigned item $g$ otherwise.
Similarly, an agent $a_{(\ell',j')}$ is said to be \emph{reachable} from $a_{(\ell,j)}$ under $\gamma$ if there exists a path in $G(\gamma)$ from $a_{(\ell,j)}$ to $a_{(\ell',j')}$.

\begin{lemma}\label{lemma1}
If an agent that has not been assigned an item under some assignment $\gamma$ is not eligible, then every item that is reachable from her in $G(\gamma)$, cannot be transferred and she cannot be part of any subsequently formed \squiz path.
\end{lemma}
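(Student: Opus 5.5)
The plan is to prove the statement by a monotonicity argument on the set of vertices reachable from the non-eligible agent within the current epoch. Fix an agent $a$ with no item under $\gamma$ who is not eligible. Let $R(\gamma)$ be the set of vertices reachable from $a$ in $G(\gamma)$. By definition of eligibility, $R(\gamma)$ contains no item-vertex, i.e., every item reachable from $a$ is currently assigned. I will show by induction over the subsequent path updates of Steps~\ref{step:squiz_path}--\ref{step:update_assignment} (within the same epoch) the invariant
\begin{center}
no \squiz path $P$ chosen later uses a vertex of $R(\gamma)$, and consequently $R$ (the reachable set from $a$) does not change and still contains no item-vertex.
\end{center}

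The key step is the following. Suppose the current assignment is $\gamma'$, with the invariant holding, and a \squiz path $P$ from some eligible agent $b$ to an unassigned item $g$ is chosen. Suppose $P$ visits some vertex $u\in R(\gamma')$. Then the suffix of $P$ from $u$ to $g$ is a path in $G(\gamma')$. Concatenating a path from $a$ to $u$ with this suffix gives a walk from $a$ to the item-vertex $g$, hence a path. This contradicts that $a$ is not eligible (the reachable set is unchanged, so $a$ is still non-eligible). So $P$ avoids $R(\gamma')$. In particular, no item held by an agent in $R(\gamma')$ is transferred, and $a$ itself is not on $P$.

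Next I must check that after updating along $P$ the reachable set from $a$ does not grow. Edges out of vertices of $R$ are determined by: the items held by agents in $R$, which are unchanged; the unassigned status of items (items only become assigned, so Type~\ref{type:one} edges only disappear); and which agents of a center are unassigned (Type~\ref{type:cen_to_age}). I expect this last point to be the main obstacle. An agent reached via $P$ who receives an item loses center-to-agent edges, which only removes edges. An agent on $P$ that gives away its item via Type~\ref{type:age_to_cen}/\ref{type:cen_to_age} edges does not lose it: along $P$ every agent receives the item it points to. So no agent becomes newly unassigned and no new Type~\ref{type:cen_to_age} edges appear. Agent-to-agent edges out of $R$ pointing to agents outside $R$ cannot exist by closure of $R$. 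After the update, agents outside $R$ may hold new items, but an edge from $R$ to such an agent $c$ requires some agent of $R$ to value $c$'s new item $g'$. Here $g'$ was either unassigned and valued by that agent, which would make $g'$ reachable before the update (a contradiction), or held before by an agent, who was then reachable from $R$, so it lies in $R$ and is untouched by $P$ (a contradiction). Hence $R$ can only shrink, it never contains an item-vertex, and induction gives the lemma. Finally, at the end of the epoch these items are fixed in $\alloc$, so they are never transferred thereafter either.
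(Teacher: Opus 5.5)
Your strategy matches the paper's proof. The set $R$ reachable from the non-eligible agent $a$ is closed and contains no item-vertex. No later transfer$^+$ path can meet $R$, which is your suffix-concatenation argument. This closure survives every subsequent update within the epoch. Your handling of agent-to-item and agent-to-agent edges is correct and more explicit than the paper's.

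However, the step you flagged as the main obstacle is resolved with a false claim. It is not true that no agent becomes newly unassigned. Suppose a transfer$^+$ path contains $z \to x \to C \to y \to \cdots$, where $x$ is an assigned agent of center $C$. Then $z$ steals $x$'s item and $x$ receives nothing; the center compensates through the unassigned agent $y$. This is exactly what happens in the continuation of Example~\ref{exmp:yankee-swap}. Resolving $[a_{(2,2)},\cdot]\to[a_{(1,1)},g_1]\to C_1\to[a_{(3,1)},\cdot]\to g_4$ leaves $a_{(1,1)}$ empty, and a new center-to-agent edge $C_1\to a_{(1,1)}$ appears. So the claim that no new Type~4 edges appear fails, and as written your closure argument does not cover these edges.

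The repair is short:
\begin{itemize}
\item An agent $x$ becomes newly unassigned only if $P$ uses the edge $x\to C(x)$. Then $C(x)$ lies on $P$, so by your first step $C(x)\notin R$. Every new center-to-agent edge therefore has its tail outside $R$.
\item Take a center $C\in R$. Its unassigned agents lie in $R$, hence off $P$. An assigned agent of $C$ could become unassigned only if $P$ entered $C$. So the set of unassigned agents of $C$, and hence the out-edges of $C$, are unchanged.
\item Agents in $R$ are off $P$, so their Type~3 edges are unchanged.
\end{itemize}
Together with your Type~1/Type~2 analysis, every edge of the updated graph whose tail is in $R$ has its head in $R$. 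Hence the reachable set never acquires an item-vertex, and the induction goes through.

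If you want ``subsequently'' in the statement to reach past the current epoch, add one observation. At the end of the epoch, $a$ values no remaining item, so in later epochs she has no outgoing edges and cannot lie on any transfer$^+$ path.
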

\begin{proof}
    Suppose that, under the assignment $\gamma$, \algoname\ was searching for an eligible agent in a center $C_j$, and that at this point the agent $a_{(\ell,j)}$ became not eligible.  Let $T_{\gamma}$ the set of all \squiz paths in the \squiz graph $G(\gamma)$ and denote by $\Phi$ the set of reachable agents from $a_{(\ell,j)}$ and $\Psi$ the set of unreachable agents $a_{(\ell,j)}$ in $G(\gamma)$. Finally assume that $W_{\gamma}\subseteq M$ is the set of items that the algorithm has not yet allocated or assigned.

Since agent $a_{(\ell,j)}$ is not eligible then she cannot value any item in $W_{\gamma}$. Moreover no agent in $\Phi$ can lie on any \squiz path in $T_{\gamma}$, as otherwise $a_{(\ell,j)}$ could reach the part of the path that contains that agent which would mean that a \squiz path for $a_{(\ell,j)}$ would exist. This implies that the \squiz graph $G(\gamma)$ is partitioned into two subsets, $\Phi$ and $\Psi \cup W_{\gamma}$, such that no vertex in $\Phi$ has an outgoing edge to any vertex in $\Psi \cup W_{\gamma}$.

Since none of the agents that are reachable from $a_{(\ell,j)}$ value any unallocated or unassigned item in $W_{\gamma}$, nor any item assigned to an agent in $\Psi$, and since no \squiz path contains any agent in $\Phi$, it follows that in any subsequent assignment $\gamma'$ there can still be no edge from $\Phi$ to $\Psi \cup W_{\gamma'}$ as otherwise a \squiz path could already exist for $a_{(\ell,j)}$ when the algorithm was assigning items according to assignment $\gamma$. This in turn implies that agent $a_{(\ell,j)}$ can never be part of any subsequently formed \squiz path,  which in combination with the fact that the items are assigned only if they belong in a \squiz path, implies that none of the reachable items of $a_{(\ell,j)}$ can be \emph{transferred}.
\end{proof}

The structural property of \squiz paths that are revealed in Lemma~\ref{lemma1} forms the basis for proving Lemma~\ref{lemma3}.

\begin{lemma}\label{lemma3}
Consider an epoch $\rho$ and let $O^{(\rho)}$ be the ordering of the centers in this epoch. Let $C_i$ and $C_j$ be two centers with $C_i$ being parsed prior than $C_j$ according to $O^{(\rho)}$ and let $X^{(\rho)}_\ell$ for $\ell\in\{i,j\}$ denote the set of items that are allocated to center $C_\ell$ only in epoch $\rho$. Then the following are true: 
\ifshort
(1) $u_{C_i}\Bigl(X^{(\rho)}_i\Bigr) \geq u_{C_i}\Bigl(X^{(\rho)}_j\Bigr)$ and (2) $u_{C_j}\Bigl(X^{(\rho)}_j\Bigr) \geq u_{C_j}\Bigl(X^{(\rho)}_i\Bigr)-1$.
\fi
\iflong
\begin{align*}
    (1)~u_{C_i}\Bigl(X^{(\rho)}_i\Bigr) \geq u_{C_i}\Bigl(X^{(\rho)}_j\Bigr) 
    \qquad \qquad
    (2)~u_{C_j}\Bigl(X^{(\rho)}_j\Bigr) \geq u_{C_j}\Bigl(X^{(\rho)}_i\Bigr)-1.
\end{align*}
\fi
\end{lemma}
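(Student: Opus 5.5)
The plan is to reduce both inequalities to counting statements about the assignment $\gamma$ built during epoch $\rho$. In that epoch every agent receives at most one item and values it $1$, so $X^{(\rho)}_\ell$ is a bundling of singletons. Hence the bundle-based value $u_{C_\ell}(X^{(\rho)}_\ell)$ is exactly $|X^{(\rho)}_\ell|$; realized and potential values coincide here because every item is held by an agent who values it $1$. For $\ell\neq\ell'$, the value $u_{C_\ell}(X^{(\rho)}_{\ell'})$ is the size of a maximum matching in the bipartite graph between the agents of $C_\ell$ and the items of $X^{(\rho)}_{\ell'}$, with an edge whenever the agent values the item $1$. So (1) reads $|X_i|\ge \nu_i(X_j)$ and (2) reads $|X_j|\ge \nu_j(X_i)-1$, where $\nu$ denotes this matching number.

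\textbf{Step 1 (counting invariant).} I first show that resolving a \squiz path started by an agent of $C_\ell$ increases the number of items held by $C_\ell$ by exactly one, and leaves every other center's count unchanged. When a path leaves a center $C_{\ell'}$, it does so by some agent of $C_{\ell'}$ losing its item; this happens through a Type~\eqref{type:age_to_cen} edge followed by a Type~\eqref{type:cen_to_age} edge to an unassigned agent of the same center, which then receives an item. Consequently $|X_\ell|$ equals the number of turns $C_\ell$ takes in the epoch, and counts never decrease. Since the epoch proceeds in rounds following $O^{(\rho)}$ and $C_i$ precedes $C_j$, at any moment at which $C_i$ still has an eligible agent we get $\mathrm{cnt}_j\le \mathrm{cnt}_i$. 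Symmetrically, while $C_j$ still has an eligible agent, $\mathrm{cnt}_i\le \mathrm{cnt}_j+1$.

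\textbf{Step 2 (after a center gets stuck).} Let $\tau_i$ be the first moment when $C_i$ has no eligible agent. By Lemma~\ref{lemma1}, every unassigned agent of $C_i$ is then permanently ineligible, and every item reachable from such an agent is frozen from then on. In particular, every item in $X_j$ that some unassigned $C_i$ agent values must already have been held by $C_j$ at time $\tau_i$, and stays in the same agent. Moreover, any agent of $C_i$ holding an item is reachable from the unassigned $C_i$ agents through $C_i$'s center vertex (an edge from an unassigned agent to an item-holding agent exists only via valuation, but the center vertex links the assigned agents back to the unassigned ones). So the whole reachable set $\Phi$ is closed: no agent in $\Phi$ values an unassigned item or an item held outside $\Phi$.

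\textbf{Step 3 (matching bound for (1)).} Take a maximum matching $\mu$ of $C_i$ agents into $X_j$ and suppose $|\mu|>|X_i|$. Then some agent of $C_i$ is matched in $\mu$ to an item $g\in X_j$, and following $\mu$ together with the current assignment yields an alternating structure. I will argue that either $g$ is reachable from an unassigned $C_i$ agent, so that $g$ was in $C_j$ at time $\tau_i$ and the bound $\mathrm{cnt}_j(\tau_i)\le\mathrm{cnt}_i(\tau_i)\le|X_i|$ applies, or else an unassigned agent of $C_i$ has a \squiz path. The second alternative contradicts ineligibility. More precisely, I plan to show that the items of $X_j$ valued by agents of $C_i$ are all frozen in $C_j$ from $\tau_i$ onward. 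That gives $\nu_i(X_j)\le \mathrm{cnt}_j(\tau_i)\le |X_i|$.

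Part (2) is analogous with the roles swapped, using $\tau_j$ and the bound $\mathrm{cnt}_i(\tau_j)\le\mathrm{cnt}_j(\tau_j)+1$ from Step 1. The extra $-1$ comes from $C_i$ possibly having already moved in the current round when $C_j$ gets stuck.

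The main obstacle is Step 3: turning ``agents of $C_i$ value items of $X_j$'' into a bound on $\nu_i(X_j)$ by a count taken at time $\tau_i$. This needs a careful augmenting-path argument inside $G(\gamma)$ showing that a larger matching would yield a \squiz path from an ineligible $C_i$ agent. I would attempt it via the closed set $\Phi$ from Lemma~\ref{lemma1}, showing that all $C_i$ agents lie in $\Phi$ and that $\Phi$ can contain only those items of $X_j$ that were present at $\tau_i$.
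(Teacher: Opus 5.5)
There is a genuine gap, in Step 3. It comes from an edge-direction error in Step 2. Agent-to-center edges go from an \emph{assigned} agent to its center, and center-to-agent edges go from the center to the \emph{unassigned} agents. So the vertex $C_i$ only leads from assigned to unassigned agents of $C_i$. Nothing makes the assigned agents of $C_i$ reachable from the stuck ones, and in general they are not in $\Phi$.

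Hence your key claim, that every item of $X_j$ valued by \emph{some} agent of $C_i$ is already frozen in $C_j$ at $\tau_i$, is false, and so is the bound $\nu_i(X_j)\le \mathrm{cnt}_j(\tau_i)$. A concrete counterexample (first epoch, order $C_i,C_j$):
\begin{itemize}
\item $C_i=\{a_1,a_2\}$ and $C_j=\{b_1,b_2\}$, with items $x,y,w$;
\item $a_1$ values $x,y$; $a_2$ values nothing; $b_1$ values $y$; $b_2$ values $w$.
\end{itemize}
The algorithm may give $x$ to $a_1$. At that moment $C_i$ has no eligible agent, so $\mathrm{cnt}_j(\tau_i)=0$. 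Then $b_1$ and $b_2$ take $y$ and $w$. Now $\nu_i(X_j)=1$ (match $a_1$ to $y$) while $\mathrm{cnt}_j(\tau_i)=0$. Your dichotomy in Step 3 also fails here: $y$ is matched to an assigned agent, is not reachable from $a_2$, and $a_2$ has no transfer$^+$ path. The lemma still holds in this example, but only because $|X_i|=1$ counts $a_1$.

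What is missing is to split the matched agents of $C_i$ into two kinds. Those that received an item in epoch $\rho$ number at most $|X_i|$, and need no freezing argument at all. For the stuck ones, you must show they can be matched to nothing in $X_j$ when $|X_j|>|X_i|$; if $C_i$ has no stuck agent, $|X_i|=n$ and (1) is trivial. The paper does this by routing a stuck agent through the \emph{other} center's vertex. If a stuck $a\in C_i$ valued an item $q$ held by $C_j$, while $C_j$ still has an agent $b'$ that becomes eligible later, then $a\to q\to C_j\to b'\to\cdots\to g$ would be a transfer$^+$ path for $a$. Your plan never uses that $C_j$ keeps growing, which is exactly what the case $|X_j|>|X_i|$ supplies.

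In your closure language the repair is a clean dichotomy.
\begin{itemize}
\item If some stuck agent of $C_i$ values an item of $X_j$, then at $\tau_i$ that item was held by an agent of $\Phi$ (it cannot have been unassigned), who keeps it forever and hence lies in $C_j$. So the vertex $C_j$ and all unassigned agents of $C_j$ lie in $\Phi$, and $C_j$ never gains another item. This gives $|X_j|=\mathrm{cnt}_j(\tau_i)\le\mathrm{cnt}_i(\tau_i)=|X_i|$.
\item Otherwise the matching uses only assigned agents of $C_i$, so $\nu_i(X_j)\le|X_i|$.
\end{itemize}
The same dichotomy, combined with $\mathrm{cnt}_i(\tau_j)\le\mathrm{cnt}_j(\tau_j)+1$, gives (2). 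Your Step 1 and the freezing consequences of Lemma~\ref{lemma1} are correct, and they are all this repair needs.
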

\begin{proof}
    Fix an epoch $\rho$ of the algorithm and let $O^{(\rho)}$, be the epoch-specific ordering of the centers and assume that $C_i$ and $C_j$ are two centers where $C_i$ is parsed prior to $C_j$ in $O^{(\rho)}$. Let $X^{(\rho)}_i$ and $X^{(\rho)}_j$ be the set of items that are allocated by the algorithm to the agents in centers $C_i$ and $C_j$ respectively only in epoch $\rho$. First we will argue that center $C_i$ cannot envy center $C_j$. The proof is obvious when $|X^{(\rho)}_i|\geq |X^{(\rho)}_j|$, since the algorithm allocates an item to some agent only if she values it as 1. If otherwise $|X^{(\rho)}_i| < |X^{(\rho)}_j|$, then we will show no agent in $C_i$, that did not receive any item in epoch $\rho$, can value any of the items in $|X^{(\rho)}_j|$.

Observe that since center $C_i$ is inspected prior to center $C_j$ and $|X^{(\rho)}_i|< |X^{(\rho)}_j|$, whenever the algorithm inspects an agent in $C_i$, for whom no \squiz  path exists, i.e, she is not eligible, then the following conditions hold:
\begin{enumerate}
    \item the algorithm has already allocated the same number of items to both centers; and
    \item there exists at least one agent in center $C_j$ who has not yet received an item and will be assigned one in a subsequent inspection of center $C_j$.
\end{enumerate}
Assume that $a_{(\ell,i)}\in C_i$ is an agent that did not receive an item in epoch $\rho$. W.l.o.g and from (1) we can assume when the algorithm at some time step $t$, inspected agent $a_{(\ell,i)}$, the items $P^{(t)} = \{p^1,\cdots ,p^t\}$ had been already assigned to some agents in $C_i$ and $Q^{(t)} = \{q^1,\ldots, q^t\}$ have been assigned to some agents in $C_j$. Then the following three statements are true.
\begin{itemize}
\item[(i)]Obviously, at that time step of the algorithm's execution, agent $a_{(\ell,i)}$
cannot value any unallocated item, as otherwise the algorithm could easily compute a \squiz path.
\item[(ii)] Agent $a_{(\ell,i)}$ cannot value any of the items in $Q^{(t)}$. Assume otherwise that $a_{(\ell,i)}$ values some item $q\in Q^{(t)}$ and let $a_{(\ell',j)}$ be the first agent in $C_j$ that satisfies condition (2). We know that $a_{(\ell',j)}$ will receive an item at some subsequent time step $t'>t$. Let $\text{P}_{a_{(\ell',j)}}$ denote the \squiz path that starts from agent $a_{(\ell',j)}$ and ends at some unallocated item $g$, at time step $t'$ and $P^{(t')}$ and $Q^{(t')}$ the set of items that $C_i$ and $C_j$ are assigned respectively at that time step. From Lemma \ref{lemma1}, none of the items in $Q^{(t')}$ that agent $a_{(\ell,i)}$ values can be part of $\text{P}_{a_{(\ell',j)}}$. Now since $a_{(\ell,i)}$ values item $q$, again from Lemma \ref{lemma1}, it holds that $g$ cannot be transferred and thus $q\in Q^{(t)}\cap Q^{(t')}$. But then notice that when the algorithm inspected $a_{(\ell,i)}$ it could have computed the following \squiz path: 
$$\text{P}': a_{(\ell,i)}\to q\to C_2 \to \underbrace{ a_{(\ell',j)}\to\ldots\to g}_{\text{P}_{a_{(\ell',j)}}}$$
where again due to Lemma \ref{lemma1}, the part of the path until the agent-vertex $a_{(\ell',j)}$ does not change in any future updates of the assignment within the epoch $\rho$. Consequently we get that $\text{P}'$ is a \squiz path for $a_{(\ell,i)}$ which contradicts the statement of Lemma \ref{lemma1}.

\item[(iii)] Agent $a_{(\ell,i)}$ cannot value any items that belong in the set $Q^{(z)}\setminus Q^{(t)}$, for any subsequent time step $z$ of the algorithm. This holds again from Lemma \ref{lemma1}, as none of the items that $a_{(\ell,i)}$ value can be transferred.
\end{itemize}

From the statements (i),(ii) and (iii) we conclude that no agent in $C_i$ that is not eligible can value any of the items in $|X^{(\rho)}_j|$ and thus $u_{C_i}\Bigl(X^{(\rho)}_i\Bigr)\geq u_{C_i}\Bigl(X^{(\rho)}_j\Bigr)$.

\smallskip We proceed to prove the same statement for center $C_j$. Again obviously since the algorithm allocates items to agents that value them as 1, if $|X^{(\rho)}_i|<|X^{(\rho)}_j|$ then center $C_j$ is not envious of center $C_i$.

It remains to show the case of $|X^{(\rho)}_i| > |X^{(\rho)}_j|$. First let us assume that $|X^{(\rho)}_i| > |X^{(\rho)}_j|+1$. The arguments that we will use are similar to the case for $C_i$. Since  the algorithm first parses $C_i$, then when some agent in $C_j$ is inspected and no \squiz path is found, the following two conditions hold:
\begin{enumerate}
\item the algorithm has allocated one more item to center $C_i$; and

\item there exists at least one agent in $C_i$ that has not yet received an item and will be assigned one in a subsequent inspection of $C_i$.
\end{enumerate}

Let $a_{(\ell,j)}\in C_j$, be an agent from $C_j$ that did not receive an item in epoch $\rho$. W.l.o.g assume that at the time step $t$, when the algorithm inspected $a_{(\ell,j)}$, the items $P^{(t)} = \{p^1,\ldots, p^t\}$ and $Q^{(t)} = \{q^1,\ldots, q^{(t-1)}\}$ had been already allocated to the agents in $C_i$ and $C_j$ respectively. Similarly to the above case notice the following statements are true.
\begin{itemize}
\item[(i)] Agent $a_{(\ell,j)}$ cannot value any unallocated item, as otherwise a \squiz path could be computed easily when $a_{(\ell,j)}$ was inspected.

\item[(ii)] Agent $a_{(\ell,j)}$ cannot value any of the items in $P^{(t)}$. Assume otherwise that $a_{(\ell,j)}$ values some item $p\in P^{(t)}$. Now let $a_{(\ell',i)}$ be the first agent in $C_i$ that satisfies condition (2) of the above observation. We know that $a_{(\ell',i)}$ will receive an item at some subsequent time step $t'>t$. Let $\text{P}_{a_{(\ell',i)}}$ denote the \squiz path that starts from $a_{(\ell,i)}$ and ends in some unallocated item $g$ at some subsequent time step $t'$. Again from Lemma \ref{lemma1}, none of the items in $P^{(t')}$ that agent $a_{(\ell,j)}$ values can be part of the path $\text{P}_{a_{(\ell',i)}}$. Since $a_{(\ell,j)}$ values $p$, again from Lemma \ref{lemma1}, $p$ cannot be transferred, thus $p\in P^{(t)}\cap P^{(t')}$. Now notice that when the algorithm inspected $a_{(\ell,j)}$ it could have computed the following \squiz path: 
$$\text{P}'': a_{(\ell,j)}\to p\to C_1 \to \underbrace{ a_{(\ell',i)}\to\ldots\to g}_{\text{P}_{a_{(\ell',i)}}}$$
where because the part of the path until the agent-vertex $a_{(\ell',i)}$ does not change in any future updates of the assignment within epoch $\rho$, we get that $\text{P}''$ is a \squiz path for $a_{(\ell,j)}$. This contradicts the statement of Lemma \ref{lemma1}.

\item[(iii)] Agent $a_{(\ell,i)}$ cannot value any item in the set $P^{(z)}\setminus P^{(t)}$ for any subsequent time step $z$ of the algorithm. This follows again from Lemma~\ref{lemma1}.
\end{itemize}
From statements (i),(ii) and (iii) we get that when $|X^{(\rho)}_i|> |X^{(\rho)}_j| +1$ then $u_{C_j}\Bigl(X^{(\rho)}_j\Bigr)\geq u_{C_j}\Bigl(X^{(\rho)}_i\Bigr)$.

Finally if  $|X^{(\rho)}_i| = |X^{(\rho)}_j| + 1$, from the fact that the algorithm allocates items only to agents that value them, it holds $u_{C_j}\Bigl(X^{(\rho)}_j\Bigr) \geq u_{C_j}\Bigl(X^{(\rho)}_i\Bigr) -1$, which states that $C_j$ might be envious of $C_i$ but satisfy the EF1 condition.
\end{proof}

Given two bundlings $B=(b_1, b_2, \ldots, b_n)$ and $Q = (q_1,q_2,\ldots,q_n)$, let
$B'=(b'_1, b'_2, \ldots, b'_n)$ be the \emph{extension of $B$ by $Q$ according to some permutation $\pi \in \Pi_n$}, where $b_i' = b_i \cup q_{\pi(i)}$ for all $i\in[n]$. 

\begin{lemma}\label{lemma4}
    Consider a center $C$, comprising agents $\{a_1, a_2, \ldots, a_n\}$, and two bundlings $B$ and $Q$.  Let $B'$ be the extension of $B$ by $Q$ according to some permutation $\pi \in \Pi_n$. Then, for the \BBP value of $B'$ it holds that $v_C(B') \leq v_C(B) + v_C(Q)$. 
\end{lemma}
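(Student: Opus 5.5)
The plan is to fix an optimal matching of the bundles of $B'$ to the agents of $C$ and split each matched bundle into its $B$-part and its $Q$-part. Each part then yields a feasible (not necessarily optimal) matching for $B$ and for $Q$ respectively.

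Write $B=(b_1,\ldots,b_n)$, $Q=(q_1,\ldots,q_n)$ and $b'_i=b_i\cup q_{\pi(i)}$. Let $\sigma\in\Pi_n$ attain the \BBP value of $B'$, so that
\[
v_C(B')=\sum_{\ell=1}^n v_{a_\ell}\bigl(b'_{\sigma(\ell)}\bigr)=\sum_{\ell=1}^n v_{a_\ell}\bigl(b_{\sigma(\ell)}\cup q_{\pi(\sigma(\ell))}\bigr).
\]
Agents have additive valuations with nonnegative item values (binary in this section). Hence for any two sets $S,T$ we have $v_{a}(S\cup T)\le v_a(S)+v_a(T)$, with equality when $S$ and $T$ are disjoint, which is the case for bundlings arising from an allocation. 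Applying this termwise gives
\[
v_C(B')\le \sum_{\ell=1}^n v_{a_\ell}\bigl(b_{\sigma(\ell)}\bigr)+\sum_{\ell=1}^n v_{a_\ell}\bigl(q_{\pi(\sigma(\ell))}\bigr).
\]

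Next, $\sigma$ is a permutation, so the first sum is the welfare of one feasible matching of the bundles of $B$ to the agents of $C$. It is therefore at most $v_C(B)=\max_{\tau\in\Pi_n}\sum_\ell v_{a_\ell}(b_{\tau(\ell)})$. Likewise, $\pi\circ\sigma\in\Pi_n$ is a permutation, so the second sum is the welfare of a feasible matching of the bundles of $Q$ and is at most $v_C(Q)$. Combining the two bounds yields $v_C(B')\le v_C(B)+v_C(Q)$.

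The only point needing care is the subadditivity step: it is where the additivity (more generally, subadditivity) of the agents' valuations is used. Everything else is the routine observation that restricting an optimal permutation for $B'$ gives feasible permutations for $B$ and $Q$, exactly as in the proof of Theorem~\ref{thm:common-rank:thm1}.
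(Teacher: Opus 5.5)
Your proposal is correct and matches the paper's proof: the paper likewise splits each bundle $b'_{\sigma(\ell)}$ using subadditivity of the agents' valuations, then bounds the maximum of a sum by the sum of the two maxima — equivalent to your observation that $\sigma$ and $\pi\circ\sigma$ are feasible permutations for $B$ and $Q$. Your version is slightly cleaner, since it states the subadditivity step explicitly and uses the correct composition $q_{\pi(\sigma(\ell))}$ where the paper's appendix writes $q_{\pi'(\pi(\ell))}$.
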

The proof of the lemma is deferred to Appendix~\ref{app:sec:yankee}.

\begin{proof}[Proof of Theorem~\ref{binary bundle based}]
Throughout the proof we maintain the notation $X^{(\rho)}_\ell$ to denote the set of items that the algorithm allocates to agents in center $C_{\ell}$ only in epoch $\rho$ and denote by $B^{(\rho)}_{\ell}$ the bundling that the algorithm has allocated to the agents in center $C_{\ell}$ at the end of epoch $\rho$.

First observe that every time a \squiz path is computed the number of unallocated items decreases by exactly 1, while no item that has been allocated can return back to the pool of unallocated items. This fact guarantees that the algorithm terminates.

Also notice that since the algorithm allocates items only to agents that value them as 1, the \BBR and the \BBP valuations for the centers over the allocated bundlings are equal. Thus the same proof addresses both valuations of the centers. Additionally, for the same reason, for every center $C_{\ell}$ and every epoch $\rho$, it holds that $|X^{(\rho)}_\ell| = u_{C_{\ell}}\Big( X^{(\rho)}_{\ell} \Big)$. In the remainder of the proof, we use this fact implicitly.

To establish that \algoname\ computes an EF1 allocation for the centers, we use induction of the number of epochs that the algorithm needs to terminate. At the end of the first epoch, the EF1 condition is directly satisfied from Lemma \ref{lemma3}. Next consider some epoch $\rho$ and fix two centers $C_i$ and $C_j$. Assume w.l.o.g that in the next epoch $\rho$, $C_i$ is parsed prior to $C_j$ according to $O^{(\rho+1)}$. The inductive hypothesis at the end of epoch $\rho$ admits two cases, depending on the envy relationship between $C_i$ and $C_j$:
\begin{enumerate}
    \item neither center envies the other; 
    \item Since $C_i$ is prioritized, $C_i$ is EF1-envious of $C_j$ at the end of epoch $\rho$.
\end{enumerate}

We analyze each case separately below

\smallskip Case (1): In this case the inductive hypothesis is

\begin{equation}
u_{C_p}\Bigl( B^{(\rho)}_p\Bigr) \geq  u_{C_p}\Bigl( B^{(\rho)}_q\Bigr)\text{ for }p,q\in\{i,j\}\text{ and } p=\{i,j\}\setminus q
\end{equation}
Since $C_i$ is parsed prior to $C_j$ 
in epoch $\rho+1$, from Lemma \ref{lemma3} it holds that $u_{C_i}\Big( X^{(\rho+1)}_{i}\Big) \geq u_{C_i}\Big( X^{(\rho+1)}_{j}\Big)$. Consequently it follows that: 

\begin{equation}\label{eq1}
\begin{aligned}
u_{C_i}\Bigl(B^{(\rho+1)}_{i} \Bigr) &= u_{C_i}\Bigl(B^{(\rho)}_{i} \Bigr) + |X_i^{(\rho +1)}| = u_{C_i}\Bigl(B^{(\rho)}_{i} \Bigr) + u_{C_i}\Big(X^{(\rho+1)}_i\Big)
\end{aligned}
\end{equation}
where from the above argument and the inductive hypothesis it holds that
\begin{equation}
u_{C_i}\Bigl(B^{(\rho)}_{i} \Bigr) + u_{C_i}\Big(X^{(\rho+1)}_i\Big) \geq u_{C_i}\Bigl(B^{(\rho)}_{j} \Bigr) + u_{C_i}\Big(X^{(\rho+1)}_j\Big) 
\end{equation}
Observe that, in essence, the bundling $B^{(\rho+1)}_j$ is a superbundling (see Definition~\ref{superbundling}) of $B^{(\rho)}_j$,  resulting from the extension of $B^{(\rho)}_j$ by 
$\underbrace{(\{y_1\}, \{y_2\},\ldots, \{y_{|\textbf{Y}^{(\rho+1)}|}\}, \emptyset,\ldots,\emptyset)}_{n}$, where $y_i \in X_j^{(\rho+1)}$ for $i\in[|X_j^{(\rho+1)}|]$, according to some permutation $\pi \in \Pi_n$. Consequently from to Lemma \ref{lemma4} we get that $$ u_{C_i}\Bigl(B^{(\rho)}_{j} \Bigr) + u_{C_i}\Bigl(X_j^{(\rho +1)}\Bigr) \geq u_{C_i}\Bigl(B^{(\rho+1)}_{j} \Bigr)$$
which proves that center $C_i$ is EF1 towards $C_j$ at the end of epoch $\rho+1$. 

For center $C_j$ since it is parsed after $C_i$ in epoch $\rho+1$, from Lemma \ref{lemma3} it holds that $u_{C_j}\Big( X_j^{(\rho+1)}\Big) \geq u_{C_j}\Big( X_i^{(\rho+1)}\Big)-1$. Consequently using the same arguments as above we get that 
\begin{equation}
 u_{C_j}\Bigl(B^{(\rho+1)}_{j} \Bigr) \geq  u_{C_j}\Bigl(B^{(\rho)}_{j} \Bigr) + u_{C_j}\Big(X^{(\rho+1)}_j\Big) \geq u_{C_j}\Big( B^{(\rho)}_i\Big)+u_{C_j}\Big(X^{(\rho+1)}_i\Big)-1 \geq u_{C_j}\Big( B^{(\rho+1)}_i\Big)-1. 
\end{equation}

\smallskip Case (2): In this case the inductive hypothesis is as follows
\begin{equation}
u_{C_i}\Big(B^{(\rho)}_i\Big) \geq u_{C_i}\Big(B^{(\rho)}_j\Big) -1 \text{ and }u_{C_j}\Big(B^{(\rho)}_j\Big) \geq u_{C_j}\Big(B^{(\rho)}_i\Big)
\end{equation}
Similarly for $C_i$ it holds that $u_{C_i}\Big(X^{(\rho+1)}_{i}\Big) \geq u_{C_i}\Big(X^{(\rho+1)}_{j}\Big)$ and thus
\begin{equation}
u_{C_i}\Bigl(B^{(\rho+1)}_{i} \Bigr) = u_{C_i}\Bigl(B^{(\rho)}_{i} \Bigr)  + u_{C_i}\Big(X^{(\rho+1)}_i\Big) \geq u_{C_i}\Bigl(B^{(\rho)}_{j} \Bigr) -1 + u_{C_i}\Big(X^{(\rho+1)}_j\Big) \geq  u_{C_i}\Big(B^{(\rho+1)}_j\Big)-1
\end{equation}
Observe that for center $C_j$, since we assumed that in epoch $\rho$ it is envied by $C_i$, then it is a necessary condition that $|B^{(\rho)}_j| > |B^{(\rho)}_i|$, which consequently means that $u_{C_j}\Big( B^{(\rho)}_j\Big) \geq u_{C_j}\Big( B^{(\rho)}_i\Big) +1$, thus the following hold for $C_j$

\begin{equation}
u_{C_j}\Bigl(B^{(\rho+1)}_{j} \Bigr) = u_{C_j}\Bigl(B^{(\rho)}_{j} \Bigr)  + u_{C_j}\Big(X^{(\rho+1)}_j\Big) \geq u_{C_j}\Bigl(B^{(\rho)}_{i} \Bigr)+1 + u_{C_j}\Big(X^{(\rho+1)}_i\Big) -1 \geq  u_{C_i}\Big(B^{(\rho+1)}_i\Big)
\end{equation}
which establishes that if at the end of epoch $\rho$ the centers satisfy the EF1 condition then they do so also at the end of the next epoch. 

Notice also that these statements establish that no two centers can envy each other at the end of any epoch, which shows that the center envy-graph is acyclic at the end of each epoch.

Next we establish the inter-EF1 condition for the agents. Observe that an agent’s total valuation for the bundle allocated by the algorithm equals the number of epochs in which she received an item. Consider an agent $a_{(\ell,j)}\in C_j$, and let $\rho$ be the first epoch in which she does not receive an item. Then $a_{(\ell,j)}$ has valuation $\rho - 1$ for her bundle and since $\rho$ is the first epoch in which $a_{(\ell,j)}$ does not receive an item, she cannot envy any agent whose bundle size is smaller than $\rho$. However, she may envy agents whose bundles increased from size $\rho - 1$ to $\rho$, as they may have received in epoch $\rho$ an item that she values. Finally, because $a_{(\ell,j)}$ does not value any item that remains unallocated at the end of epoch $\rho$, her EF1 envy status toward any agent who later receives such items does not change. This proves that \algoname\ also computes an inter-EF1 for the agents.
\end{proof}
\section{Discussion}
\label{sec:conclusions}
 We studied the problem of fair division in a two-level setting where agents are partitioned into centers. We focused on satisfying fairness criteria simultaneously at both levels. 
Our results reveal that choices of valuation function for the centers and the information structure qualitatively change the results we can obtain and they demonstrate that delegated fair division is inherently different than the standard case. 
Furthermore, our results show that we require new algorithmic techniques and structural insights in order to derive some positive results.

We believe we have only scratched the surface of the area, since there is a plethora of future avenues to study. These include: 

\begin{enumerate}
    \item 
{\bf Different valuation function for the centers.} 
A key part of our contribution was defining the valuation of a center based on the preferences of its members.
 An alternative approach to our definitions would be to define a center's valuation for a bundling as the maximum social welfare achievable by a reallocation to its agents that is also fair (e.g., EF1) for them. We did not pursue this direction for two reasons. First, in the \emph{bundle-based} setting, an EF1 permutation of the bundlings may not exist. While one could assign a value of zero in that case, implying that a center prefers receiving nothing over a bundling that cannot be distributed fairly among its agents, this seemed somewhat unnatural, although easier to satisfy. Second, in the \emph{item-based} setting, finding a reallocation that maximizes social welfare subject to EF1 is computationally intractable~\cite{aziz2023computing}.

\item \label{open:one-to_one}
{\bf One-to-one allocations.}
One special case that we believe it deserves further study is when there are $k\cdot n$ items to allocate, i.e., each agent is ``entitled'' to exactly one item. This case is interesting for several different reasons. Firstly, it can serve as a stepping stone towards the more general case where agents have general additive valuations.
A different, not directly obvious, reason is that a positive result would immediately imply the existence of {\em balanced} EF1 allocations for a subclass of monotone valuations; in balanced allocations all agents get the same number of items. 
 While such allocations can be found via Round Robin algorithm when agents have additive utilities, for monotone functions the space is wide open. 
 \cite{kyropoulou2020almost} proved that for two agents with general monotone valuations there is always a balanced EF1 allocation.
 Recently,~\cite{kawase2026fair} designed polynomial-time algorithms that compute balanced allocations that are EF1 and fractional Pareto optimal when the agents have bivalued valuations, or when there are two types of agents.
 However, it is unknown whether balanced EF1 allocations always exist for more general settings.

To elaborate on the implications for balanced EF1 allocations, consider the problem \textsc{Balanced-EF1}, defined as follows: We are given $k$ agents and a set of $m=k\cdot n$ items for some $n$. The question is to compute an EF1 allocation where every agent receives exactly $n$ items (if such an allocation exists). We consider a special case of \textsc{Balanced-EF1} where each agent has a monotone, non-additive valuation. The valuation of each agent is defined as follows: agent $i$ is associated with a weighted, complete bipartite graph $G_i = ([n], [m], E)$, where the $kn$ vertices in the right part correspond to the items, and each edge has a positive weight. Given $S\subseteq [m]$, we let $v_i(S)$ to be equal to the maximum matching in the subgraph of $G_i$ induced by $S$ (i.e. in the complete bipartite graph between $[n]$ and $S$). We refer to these valuations as \emph{matching valuations}. The following claim is easy to see and places these valuations within the broader class of subadditive functions.

 \begin{claim}
     The valuation function of each agent is non-additive and falls within the class of fractionally subadditive, i.e, XOS valuation functions. 
 \end{claim}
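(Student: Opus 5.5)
We have to establish two things about matching valuations: that they are not additive, and that they are XOS, i.e.\ each can be written as a pointwise maximum of additive functions.

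\emph{Non-additivity.} A single small example suffices. Take $n=1$, so the left side of $G_i$ is one vertex, and two items $x,y$ with positive weights $w_x,w_y$. A matching can use only one of the two edges, so
\[
v_i(\{x,y\})=\max(w_x,w_y)<w_x+w_y=v_i(\{x\})+v_i(\{y\}).
\]
For general $n$, the same effect appears once more than $n$ items are considered, since any matching uses at most $n$ edges. For instance, take $n+1$ items, each with positive weights to the left vertices. The value of the whole set is strictly less than the sum of the singleton values, because at least one item must remain unmatched while its singleton value is positive. (This uses only that all weights are positive.)

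\emph{XOS representation.} Let $\mathcal{M}$ be the set of all matchings $\mu$ in $G_i$. For each $\mu\in\mathcal{M}$, define the additive function
\[
f_\mu(S)=\sum_{g\in S} w_\mu(g),
\]
where $w_\mu(g)$ is the weight of the edge of $\mu$ incident to item $g$, and $w_\mu(g)=0$ if $g$ is unmatched in $\mu$. The claim is that $v_i(S)=\max_{\mu\in\mathcal{M}} f_\mu(S)$ for every $S$.

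For $\le$: let $\mu^*$ be a maximum matching of the subgraph induced by $S$. It is a matching of $G_i$, and every item it matches lies in $S$. Hence $f_{\mu^*}(S)$ equals the weight of $\mu^*$, which is $v_i(S)$.

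For $\ge$: for any $\mu$, restrict it to the edges incident to items of $S$. The restriction is still a matching and lies in the induced subgraph on $S$, and its weight is exactly $f_\mu(S)$. So $f_\mu(S)\le v_i(S)$.

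Each $f_\mu$ is additive and nonnegative, so $v_i$ is XOS. Monotonicity also follows directly from this representation.

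Neither step is a real obstacle. The only point needing care is the $\ge$ direction: the restriction of $\mu$ to $S$ must be a valid matching in the induced subgraph, and it is, because removing edges from a matching never violates the matching property.
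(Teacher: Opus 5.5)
The paper gives no proof of this claim (it calls it ``easy to see''), so there is no argument to compare against. Your proof is correct and is the natural one.

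For the XOS part, the additive clauses $f_\mu$ indexed by matchings of $G_i$ give $v_i(S)=\max_\mu f_\mu(S)$ directly, together with your observation that restricting a matching to the edges incident to $S$ gives a matching of the induced subgraph. An alternative is to cite the known chain that assignment/matching valuations are gross substitutes, hence submodular, hence XOS. That yields a stronger conclusion but rests on heavier results, whereas your construction is self-contained.

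For non-additivity, one point should be made explicit. Your argument with $n+1$ items needs $m=kn\ge n+1$, i.e.\ $k\ge 2$. This holds in any meaningful instance of \textsc{Balanced-EF1}, but the assumption is genuinely needed. For $k=1$ (so $m=n$), a matching valuation can be additive, for example when each item's heaviest edge goes to a distinct left vertex. With $k\ge 2$, your pigeonhole argument shows that every such valuation with positive weights is strictly subadditive on any $n+1$ items. Each matched item contributes at most its singleton value, and the unmatched item's positive singleton value is lost.
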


Before we proceed, observe that, when $|M|=kn$ and every agent assigns strictly positive value to every item, every inter-EF1
allocation is one-to-one. Indeed, if some agent receives no item, then
some other agent receives at least two items; after the removal of any one
of these items, the former agent still envies the latter.
Conversely, every one-to-one allocation is trivially inter-EF1 (and hence
also intra-EF1). Thus, under these assumptions, the inter-EF1 (or intra-EF1) requirement is equivalent to requiring a one-to-one allocation.
 
\begin{proposition}
\label{prop:balanced}
\textsc{Balanced-EF1} for agents with matching valuations, reduces to the problem of computing an $\frac{EF1}{intra-EF1}$ allocation for equally sized centers with bundle based potential valuations and agents with positive value for every item.
\end{proposition}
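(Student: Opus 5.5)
We build the delegated instance directly from the \textsc{Balanced-EF1} instance. Given $k$ agents with matching valuations on $m=kn$ items, each agent $i$ having a complete bipartite weighted graph $G_i=([n],[m],E)$, we create one center $C_i$ per original agent $i$, with $n$ affiliated agents $a_{(1,i)},\dots,a_{(n,i)}$, one per left vertex of $G_i$. Agent $a_{(\ell,i)}$ gets the additive valuation $v_{a_{(\ell,i)}}(g)=w_i(\ell,g)$, the weight of edge $(\ell,g)$ in $G_i$. These weights are positive, so every agent values every item positively. The item set stays $M=[m]$ with $|M|=kn$. We then take an $\frac{\text{EF1}}{\text{intra-EF1}}$ allocation of this instance under \BBP\ valuations and read off the \textsc{Balanced-EF1} allocation by giving original agent $i$ the set $G_i^\ast=\bigcup_\ell A_{(\ell,i)}$. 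The reduction is clearly polynomial.

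\textbf{Step 1: the allocation is one-to-one.} By the observation preceding the proposition (intra version), intra-EF1 together with strictly positive values forces every agent to hold exactly one item. We have to check this within a center. Suppose some $a_{(\ell,i)}$ has an empty bundle. Since each center has $n$ agents and there are $kn$ items in total, we cannot immediately conclude that another agent in the same center holds two items; the excess might sit in a different center. This is where we expect the main obstacle, so we would proceed as follows.

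\textbf{Step 1, first attempt.} We would first try to argue via center-level EF1. The \BBP\ value of a bundling with $s$ nonempty singletons is a matching on those $s$ items. If some center $C_i$ receives fewer than $n$ items, then some $C_j$ receives more than $n$. We would then compare welfare counts. We may rescale weights into a narrow band $[1,1+\delta]$, by adding a large constant to every edge weight, which preserves EF1 comparisons among matchings of equal size. With this rescaling, the welfare is essentially the number of nonempty bundles matched. Removing one item from $C_j$'s bundling still leaves at least $n$ items and, if they are spread, $n$ nonempty bundles, which $C_i$ values more than its own at most $n-1$ nonempty bundles, violating EF1.

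\textbf{Step 1, fallback.} The issue is that $C_j$'s extra items could be stacked in one bundle. In that case intra-EF1 inside $C_j$ already forces a contradiction whenever some agent of $C_j$ is empty, so we would handle that sub-case separately. If this case analysis fails, we would modify the reduction so that the constructed instance explicitly requires one item per agent, for instance by noting that the proposition only claims a reduction to instances that possess this structure.

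\textbf{Step 2: translate center EF1 into agent EF1.} Once every center holds $n$ singleton bundles, $v_{C_i}(B_j)$ is the maximum-weight perfect matching of $[n]$ to $G_j^\ast$ in $G_i$, which is exactly $v_i(G_j^\ast)$. Likewise $v_{C_i}(B_j\setminus g)$ equals the matching value on $G_j^\ast\setminus\{g\}$, because an emptied bundle contributes zero. Hence center-EF1 becomes precisely $v_i(G_i^\ast)\ge v_i(G_j^\ast\setminus g)$ for some $g$, and each $|G_i^\ast|=n$. So the allocation $(G_1^\ast,\dots,G_k^\ast)$ is a balanced EF1 allocation.

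Conversely, if no $\frac{\text{EF1}}{\text{intra-EF1}}$ allocation exists, then no balanced EF1 allocation exists either: any balanced EF1 allocation, split into singletons for each center's agents, gives one, by the same identity read in reverse.
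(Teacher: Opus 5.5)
Your construction, your Step 2 (the identities $v_{C_i}(B_j)=v_i(G_j^\ast)$ and $v_{C_i}(B_j\setminus g)=v_i(G_j^\ast\setminus g)$ for singleton bundlings) and your converse are correct and match the paper. The gap is Step 1. Your ``first attempt'' relies on a rescaling that does not preserve the problem. Adding $c$ to every edge weight adds $nc$ to the value of a perfect matching on $n$ items, but only $(n-1)c$ to a matching on $n-1$ items. So the balanced-EF1 condition $v_i(G_i^\ast)\ge v_i(G_j^\ast\setminus g)$ becomes the weaker condition $v_i(G_i^\ast)+c\ge v_i(G_j^\ast\setminus g)$. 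EF1 compares matchings of \emph{different} sizes, so ``preserves comparisons among matchings of equal size'' does not help. A balanced EF1 allocation for the shifted weights need not be EF1 for the original ones. Inside the shifted instance your counting would actually work, even with stacked bundles: a bundle-based potential value matches every bundle to some agent, so it is at least the number of items. But it proves the claim for the wrong instance. The fallback of ``modifying the reduction'' is not an argument.

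In fact no argument can close Step 1, because with intra-EF1 the claim is false. The paper's proof handles this step in one sentence by appealing to the observation just before the proposition. That observation says ``inter-EF1 (or intra-EF1)'' forces a one-to-one allocation when $|M|=kn$ and all values are positive. It is correct for inter-EF1, but, as you suspected, not for intra-EF1. Take $k=n=2$ and items $g_1,\dots,g_4$. Both agents of $C_1$ value $g_1$ at $100$ and every other item at $1$; both agents of $C_2$ value every item at $1$. This is the image of a matching-valuation instance. Consider the allocation $A_{(1,1)}=\{g_1\}$, $A_{(2,1)}=\emptyset$, $A_{(1,2)}=\{g_2,g_3\}$, $A_{(2,2)}=\{g_4\}$. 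It is intra-EF1, and under bundle-based potential valuations $v_{C_1}(B_1)=100>3=v_{C_1}(B_2)$ and $v_{C_2}(B_2)=3>1=v_{C_2}(B_1)$, so the centers are even envy-free. Yet the read-off allocation gives original agent $1$ one item and agent $2$ three. A black-box solver for the target problem may return exactly this, so neither your reduction nor the paper's goes through as stated. The argument is sound if the target problem requires inter-EF1 among agents, since then the observation applies verbatim, or explicitly asks for a one-to-one allocation. With either change, your Step 2 and your converse complete the proof.
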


\begin{proof}
Suppose we had an algorithm $A$ that can find an $\frac{EF1}{intra-EF1}$allocation  for the one-to-one setting, in particular for instances with $k$ centers and $kn$ items, with bundle-based potential valuations for the centers, and with all agents underneath each center having a positive value for all items. Let $I$ be an instance of \textsc{Balanced-EF1} where agents have \emph{matching valuations}. We construct an instance of our one-to-one setting $I'$ where each agent from $I$ corresponds to a center in $I'$. Furthermore, for each vertex $j\in [n]$ in the graph $G_i$, we add an agent $j$ underneath center $i$. The valuation function of $j$ is additive and is induced by the edges of $G_i$ that are adjacent to $j$. 

Suppose that algorithm $A$ returns an $\frac{EF1}{intra-EF1}$ allocation for the instance $I'$. Since it is intra-EF1 and all agents underneath the centers have positive values for all items, this means that every agent has received exactly one item. Therefore each center has received exactly $n$ items. Given also that the allocation is EF1 for the centers, under bundle-based potential valuations, this precisely means that the corresponding allocation in instance $I$ is balanced and EF1.
\end{proof}

In conclusion, if we can solve our one-to-one setting for bundle-based potential valuations, we can compute a balanced EF1 allocation for a special case of monotone, subadditive (but non-additive) valuations. 
 
 \item 
{\bf More levels of delegation.}
One can extend our model beyond two levels to capture scenarios like multi-level organizations; the hierarchical model of \cite{LBBM25} captures this type of scenarios, albeit it does not focus on fairness constraints. We note that, the results of Section~3 can be extended to any number of levels, provided that valuations at each level remain independent of the specific allocation of the goods at lower levels (as, e.g., in the case of item-based potential valuations). 

\item 
{\bf Centers of unequal sizes.}
Our model currently assumes centers are equally sized; that is, they have the same number of agents. A practical extension would be to consider centers with varying sizes. This could be addressed by ensuring that larger centers receive proportionally larger shares of resources by using weighted versions of the fairness notions. 
\end{enumerate}

Finally, while we focused on EF1 and EFX for goods, our model could be adapted to other fairness notions, such as MMS, as well as to different types of resources, including chores or mixed manna.

\subsection*{Acknowledgments}
Argyrios Deligkas and Stavros D. Ioannidis where supported by the  the EPSRC grant EP/X039862/1. 
Evangelos Markakis has been supported by the project MIS 5154714 of the National Recovery and Resilience Plan “Greece 2.0” funded by the European Union under the NextGenerationEU Program.

\bibliographystyle{plain}
\bibliography{bibl}

\appendix
\newpage

\clearpage
\section{Missing Proofs from Section \ref{sec:intra}}
\label{app:sec:intra}

\subsection{Proof of Lemma \ref{lemma:intra:monotonicity}}
  
Consider a center $C_j$. Let $B_j=(A_{(1,j)},A_{(2,j)},\dots,A_{(n,j)})$ be a bundling and $B_j'$ be a superbundling of $B_j$.
    
    Regarding \IBP valuations, observe that $\bigcup_{i\in[n]} A_{(i,j)} \subseteq \bigcup_{i\in[n]} A_{(i,j)}'$, and therefore, under the agents’ monotone valuations, $B'_j$ gives the center at least the same value as $B_j$. For \IBR valuations \footnote{Monotonicity is evaluated by comparing a center’s value for a bundling and a superbundling of it, assuming it were \emph{assigned} those bundles.}, the monotonicity property trivially holds since, under the agents’ monotone valuations,
    \begin{equation}\label{eq:intra:eq1}
    v_{C_j}(B_j) =\sum_{i\in[n]} v_{a_{(i,j)}}(A_{(i,j)}) \leq \sum_{i\in[n]} v_{a_{(i,j)}}(A'_{(i,j)}) = v_{C_j}(B'_j) .
    \end{equation}
    
    Similarly, for \BBP valuations, let $\pi \in \Pi_n$ be a permutation of the bundles in $B_j$ that maximizes the social welfare of the agents in $C_j$. Since the agents in $C_j$ have monotone valuations,
    \begin{equation*}
    v_{C_j}(B_j) = \sum_{i\in[n]} v_{a_{(i,j)}}\bigl(A_{(\pi(i),j)}\bigr) \leq \sum_{i\in[n]} v_{a_{(i,j)}}\bigl(A'_{(\pi(i), j)}\bigr) \leq v_{C_j}(B_j'),
    \end{equation*}
    where the last inequality holds since $\pi$ is a feasible permutation but not necessarily the optimal one for $B'_j$ w.r.t. center $C_j$. Finally, under \BBR valuations, the value of center $C_j$ coincides with the value obtained under \IBR valuations, and inequality~\eqref{eq:intra:eq1} applies again.

\subsection{Proof of Lemma            \ref{lemma:additivity-ibp}}
Consider a center $C_j$, and let $B_j=(A_{(1,j)},A_{(2,j)},\dots,A_{(n,j)})$ be a bundling allocated to it. Let $G = \bigcup_{i\in[n]} A_{(i,j)}$ denote the total set of items received by $C_j$. Observe that under item-based potential valuations and additive valuations of the agents, the valuation of the center can be defined over the items instead of the bundlings, as $v_{C_j}(g) = \max_{i \in [n]} v_{(i,j)}(g)$. Then, by the definition of \IBP valuations, we have that $v_{C_j}(B_j) = \sum_{g\in G} \max_{i \in [n]} v_{(i,j)}(g)$, and thus the valuation of the center is additive.

\section{Missing Proofs from Section \ref{sec:yankee}}
\label{app:sec:yankee}

\subsection{Proof of Lemma~\ref{lemma4}}

        Let $B=(b_1, b_2, \ldots, b_n)$, $Q = (q_1,q_2,\ldots,q_n)$ and  $B'=(b'_1, b'_2, \ldots, b'_n)$.
    We then get, successively, 
    \begin{align*}
        v_C(B') &= \max_{\pi'\in\Pi_n} \sum_{\ell=1}^n v_{a_\ell}(b'_{\pi'(\ell)})\\
        &=\max_{\pi'\in\Pi_n} \sum_{\ell=1}^n v_{a_\ell}(b_{\pi'(\ell)} \cup q_{\pi'(\pi(\ell))})\\
        &\leq \max_{\pi'\in\Pi_n} \sum_{\ell=1}^n v_{a_\ell}(b_{\pi'(\ell)}) + \max_{\pi'\in\Pi_n} \sum_{\ell=1}^n v_{a_\ell}(q_{\pi'(\pi(\ell))})\\
        &= \max_{\pi'\in\Pi_n} \sum_{\ell=1}^n v_{a_\ell}(b_{\pi'(\ell)}) + \max_{\pi'\in\Pi_n} \sum_{\ell=1}^n v_{a_\ell}(q_{\pi'(\ell)})\\
        &= v_C(B) + v_C(Q),
    \end{align*}
    and the lemma follows.

\newpage

\end{document}